\documentclass[12pt,pdftex,a4paper]{amsart}

\selectfont

\usepackage{qtree}
\usepackage{forest}
\usepackage{mathtools} 

\usepackage{caption}

\usepackage{etoolbox}
\usepackage{epic,eepic,ecltree}
\usepackage{graphicx}
\usepackage{tikz}
\usepackage{tikz-cd}
\usetikzlibrary{decorations.text}
\usepackage{amssymb,color, euscript, enumerate}
\usepackage{amsthm}
\usepackage{amsmath}
\usepackage{braket}
\usepackage{amscd}
\usepackage{txfonts}
\usepackage{comment}
\usepackage{bm}
\usepackage{amscd}
\numberwithin{equation}{section}

\makeatletter
\let\old@tocline\@tocline
\let\section@tocline\@tocline
\newcommand{\subsection@dotsep}{4.5}
\newcommand{\subsubsection@dotsep}{4.5}
\patchcmd{\@tocline}
  {\hfil}
  {\nobreak
     \leaders\hbox{$\m@th
        \mkern \subsection@dotsep mu\hbox{.}\mkern \subsection@dotsep mu$}\hfill
     \nobreak}{}{}
\let\subsection@tocline\@tocline
\let\@tocline\old@tocline

\patchcmd{\@tocline}
  {\hfil}
  {\nobreak
     \leaders\hbox{$\m@th
        \mkern \subsubsection@dotsep mu\hbox{.}\mkern \subsubsection@dotsep mu$}\hfill
     \nobreak}{}{}
\let\subsubsection@tocline\@tocline
\let\@tocline\old@tocline

\let\old@l@subsection\l@subsection
\let\old@l@subsubsection\l@subsubsection

\def\@tocwriteb#1#2#3{%
  \begingroup
    \@xp\def\csname #2@tocline\endcsname##1##2##3##4##5##6{%
      \ifnum##1>\c@tocdepth
      \else \sbox\z@{##5\let\indentlabel\@tochangmeasure##6}\fi}%
    \csname l@#2\endcsname{#1{\csname#2name\endcsname}{\@secnumber}{}}%
  \endgroup
  \addcontentsline{toc}{#2}%
    {\protect#1{\csname#2name\endcsname}{\@secnumber}{#3}}}%

\newlength{\@tocsectionindent}
\newlength{\@tocsubsectionindent}
\newlength{\@tocsubsubsectionindent}
\newlength{\@tocsectionnumwidth}
\newlength{\@tocsubsectionnumwidth}
\newlength{\@tocsubsubsectionnumwidth}
\newcommand{\settocsectionnumwidth}[1]{\setlength{\@tocsectionnumwidth}{#1}}
\newcommand{\settocsubsectionnumwidth}[1]{\setlength{\@tocsubsectionnumwidth}{#1}}
\newcommand{\settocsubsubsectionnumwidth}[1]{\setlength{\@tocsubsubsectionnumwidth}{#1}}
\newcommand{\settocsectionindent}[1]{\setlength{\@tocsectionindent}{#1}}
\newcommand{\settocsubsectionindent}[1]{\setlength{\@tocsubsectionindent}{#1}}
\newcommand{\settocsubsubsectionindent}[1]{\setlength{\@tocsubsubsectionindent}{#1}}

\renewcommand{\l@section}{\section@tocline{1}{\@tocsectionvskip}{\@tocsectionindent}{}{\@tocsectionformat}}%
\renewcommand{\l@subsection}{\subsection@tocline{2}{\@tocsubsectionvskip}{\@tocsubsectionindent}{}{\@tocsubsectionformat}}%
\renewcommand{\l@subsubsection}{\subsubsection@tocline{3}{\@tocsubsubsectionvskip}{\@tocsubsubsectionindent}{}{\@tocsubsubsectionformat}}%
\newcommand{\@tocsectionformat}{}
\newcommand{\@tocsubsectionformat}{}
\newcommand{\@tocsubsubsectionformat}{}
\expandafter\def\csname toc@1format\endcsname{\@tocsectionformat}
\expandafter\def\csname toc@2format\endcsname{\@tocsubsectionformat}
\expandafter\def\csname toc@3format\endcsname{\@tocsubsubsectionformat}
\newcommand{\settocsectionformat}[1]{\renewcommand{\@tocsectionformat}{#1}}
\newcommand{\settocsubsectionformat}[1]{\renewcommand{\@tocsubsectionformat}{#1}}
\newcommand{\settocsubsubsectionformat}[1]{\renewcommand{\@tocsubsubsectionformat}{#1}}
\newlength{\@tocsectionvskip}
\newcommand{\settocsectionvskip}[1]{\setlength{\@tocsectionvskip}{#1}}
\newlength{\@tocsubsectionvskip}
\newcommand{\settocsubsectionvskip}[1]{\setlength{\@tocsubsectionvskip}{#1}}
\newlength{\@tocsubsubsectionvskip}
\newcommand{\settocsubsubsectionvskip}[1]{\setlength{\@tocsubsubsectionvskip}{#1}}

\patchcmd{\tocsection}{\indentlabel}{\makebox[\@tocsectionnumwidth][l]}{}{}
\patchcmd{\tocsubsection}{\indentlabel}{\makebox[\@tocsubsectionnumwidth][l]}{}{}
\patchcmd{\tocsubsubsection}{\indentlabel}{\makebox[\@tocsubsubsectionnumwidth][l]}{}{}

\newcommand{\@sectypepnumformat}{}
\renewcommand{\contentsline}[1]{%
  \expandafter\let\expandafter\@sectypepnumformat\csname @toc#1pnumformat\endcsname%
  \csname l@#1\endcsname}
\newcommand{\@tocsectionpnumformat}{}
\newcommand{\@tocsubsectionpnumformat}{}
\newcommand{\@tocsubsubsectionpnumformat}{}
\newcommand{\setsectionpnumformat}[1]{\renewcommand{\@tocsectionpnumformat}{#1}}
\newcommand{\setsubsectionpnumformat}[1]{\renewcommand{\@tocsubsectionpnumformat}{#1}}
\newcommand{\setsubsubsectionpnumformat}[1]{\renewcommand{\@tocsubsubsectionpnumformat}{#1}}
\renewcommand{\@tocpagenum}[1]{%
  \hfill {\mdseries\@sectypepnumformat #1}}

\let\oldappendix\appendix
\renewcommand{\appendix}{%
  \leavevmode\oldappendix%
  \addtocontents{toc}{%
    \protect\settowidth{\protect\@tocsectionnumwidth}{\protect\@tocsectionformat\sectionname\space}%
    \protect\addtolength{\protect\@tocsectionnumwidth}{2em}}%
}
\makeatother

\makeatletter
\settocsectionnumwidth{2em}
\settocsubsectionnumwidth{2.5em}
\settocsubsubsectionnumwidth{3em}
\settocsectionindent{1pc}%
\settocsubsectionindent{\dimexpr\@tocsectionindent+\@tocsectionnumwidth}%
\settocsubsubsectionindent{\dimexpr\@tocsubsectionindent+\@tocsubsectionnumwidth}%
\makeatother

\settocsectionvskip{10pt}
\settocsubsectionvskip{0pt}
\settocsubsubsectionvskip{0pt}
    
\settocsectionformat{\bfseries}
\settocsubsectionformat{\mdseries}
\settocsubsubsectionformat{\mdseries}
\setsectionpnumformat{\bfseries}
\setsubsectionpnumformat{\mdseries}
\setsubsubsectionpnumformat{\mdseries}

\let\oldtableofcontents\tableofcontents
\renewcommand{\tableofcontents}{%
  \vspace*{-\linespacing}
  \oldtableofcontents}

\newcommand{\cS}{\mathcal{S}}

\newcommand{\Harm}{\mathrm{Harm}}

\newcommand{\T}{\mathfrak{T}}
\newcommand{\tT}{\tilde{T}}
\newcommand{\Cc}{C_c^\infty}

\newcommand{\Emb}{\mathrm{Mfld}_{d,\mathrm{emb}}^{\mathrm{CO}}}
\newcommand{\Embc}{\mathrm{Mfld}_{d}^{\mathrm{CO}}}

\newcommand{\EmbR}{\mathrm{Mfld}_{d}^{\mathrm{Riem}}}
\newcommand{\cl}{\mathrm{cl}}
\newcommand{\Disk}{\mathbb{CE}_d^{\mathrm{emb}}}
\newcommand{\Diskt}{\mathbb{CE}_2^{\mathrm{emb}}}
\newcommand{\SO}{\mathrm{SO}}

\newcommand{\HY}{\mathrm{F}_{\mathrm{CL}}}

\newcommand{\tDisk}{{\mathrm{Disk}}_{d,\mathrm{emb}}^{\mathrm{CO}}}

\newcommand{\vol}{{\mathrm{vol}}}

\newcommand{\Hom}{{\mathrm{Hom}}}

\newcommand{\cC}{{\mathcal C}}

\newcommand{\N}{\mathbb{N}}

\newcommand{\R}{\mathbb{R}}
\newcommand{\C}{\mathbb{C}}

\newcommand{\pr}{{\mathrm{pr}}}

\newcommand{\Om}{\Omega}

\newcommand{\Conf}{\mathrm{Conf}}

\newcommand{\std}{{\text{std}}}

\newcommand{\id}{{\mathrm{id}}}

\newcommand{\tW}{{\tilde{W}}}
\newcommand{\z}{{\bar{z}}}

\newcommand{\w}{{\bar{w}}}

\newcommand{\m}{{\rho}}

\newcommand{\pa}{{\partial}}

\newcommand{\Vect}{{\underline{\text{Vect}}_\R}}

\newcommand{\cN}{{\mathcal{N}}}

\newcommand{\al}{\alpha}
\newcommand{\ep}{\epsilon}
\newcommand{\be}{\beta}

\newcommand{\D}{\mathbb{D}}
\newcommand{\om}{\omega}
\newcommand{\la}{\lambda}

\newcommand{\si}{\sigma}

\newcommand{\ft}{\frac{1}{2}}

\newcommand{\Sym}{\mathrm{Sym}}

\def\todo{{\textbf{TODO}}}

\newcommand{\Hf}{{\mathrm{H}}}

\newcommand{\CE}{\mathbb{CE}_d^{\mathrm{emb}}}
\newcommand{\CEt}{\mathbb{CE}_2^{\mathrm{emb}}}

\newcommand{\bD}{\mathbb{D}}
\newcommand{\oD}{{\overline{\mathbb{D}}}}

\newcommand{\supp}{\mathrm{supp}}

\newcommand{\norm}[1]{\left\lVert #1 \right\rVert}

\newcommand{\CFT}{{\mathrm{CFT}}}

\newtheorem{thm}{Theorem}[section]
\newtheorem{dfn}[thm]{Definition}
\newtheorem{lem}[thm]{Lemma}
\newtheorem{prop}[thm]{Proposition}
\newtheorem{cor}[thm]{Corollary}
\newtheorem{rem}[thm]{Remark}

\theoremstyle{remark}

\begin{document}

\begin{center}
{{\LARGE \bf Prefactorization algebras for the conformal Laplacian: Central charge and Hilbert Fock space}
} \par \bigskip

\renewcommand*{\thefootnote}{\fnsymbol{footnote}}
{\normalsize
Yuto Moriwaki \footnote{email: \texttt{moriwaki.yuto (at) gmail.com}}
}
\par \bigskip
{\footnotesize Interdisciplinary Theoretical and Mathematical Science Program (iTHEMS)\\
Wako, Saitama 351-0198, Japan}

\par \bigskip
\end{center}

\noindent

\begin{center}
\textbf{\large Abstract}
\end{center}

Let $d \geq 2$. We consider the symmetric monoidal category of oriented Riemannian $d$-manifolds with conformal open embeddings. The prefactorization algebra associated with the conformal Laplacian defines a symmetric monoidal functor from this category to real vector spaces.
For Euclidean domains $U\subset\mathbb{R}^d$, the value of this functor is identified, via the Green function, with the symmetric algebra on the topological dual of the space of harmonic functions. For $d \geq 3$ this identification is natural under all conformal transformations, while in dimension two, its failure of naturality is governed by a harmonic cocycle, which plays the role of a central charge.
For the unit disk, the resulting vector space carries an algebra structure over the operad of conformal disk embeddings and admits a canonical dense embedding into the Hilbert Fock space. In dimension two, this statement holds after restricting to a codimension-one subspace, as suggested by logarithmic CFT.

\vspace{3mm}

\tableofcontents

\begin{center}
\textbf{\large Introduction}
\end{center}

The factorization algebras developed by Costello-Gwilliam provide a formulation of quantum field theory \cite{CG,CG2}.
For the quantum field theory known as the free scalar field, the associated factorization algebra assigns to a Riemannian manifold $(M,g)$ a chain complex, called the BV complex, whose homology yields a symmetric monoidal functor from the symmetric monoidal category $\EmbR$ of $d$-dimensional Riemannian manifolds and isometric open embeddings to the category of (differentiable) vector spaces \cite[Chapter~6.3]{CG}.
In this paper, we study its analogue in conformal Riemannian geometry by considering the prefactorization algebra associated with the {\it conformal Laplacian}.

Fix $d \geq 2$. Let $\Emb$ denote the symmetric monoidal category whose objects are $d$-dimensional oriented Riemannian manifolds without boundary and whose morphisms are orientation-preserving conformal open embeddings.
On a Riemannian manifold $(M,g)$, the conformal Laplacian (also known as the Yamabe operator) \cite{Yamabe}
\begin{align*}
L_g = \Delta_g + \frac{d-2}{4(d-1)}S_g
\end{align*}
transforms under a change of metric $\hat{g} = e^{2\om} g$ ($\om:M \rightarrow \R$ a smooth function) as
\begin{align*}
L_{\hat{g}}(f) = e^{-\frac{d+2}{2}\om} L_g (e^{\frac{d-2}{2}\om} f)\quad\quad\quad(f \in C^\infty(M)).
\end{align*}
Using this conformal covariance, we show that the prefactorization algebra associated with the conformal Laplacian defines a symmetric monoidal functor (Theorem~\ref{prop_functor})
\[
\HY: \Emb \rightarrow \Vect.
\]

Following Costello--Gwilliam, the vector space $\HY(M,g)$ can be computed, whenever a Green's function $G(x,y)$ of the conformal Laplacian $L_g$ exists, via a linear isomorphism
\begin{align}
\Psi_G:\HY(M,g) \overset{\cong}{\rightarrow} \mathrm{Sym}(\mathrm{cok}(L_g:\Cc(M) \rightarrow \Cc(M))).
\label{eq_intro_psi}
\end{align}
Here $\Cc(M)$ denotes the space of real valued smooth functions with compact support.
On non-compact manifolds a Green's function (i.e.\ a fundamental solution of $L_g$) typically depends on \emph{boundary conditions}.
In particular, for $\Psi_G$ to define a natural transformation with respect to conformal open embeddings, one needs a fundamental solution satisfying conformally invariant boundary conditions.
In two dimensions, no such solution exists, and the failure of naturality of $\Psi_G$ gives rise to a \emph{central charge}.

%
%
The main results of this paper are as follows:
\begin{enumerate}
\item
For an open subset $U\subset \R^d$ of the flat Riemannian manifold $(\R^d,g_{\std})$, we show that $\HY(U)$ is isomorphic to $\Sym H'(U)$ as vector spaces, where $H'(U)$ is the topological dual of the space $H(U)$ of harmonic functions on $U$.
\item
If $d \geq 3$, then for the unit disk
$\bD_d = \{x \in \R^d \mid |x|<1\}$,
we show that $\HY(\bD_d)$ embeds densely into the Hilbert Fock space $\hat{\mathrm{Sym}}H_\CFT$, and we describe its image explicitly.
Here $H_\CFT$ is a Hilbert space carrying a unitary representation of $\SO^+(d,1)$.
In dimension $d=2$, the analogous embedding holds after restricting to a codimension-one subspace, reflecting the logarithmic features of the massless free scalar theory.
%
\item
We explicitly describe the algebra structure on $\HY(\bD_d)$ given by the restriction of $\HY$ to the full monoidal subcategory
$\tDisk \subset \Emb$ whose objects are disjoint unions of disks $\sqcup_n \bD_d$ ($n\geq 0$).
In particular, for $d=2$, we show that the algebra structure admits a quantum correction identified with an explicit cocycle of harmonic functions $H_\phi$ (the {central charge}).
%
\end{enumerate}

We emphasize that, as already explained by Costello--Gwilliam \cite{CG}, the isomorphism \eqref{eq_intro_psi} is not merely a computational tool for homology, but rather encodes a choice of boundary conditions in quantum field theory.
Accordingly, physical quantities such as \emph{correlation functions} are defined only after choosing \eqref{eq_intro_psi}.
In other mathematical formulations of quantum field theory, such as the G\r{a}rding--Wightman axioms and the Osterwalder--Schrader axioms, the theory is defined in terms of correlation functions \cite{GW,OS1,OS2}.
Thus, in comparing factorization algebras with these axiomatic frameworks, one must incorporate an appropriate choice of $\Psi_G$.

In this paper, for a (non-compact) domain $U \subset \R^d$ where the Green's function is not unique, we study how the choice of such boundary conditions changes under conformal open embeddings. In particular, for $d=2$, we rederive the central charge from the viewpoint of factorization algebras.
Another key observation is that the value on the flat unit disk, $\HY(\bD_d) \in \Vect$, is closely related to the \emph{Hilbert Fock space} $\hat{\mathrm{Sym}}H_\CFT$ appearing in axiomatic QFT. In particular, this observation raises the question of whether the algebra structure in (3), defined on a dense subspace, extends to a bounded operator on the Hilbert space $\hat{\mathrm{Sym}}H_\CFT$.
It turns out that the algebra structure on $\HY(\bD_d)$ is not necessarily given by bounded operators \cite{MLeft,MBergman}.

A factorization algebra assigns a multiplication map to every isometric open embedding.
In \cite{MLeft,MBergman}, conformal open embeddings that give rise to bounded operators are characterized in geometric or analytic terms, thereby providing a refinement of factorization algebras compatible with the Hilbert space structure.
%
Our original motivation for focusing on the unit disk $\bD_d$ comes from the study of the relationship between vertex operator algebras and axiomatic or algebraic quantum field theory \cite{CKLW,AGT,AMT}; see \cite{MLeft,MBergman} for further background.
For related viewpoints on the relation between vertex operator algebras and factorization algebras, see also \cite{CG,Bru,Vic,Nis}.

Hereafter, we will describe the main results and the organization of the paper.

\vspace{3mm}

\noindent
\begin{center}
{0.1. \bf 
Category of conformal Riemannian manifolds
}
\end{center}

Let $\Emb$ be the symmetric monoidal category of $d$-dimensional Riemannian manifolds and orientation-preserving conformal open embeddings.
We denote by $\tDisk$ the full monoidal subcategory of $\Emb$ generated by the unit disk $\bD_d$.
The objects of $\tDisk$ are disjoint unions of disks $\sqcup_n \bD_d$ ($n \geq 0$).
This category is essentially described by the operad (Fig. \ref{fig_intro_operad})
\begin{align*}
\CE(n) = \Hom_{\Emb}(\sqcup_n \bD_d, \bD_d),\quad \quad\quad n \geq 0
\end{align*}
consisting of conformal open embeddings from disjoint unions of disks into a disk.

\begin{figure}[h]
\begin{minipage}[c]{.44\textwidth}
\centering
    \includegraphics[scale=0.3]{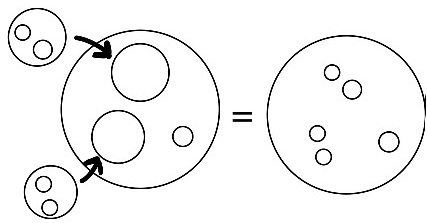}
    \caption{$\CE$-operad}\label{fig_intro_operad}
\end{minipage}
\end{figure}

This operad is the analogue, in conformal Riemannian geometry, of the $d$-disk operad in algebraic topology.
By restricting the symmetric monoidal functor
\begin{align*}
\HY: \Emb \rightarrow \Vect,
\end{align*}
defined by the prefactorization algebra associated with the conformal Laplacian, to $\tDisk \subset \Emb$, we obtain a $\CE$-algebra structure on the vector space $\HY(\bD_d)$.
As observed by Kawahira--Shigemura \cite{KS}, in the non-compact case $\HY(M,g)$ may in general be infinite-dimensional.
In this paper, 
we will give an explicit description of the $\CE$-algebra structure on $\HY(\bD_d)$.

%
%
%

\vspace{3mm}

\noindent
\begin{center}
{0.2. \bf 
Boundary conditions on flat Euclidean spaces and central charge
}
\end{center}

On $\R^d$ and its open subsets, one may take as a fundamental solution of the (conformal) Laplacian 
$L_g = - \sum_{i=1}^d \pa_i^2$ the well-known Green's function
\begin{align}
\begin{split}
G_d(x,y) = \begin{cases}
-(2\pi)^{-1}\log|x-y|, & d=2,\\
\frac{1}{(d-2)\om_d \norm{x-y}^{d-2}}, & d \geq 3,
\end{cases}
\end{split}
\label{intro_def_Green}
\end{align}
where $\om_d$ denotes the volume of $S^{d-1}$.
For $d \geq 3$, $G_d(x,y)$ is characterized as the unique fundamental solution vanishing at infinity (see Remark \ref{rem_Green_characterize}).
By Liouville's theorem, this boundary condition is invariant under arbitrary (local) conformal transformations.
In particular, the isomorphism \eqref{eq_intro_psi} becomes natural with respect to (local) conformal transformations (Proposition \ref{prop_conformal_commute}).

In contrast, for $d=2$, the Green's function $G_2(x,y)=-(2\pi)^{-1}\log|x-y|$ admits no such characterization, and the isomorphism \eqref{eq_intro_psi} fails to be natural under (local) conformal transformations.\footnote{
The inverse Fourier transform of the Laplacian $- \sum_{i=1}^d \pa_i^2$ is formally given by 
$F(\Delta)^{-1} = \frac{1}{(k_1^2+\dots+k_d^2)}$.
In polar coordinates near the origin, one has
$\int \frac{1}{(k_1^2+\dots+k_d^2)}d^dk \sim \int \frac{r^{d-1}}{r^{2}} dr d\om$.
Hence for $d \geq 3$ this expression is locally integrable and defines a distribution, whose inverse Fourier transform yields \eqref{eq_def_Green}.
For $d=2$, however, it is not locally integrable and requires regularization, which cannot be chosen conformally invariantly.
}
In two dimensions, orientation-preserving conformal open embeddings coincide with injective holomorphic maps, and thus, the class of local conformal transformations is extremely large.
In this case, we show that the change in boundary conditions induced by an injective holomorphic map $\phi:\bD_2 \rightarrow \bD_2$ is governed by the harmonic function on $\bD_2 \times \bD_2$ defined by
\begin{align}
H_\phi(z,w) = \log |\phi(z)-\phi(w)| - \log|z-w| -\ft \log|\phi'(z)| - \ft \log|\phi'(w)|,
\label{intro_harm}
\end{align}
(see Theorem~\ref{thm_classical_disk2}).
This harmonic function satisfies the cocycle condition
\begin{align*}
H_{\phi \circ \psi}(z,w)
=H_\psi(z,w)+H_\phi(\psi(z),\psi(w)),
\end{align*}
and defines, by pairing, a second-order differential operator on $\Sym \Cc(\bD)$,
\begin{align*}
\pa_\phi: \Cc(\bD) \otimes \Cc(\bD) \rightarrow \R,\quad 
(f,g)\mapsto \int_{\bD\times \bD} H_\phi(z,w)f(z)g(w) d^2zd^2w.
\end{align*}
This second-order differential operator corresponds to the fact that, in the Fock representation of the affine Heisenberg vertex algebra, the action of the Virasoro algebra (via the Sugawara construction) is given by a quadratic differential operator (see \cite[Remark 3.2]{MBergman}).

Moreover, the cocycle \eqref{intro_harm} vanishes for any M\"{o}bius transformation $\phi(z) = \frac{az+b}{cz+d}$.
Accordingly, although the correlation functions of the prefactorization algebra are not invariant under general local conformal transformations, they become invariant under M\"{o}bius transformations when restricted to the normalized subspace
\begin{align*}
\Cc(\bD_2)_0 = \left\{f \in \Cc(\bD_2) \mid \int_{\bD_2} f d^2z =0\right\}.
\end{align*}
Such a restriction corresponds, in the probabilistic construction of the two-dimensional path integral (the Gaussian Free Field), to working with Sobolev spaces equipped with the Dirichlet inner product \cite{Shef}, and reflects the fact that the two-dimensional massless scalar field theory gives rise to a non-unitary logarithmic CFT \cite{FMS} (see Remark \ref{rem_log}).

The central charge is widely understood in terms of the Weyl anomaly, determinant line bundles, the cobordism category, and vertex operator algebras (see \cite{Polyakov,FS,Quillen,BF,Segal,Huang,B1,FLM} and the references therein).
In this paper, we reinterpret it as the change of boundary conditions for the Green's function in terms of factorization algebras.

\vspace{3mm}

\noindent
\begin{center}
{0.3. \bf 
Observable and harmonic distributions
}
\end{center}
Via the linear isomorphism
\begin{align*}
\Psi_G:\HY(U,g_\std) \overset{\cong}{\rightarrow} 
\mathrm{Sym}(\mathrm{cok}(\Delta:\Cc(U) \rightarrow \Cc(U)))
\label{eq_intro_psi2}
\end{align*}
determined by the Green's function \eqref{intro_def_Green}, we are led to study the cokernel of the Laplacian, $\Cc(U) / \Delta \Cc(U)$.
Using harmonic analysis, we give an explicit description of the induced $\CE$-algebra structure on this space.

Let $\mathrm{H}(U)$ denote the space of harmonic functions on $U$, regarded as a topological vector space with the topology of local uniform $C^\infty$ convergence, and let $\mathrm{H}'(U)$ be its topological dual.
Then the natural pairing
\begin{align*}
\Cc(U) \times \mathrm{H}(U) \rightarrow \R,\quad 
(f,h) \mapsto \int_{U} f(x)h(x) d^dx
\end{align*}
factors through $\Delta \Cc(U) \subset \Cc(U)$ and induces an isomorphism (Theorem \ref{thm_compact_isomorphism})
\begin{align*}
\Cc(U)/ \Delta \Cc(U) \overset{\cong}{\rightarrow} \mathrm{H}'(U).
\end{align*}

Harmonic functions on the disk $B_R(0)$ admit expansions in terms of harmonic polynomials.
We extend this description to harmonic distributions $\mathrm{H}'(B_R(0))$ and give a characterization of the natural embedding (Proposition \ref{prop_equiv_dist})
\[
\mathrm{H}'(B_R(0)) \subset \prod_{n=0}^\infty \Harm_{d,n} \subset \R[[x_1,\dots,x_d]],
\]
where $\Harm_{d,n}$ denotes the space of homogeneous harmonic polynomials of degree $n$ in $d$ variables.
This is analogous to the harmonic polynomial expansion of distributions on the sphere $S^{d-1}$ due to Estrada--Kanwal \cite{EK}.
Furthermore, when $R=1$ and $d \geq 3$, we show that $\Sym H'(\bD)$ embeds naturally and densely into the Hilbert Fock space $\hat{\Sym} H_\CFT$ (Theorem \ref{thm_subspace}).
Here $H_\CFT$ is a Hilbert space completion of the space of harmonic polynomials $\Harm_d$, and it carries a unitary representation of $\SO^+(d,1)$.
In the case $d=2$, a codimension-one subspace of $H'(\bD_2)$ densely embeds into the tensor product of holomorphic and anti-holomorphic Bergman spaces (see \cite{MLeft,MBergman}).

The space $\mathrm{H}'(\bD)$ contains the evaluation functional at each point $a \in \bD$,
\begin{align*}
\delta_a : \mathrm{H}(\bD) \rightarrow \R,\quad 
u(x) \mapsto u(a),
\end{align*}
and the set $\{\delta_a\}_{a\in \bD} \subset \Hf_\CFT$ spans a dense subspace.
Theorems~\ref{thm_disk_algebra} and~\ref{thm_disk_algebra2} give an explicit description of the $\CE$-algebra structure on this dense subspace.

Whether the algebra structure defined on this dense subspace extends to $\Hf_\CFT$ is a central theme in \cite{MLeft} ($d \geq 3$) and \cite{MBergman} ($d=2$).
When one works with the BV complex built from compactly supported functions, the resulting space of observables does not exhaust the entire $\Hf_\CFT$.
However, in the refined category $\Embc$ introduced in \cite{MLeft}, one may choose a larger space of test functions, such as functions vanishing at infinity.
It remains an important future problem to study whether, by selecting appropriate function spaces and BV complexes, one can construct homologically the Hilbert spaces of axiomatic QFT.


The organization of this paper is as follows.
In Section~\ref{sec_geom}, we review from \cite{MLeft} the definitions and basic properties of the category $\Emb$ and the operad $\CE$.
In Section~\ref{sec_factorization}, we construct the symmetric monoidal functor $\HY$ using the framework of prefactorization algebras.
Section~\ref{sec_classical_quantum} recalls the Costello--Gwilliam construction expressing $\HY$ in terms of the cokernel of the Laplacian via the Green's function on $\R^d$, and studies how this description transforms under conformal open embeddings.
In Section~\ref{sec_central}, we show that in the case $d=2$ a correction term given by a harmonic cocycle is required.
Sections~\ref{sec_compact} and~\ref{sec_expansion} analyze the cokernel of the Laplacian using harmonic analysis and 
characterize their image in $\prod_{n \geq 0}\Harm_{d,n}$.
In Section~\ref{sec_algebra}, we give an explicit description of the $\CE$-algebra structure on the space of distributions.
In the appendix, we summarize basic facts about distributions that are used in Section \ref{sec_compact} and \ref{sec_expansion}.

\vspace{3mm}

\noindent
\begin{center}
{\bf Acknowledgements}
\end{center}

I would like to express my gratitude to 
Masahito Yamazaki for valuable discussions on
 factorization algebra and Atsushi Katsuda and Shota Hamanaka on  the conformal Laplacian. I am also grateful to Masashi Kawahira, Tomohiro Shigemura for valuable discussions.
This work is supported by Grant-in Aid for Early-Career Scientists (24K16911).

\section{Geometric settings}\label{sec_geom}
In this section, following \cite[Section 1]{MLeft}, we recall the definition and some basic properties of the category of Riemannian manifolds with conformal open embeddings and of the full monoidal subcategory generated by the unit disk.
Throughout this paper, all manifolds are assumed to be oriented and second countable, and their dimension is fixed to be $d \geq 2$.
Unless otherwise specified, all morphisms between manifolds are assumed to be orientation-preserving
and all vector spaces are taken to be real vector spaces.

\subsection{Category of Riemannian manifolds with conformal open embeddings}
\begin{dfn}\label{def_conformal_map}
A local diffeomorphism  $f : (M_1,g_1) \rightarrow (M_2,g_2)$ between Riemannian manifolds is called a {\bf conformal map} if there exists a smooth map $\Om_f : M_1 \rightarrow \R$ with $\Om_f>0$ such that $f^*(g_2)= \Om_f^2 g_1$.
\end{dfn}

Let $\Emb$ be the category whose objects are oriented Riemannian manifolds $(M,g)$ of dimension $d$ without boundary (not necessarily compact) and whose morphisms are smooth open embeddings that are orientation-preserving conformal maps. We denote the morphisms of this category by $\Emb(M,N)$.
The category $\Emb$ is equipped with a symmetric monoidal structure given by disjoint union of manifolds with the empty set as the unit.
Let $\Vect$ be the symmetric monoidal category of $\R$-vector spaces.
The purpose of this paper is to study a symmetric monoidal functor
\[
\HY: \Emb \rightarrow \Vect,
\]
which is constructed by a prefactorization algebra associated to the conformal Laplacian.

Let us consider $\R^d$ equipped with the standard metric $g_\std = dx_1^2+\dots+dx_d^2$.
Set
\begin{align*}
\bD_d = \{(x_1,\dots,x_d) \in\R^d\mid ||x||<1 \},
\end{align*}
the unit disk equipped with the standard Riemannian metric $g_{\std}$.
We often write $\bD_d$ simply as $\bD$ for brevity.
Let $\tDisk$ denote the full subcategory of $\Emb$ whose objects are disjoint unions of disks $\{\sqcup_n \bD\}_{n \geq 0}$.
The category $\tDisk$ is the full subcategory generated as a monoidal category by the flat disk $\bD_d$.
The following operad is introduced in \cite{MLeft} as an analogue of the little disk operad in conformally flat geometry:
\begin{dfn}\label{def_operad}
For $n \geq 0$, set
\begin{align*}
\CE(n)=\Emb(\underbrace{\bD \sqcup \cdots \sqcup \bD}_n,\bD).
\end{align*}
The collection $\{\CE(n)\}_{n \geq 0}$ has the structure of a permutation operad via the operation of composing into the $i$-th component ($i \in \{1,\dots,n\}$)
\begin{align*}
\circ_i:\CE(n) \times \CE(m) \rightarrow \CE(n+m-1)
\end{align*}
The element $(\id_\bD:\bD \rightarrow \bD)\in \CE(1)$ is the unit, and $\CE(0)$ consists of the unique map $\emptyset \rightarrow \bD$, denoted by $*$. Then, the composition
\begin{align*}
\circ_i:\CE(n) \times \CE(0)=\{*\} \rightarrow \CE(n-1),\quad\phi_{[n]} \mapsto \phi_{[n]}\circ_i *
\end{align*}
is the operation that forgets the $i$-th disk of $\phi_{[n]}$ and regards it as an element of $\CE(n-1)$.
The permutation group $S_n$ acts on $\CE(n)$ by permuting the disks.
\end{dfn}
Note that $\CE(1) = \Emb(\bD,\bD)$ consists of all conformal open embeddings of disks, which need not be surjective.
For example, let $a\in \bD$ and $1>r>0$ satisfy $B_r(a) \subset \bD$,
where $B_r(a)$ denotes the open ball of radius $r$ centered at $a$, $B_r(a)=\{x \in \R^d \mid |x-a|<r\}$.
Then
\begin{align*}
B_{a,r}:\bD \rightarrow \bD,\qquad B_{a,r}(x)= rx+a
\end{align*}
is a conformal open embedding, and hence $B_{a,r} \in \CE(1)$.
More generally, for $a_i \in \bD$ and $1>r_i>0$ ($i=1,\dots,n$) satisfying
\begin{align}
B_{r_i}(a_i) \subset \bD 
\quad\text{and}\quad
B_{r_i}(a_i) \cap B_{r_j}(a_j) =\emptyset 
\qquad (i\neq j),
\label{eq_sep_ball}
\end{align}
we have
\begin{align}
(B_{a_1,r_1},\dots, B_{a_n,r_n}) \in \CE(n).
\label{eq_ball}
\end{align}
%
\begin{dfn}\label{def_operad}
Let $\cC$ be a strict symmetric monoidal category. A $\CE$-algebra in $\cC$ is an operad homomorphism from $\CE$ to the endomorphism operad $\mathrm{End}_A=\{\mathrm{End}_A(n)\}_{n \geq 0}$.
More explicitly, a $\CE$-algebra in $\cC$ is an object $A \in \cC$ equipped with a sequence of maps
\begin{align*}
\rho_n:\CE(n) \rightarrow \Hom_\cC(A^{\otimes n},A)
\end{align*} 
such that:
\begin{enumerate}
\item
$\rho_1(\id_{\bD}) = \id_A$.
\item
For any $\phi_{[n]} \in \CE(n)$ and $\phi_{[m]} \in \CE(m)$ with $n \geq 1, m \geq 0$ and
$i \in \{1,\dots,n\}$,
\begin{align*}
\rho_{n+m-1}(\phi_{[n]}\circ_i \phi_{[m]})=
\rho_{n}(\phi_{[n]})\circ_i \rho_m(\phi_{[m]}).
\end{align*}
\item
The map $\rho_n:\CE(n) \rightarrow \Hom(A^{\otimes n},A)$ is $S_n$-equivariant for all $n \geq 1$.
\end{enumerate}
\end{dfn}
By \cite[Proposition 1.11]{MLeft}, we have:
\begin{prop}\label{prop_CF_algebra}
There is a one-to-one correspondence between $\CE$-algebras in $\cC$ and symmetric monoidal functors $F: \tDisk \rightarrow \cC$.
\end{prop}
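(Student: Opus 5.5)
The plan is the standard dictionary between operad algebras and symmetric monoidal functors out of the PROP generated by an operad. The key structural observation is that every morphism in $\tDisk$ splits uniquely into a family of operations in $\CE$. Given $f\colon \sqcup_n \bD \rightarrow \sqcup_m \bD$, each source disk is connected, so its image lies in exactly one target component. This yields a map of finite sets $\sigma_f\colon \{1,\dots,n\} \rightarrow \{1,\dots,m\}$. The restriction of $f$ to $\sigma_f^{-1}(j)$, after ordering that fiber increasingly, is an element $f_j \in \CE(|\sigma_f^{-1}(j)|)$, with empty fibers giving $* \in \CE(0)$. Conversely, a map of finite sets together with such a family of operations determines a unique conformal open embedding, and disjointness of images is automatic within each target disk. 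Thus $\tDisk(\sqcup_n\bD,\sqcup_m\bD)$ is identified with pairs $(\sigma, (f_j)_j)$ modulo the relabelling by permutations of fibers. This is the usual description of the category of operators (PROP) associated with $\CE$.

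Given a $\CE$-algebra $(A,\rho)$, I will define $F(\sqcup_n \bD)=A^{\otimes n}$, using strictness of $\cC$. On a morphism $f$, I will set $F(f)$ to be the composite of the symmetry $A^{\otimes n}\to \bigotimes_j A^{\otimes \sigma_f^{-1}(j)}$, which shuffles tensor factors into the fibers, with $\bigotimes_j \rho(f_j)$. The well-definedness of this with respect to the chosen ordering of fibers is exactly the $S_n$-equivariance axiom (3). Identities are preserved by axiom (1). Monoidality holds on the nose, since disjoint union of morphisms corresponds to the tensor product of the decompositions. Compatibility with the symmetry holds because the braidings of $\tDisk$ are morphisms with $f_j=\id_\bD$ and $\sigma$ a bijection.

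The main obstacle is functoriality, $F(g\circ f)=F(g)\circ F(f)$. For the composite, the set map is $\sigma_g\circ\sigma_f$, and each component $(g\circ f)_k$ is obtained by inserting the $f_j$ with $\sigma_g(j)=k$ into the inputs of $g_k$, followed by a permutation of inputs. I will express this iterated insertion as a sequence of partial compositions $\circ_i$, including insertions of $*$, which account for forgetting disks. Axiom (2) then converts each insertion into the corresponding composition of $\rho$'s, and axiom (3) absorbs the reshuffling permutations. The bookkeeping of permutations is routine but must be checked; the cleanest route is to first reduce to the case where $\sigma_f,\sigma_g$ are order-preserving, using the equivariance just established.

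Conversely, a symmetric monoidal functor $F$ gives $A=F(\bD)$ and $\rho_n(\phi)=F(\phi)$, using $F(\sqcup_n\bD)\cong A^{\otimes n}$. Axioms (1)–(3) follow from functoriality, monoidality, and compatibility with symmetries, because $\phi_{[n]}\circ_i\phi_{[m]}=\phi_{[n]}\circ(\id\sqcup\cdots\sqcup\phi_{[m]}\sqcup\cdots\sqcup\id)$ in $\tDisk$. The two constructions are mutually inverse: one direction is immediate, and the other holds because every morphism factors as a symmetry followed by $\sqcup_j f_j$, so a symmetric monoidal functor is determined by its values on $\CE$.
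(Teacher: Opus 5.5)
The paper does not reprove this statement and simply cites \cite[Proposition 1.11]{MLeft}. Your argument is the standard proof of such a correspondence and is correct: you decompose each morphism of $\tDisk$ by connected components, so that $\tDisk$ is the symmetric monoidal envelope (PROP) of $\CE$, and use axioms (1)--(3) for identities, functoriality and symmetry. The one thing to state precisely is that the bijection is literal only for strict symmetric monoidal functors, with $\tDisk$ taken strict, $F(\sqcup_n\bD)=A^{\otimes n}$ and identity structure maps. For strong ones, $F(\sqcup_n\bD)\cong A^{\otimes n}$ is merely an isomorphism, so your two constructions are inverse only up to monoidal natural isomorphism.
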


Hereafter in this section, we recall the structure of the operad $\CE$ (see \cite[Section 1.2 and 1.3]{MLeft} for more details).
Let $S^d=\R^d \cup \{\infty\}$ be the $d$-dimensional sphere equipped with the standard Riemannian metric.
The group of orientation-preserving conformal diffeomorphisms of $S^d$, $\Conf^+(S^d)$, is isomorphic to $\SO^+(d+1,1)$. The action of $\Conf^+(S^d)$ on $\R^d\subset S^d$ via stereographic projection is generated by the following elements:
\begin{itemize}
\item
a translation $T_a:x \mapsto x+a$, $a \in \R^d$,
\item
an orthogonal transformation $x \mapsto \Lambda x$, where $\Lambda$ is in the special orthogonal group $\mathrm{SO}(d)$,
\item
a dilation $R^{D}: x \mapsto R x$, $R \in \R_{>0}$, and
\item
a special conformal transformation
\begin{align*}
K_b:x \mapsto \frac{x-(x,x)b}{1-2(x,b)+(x,x)(b,b)},\quad\quad b\in \R^d.
\end{align*}
\end{itemize}
Note that when $d=2$, if we set $w=b_1+i b_2$ and $z=x_1+ix_2$, then $K_b(z) = \frac{z}{1-\bar{w} z}$, and $\Conf^+(S^2) \cong
\mathrm{SO}^+(3,1) \cong  \mathrm{PSL}_2\C$ with the action on $\C P^1$ given by M\"{o}bius transformations.
The following theorem is a classical result due to Liouville \cite{Liouville}:
\begin{thm}\label{thm_Liouville}
Let $U,V \subset \R^d$ be connected open subsets.
\begin{itemize}
\item
Assume $d \geq 3$. A smooth map $\phi:U \rightarrow V$ is an orientation-preserving conformal map if and only if $\phi$ is a restriction of an element in $\Conf^+(S^d)$.
\item
Assume $d=2$. A smooth map $\phi:U \rightarrow V$ is an orientation-preserving conformal map if and only if $\phi$ is a holomorphic map satisfying $\phi'(z)\neq 0$ for any $z\in U$.
\end{itemize}
\end{thm}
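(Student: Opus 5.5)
The statement is Liouville's theorem, and I would handle the two dimensions separately, beginning with $d=2$ because it is elementary.

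\emph{Case $d=2$.} Write $\phi = u+iv$. The condition $\phi^*g_\std = \Om_\phi^2 g_\std$ says that the Jacobian $D\phi(z)$ is a positive multiple of an orthogonal matrix at each point. Orientation preservation means $\det D\phi>0$, so $D\phi(z)$ is a positive multiple of a rotation, i.e.\ of the form $\begin{pmatrix} a & -b\\ b & a\end{pmatrix}$ with $a^2+b^2>0$. This is exactly the Cauchy--Riemann system together with $\phi'(z)=a+ib\neq 0$. Every step reverses: if $\phi$ is holomorphic with $\phi'\neq0$, then $D\phi$ is multiplication by $\phi'(z)$, so $\phi^*g_\std=|\phi'|^2 g_\std$ and $\det D\phi=|\phi'|^2>0$. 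A local diffeomorphism is automatic by the inverse function theorem.

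\emph{Case $d\ge 3$, easy direction.} Each listed generator (translations, rotations, dilations, $K_b$) is conformal and orientation-preserving where defined. For $K_b$ this follows from writing it as a composition of the inversion $x\mapsto x/|x|^2$, a translation and the inversion again. Two inversions are orientation-reversing, so their composite preserves orientation. Restricting such a map to $U$, where it is smooth (the pole is not in $U$ since $\phi(U)\subset V\subset\R^d$), gives a conformal map.

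\emph{Case $d\ge 3$, hard direction.} Write $\phi^*g_\std=\Om^2 g_\std$ with $\Om>0$, and set $\rho=\Om^{-1}$ (or work with $\log\Om$). Differentiating the relations $\langle\pa_i\phi,\pa_j\phi\rangle=\Om^2\delta_{ij}$ and symmetrizing expresses the second derivatives $\pa_i\pa_j\phi$ in terms of $\pa\Om$ and $\pa\phi$. Requiring the third derivatives to commute, as in the classical argument, then gives
\[
\pa_i\pa_j\rho=\lambda\,\delta_{ij}
\]
for some function $\lambda$. Equivalently, the trace-free Hessian of $\rho$ vanishes. I expect this to be the main obstacle: it is the step where $d\ge3$ is used, since in $d=2$ the relevant identity is only a trace condition and gives no constraint.

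From $\pa_i\pa_j\rho=\lambda\delta_{ij}$ with $d\ge3$, differentiating once more shows $\pa_k\lambda=0$. Since $U$ is connected, $\lambda$ is constant, and so $\rho(x)=\alpha|x-x_0|^2+\beta$ or $\rho$ is affine. I would normalize in three substeps.

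\begin{itemize}
\item If $\lambda\ne0$, compose with an inversion centred so that the new conformal factor has vanishing Hessian.
\item A harmonic-type argument (the remaining identity is $|\nabla\rho|^2$-linked) then shows $\Om$ is constant.
\item A conformal map with constant factor is a similarity $x\mapsto R\Lambda x+a$. Orientation forces $\Lambda\in\SO(d)$.
\end{itemize}

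Undoing the normalization writes $\phi$ as a composition of the generators, hence as the restriction of an element of $\Conf^+(S^d)$.

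Regularity is not an issue, because $\phi$ is assumed smooth. If the computation proves messy, I would fall back on citing the standard proof of Liouville's theorem.
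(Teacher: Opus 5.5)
The paper gives no proof of this statement. It quotes it as Liouville's classical theorem \cite{Liouville}, so your proposal is really a reconstruction of the standard proof. Your $d=2$ argument and the easy direction for $d\ge 3$ are fine. In the hard direction the central identity $\pa_i\pa_j\rho=\la\,\delta_{ij}$ is only asserted, but there is a further genuine gap. This second-order identity alone does not pin down $\rho$, and your normalization step tacitly relies on a first-order identity you never establish.

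As written, your first bullet fails. Let $I_{x_0}(y)=x_0+(y-x_0)/|y-x_0|^2$ be the inversion about $x_0$. Then $\phi\circ I_{x_0}$ has $\rho$ equal to $\al+\be|y-x_0|^2$, whose Hessian is $2\be\,\mathrm{Id}$. Inverting about an arbitrary centre $y_0$ instead gives Hessian $2(\al|y_0-x_0|^2+\be)\,\mathrm{Id}$. So if $\al\be>0$, which nothing in your argument excludes, no inversion makes the Hessian vanish. If $\al\be<0$, the best you get is a non-constant affine $\rho$, which still has to be ruled out.

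The missing ingredient is the trace part of the flatness of $\phi^*g_\std=\Omega^2 g_\std$, namely $|\nabla\rho|^2=2\la\rho$. In this paper's language it follows directly from Proposition \ref{prop_Yamabe}. Both $g_\std$ and $\Omega^2 g_\std$ are flat, so $L_{\Omega^2 g_\std}1=0$, which forces $\Omega^{(d-2)/2}=\rho^{-(d-2)/2}$ to be harmonic. Expanding this with $\sum_i\pa_i^2\rho=d\la$ gives exactly $|\nabla\rho|^2=2\la\rho$. This is presumably the ``harmonic-type argument'' you allude to. Alternatively, it is the trace of the conformal-change formula for the Ricci tensor, whose trace-free part is your Hessian identity.

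This identity has to be used \emph{before} normalizing. For $\la\neq 0$ it forces $\be=0$, so $\Omega=\al^{-1}|x-x_0|^{-2}$ with $x_0\notin U$, and then $\phi\circ I_{x_0}$ has constant factor $\al^{-1}$. For $\la=0$ it gives $\nabla\rho=0$ at once.

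There is one more slip. $I_{x_0}$ reverses orientation, so the similarity $\phi\circ I_{x_0}$ has $\Lambda\in\mathrm{O}(d)\setminus\SO(d)$, not $\Lambda\in\SO(d)$. This is harmless, since $\phi=(\phi\circ I_{x_0})\circ I_{x_0}$ is then an orientation-preserving element of $\Conf^+(S^d)$. With these repairs, and with the Hessian identity actually derived, your outline becomes the standard proof the paper cites.
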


It follows immediately that
\begin{align*}
\Disk(1) =
\begin{cases}
\{\phi \in \Conf^+(S^d) \mid \phi(\bD_d) \subset \bD_d \} & \text{ for }d \geq 3\\
\{\phi:\bD_2 \rightarrow \bD_2 \mid \text{$\phi$ is an injective holomorphic map}\} & \text{ for }d =2.
\end{cases} 
\end{align*}
Hence, we have \cite[Proposition 1.20 and Proposition 1.21]{MLeft}:
\begin{prop}
For each $n>0$, $\Disk(n)$ is in bijection with
\begin{align*}
\{(g_1,\dots,g_n) \in \Disk(1)^n \mid {g_i(\bD)} \cap  {g_j(\bD)} =\emptyset \text{ for any }i\neq j\},
\end{align*}
the set of disjoint conformal open embeddings.
Moreover, the operadic composition $\circ_i:\Disk(n) \times \Disk(m) \rightarrow \Disk(n+m-1)$ is given by
\begin{align*}
(g_1,\dots,g_n) \circ_i (f_1,\dots,f_m) = (g_1,\dots,g_{i-1}, g_i \circ f_1, \dots, g_i\circ f_m, g_{i+1},\dots,g_n).
\end{align*}
Furthermore, if $d \geq 3$, then as monoids,
\begin{align*}
\Disk(1) \cong \{\phi \in \Conf^+(S^d) \mid \phi(\bD) \subset \bD \}.
\end{align*}
\end{prop}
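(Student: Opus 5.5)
The plan is to treat the three assertions separately: the bijection for $n>0$, the formula for operadic composition, and the monoid isomorphism for $d\ge 3$.

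\emph{The bijection.} A morphism $\phi\in\Disk(n)=\Emb(\sqcup_n\bD,\bD)$ is an open embedding of a disjoint union. I would restrict it to the $i$-th component to get $g_i:=\phi|_{\bD^{(i)}}$. Each $g_i$ is again an orientation-preserving conformal open embedding $\bD\to\bD$, because the conformal factor and the orientation are local conditions. Injectivity of $\phi$ on the disjoint union then gives $g_i(\bD)\cap g_j(\bD)=\emptyset$ for $i\neq j$.

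Conversely, given a tuple $(g_1,\dots,g_n)$ of conformal open embeddings with pairwise disjoint images, I would define $\phi=\sqcup_i g_i$. This map is smooth and a local diffeomorphism. It is conformal with conformal factor $\Om_{g_i}$ on the $i$-th component, and it is orientation-preserving. It is injective, since each $g_i$ is injective and the images are disjoint. An injective local diffeomorphism is an open embedding, so $\phi\in\Disk(n)$. The two constructions are clearly mutually inverse.

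\emph{The composition formula.} By Definition~\ref{def_operad}, the composition $\phi_{[n]}\circ_i\phi_{[m]}$ is the composite
\[
\textstyle\sqcup_{n+m-1}\bD\;\xrightarrow{\;\id\sqcup\phi_{[m]}\sqcup\id\;}\;\sqcup_n\bD\;\xrightarrow{\;\phi_{[n]}\;}\;\bD .
\]
Restricting this composite to each component gives exactly the displayed tuple: components away from slot $i$ map by the $g_j$, and the components inserted at slot $i$ map by $g_i\circ f_k$. For $m=0$ the formula reads as dropping $g_i$, which matches the description of $\circ_i *$ given earlier.

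\emph{The case $d\ge 3$.} Theorem~\ref{thm_Liouville} applied with $U=V=\bD$, which is connected, shows that each $g\in\Disk(1)$ is the restriction of a unique $\tilde g\in\Conf^+(S^d)$. Uniqueness holds because two elements of $\Conf^+(S^d)$ that agree on a nonempty open set coincide: they are real-analytic, or alternatively one can use the explicit finite-dimensional group structure. This gives an injective map $\Disk(1)\to\{\tilde g\in\Conf^+(S^d)\mid \tilde g(\bD)\subset\bD\}$.

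For surjectivity, take any such $\tilde g$. Its restriction $\tilde g|_{\bD}$ is an injective orientation-preserving conformal map into $\bD$, hence an open embedding, so it lies in $\Disk(1)$.

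Finally, composition is preserved: the restriction of $\tilde g\circ\tilde f$ to $\bD$ is $g\circ f$, and by uniqueness of the extension this identifies the two monoid structures. The identity corresponds to the identity.

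The only step needing any care is the uniqueness of the extension in the $d\ge3$ case. Everything else is a routine unwinding of the definitions.
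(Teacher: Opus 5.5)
Your proposal is correct and follows essentially the same route as the paper. The paper gives no separate argument: it states the result (citing \cite{MLeft}) as an immediate consequence of the definitions of $\Emb$ and the operadic composition together with Liouville's theorem (Theorem~\ref{thm_Liouville}), and your component-wise unwinding and uniqueness-of-extension argument for $d\ge 3$ fill in exactly those details.
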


\section{Construction of symmetric monoidal functor}\label{sec_construction}
In this section, using the Costello-Gwilliam factorization algebra \cite{CG}, we construct a symmetric monoidal functor 
$\HY:\Emb \rightarrow \Vect$ and study its properties.

In Section \ref{sec_factorization}, we construct the functor $\HY$.
In Section \ref{sec_classical_quantum}, following \cite{CG}, we recall an alternative description (the classical description) of the vector space $\HY(U,g_\std)$ for $U \subset \R^d$, using the Green's function on $(\R^d,g_\std)$.
By examining the behavior of the Green's function under conformal transformations, we show that for $d \geq 3$ the classical description defines a natural transformation between the functors.
In Section~\ref{sec_central}, we show that for $d=2$ naturality fails and that a cocycle consisting of harmonic functions (the central charge) appears.


\subsection{Prefactorization algebra of conformal Laplacian}\label{sec_factorization}
Let $(M,g)$ be an oriented $d$-dimensional Riemannian manifold and
\begin{align*}
L_g = \Delta_g +  \frac{d-2}{4(d-1)}S_g,
\end{align*}
the \textbf{conformal Laplacian} (also known as the Yamabe operator \cite{Yamabe}), where $\Delta_g=d^*d$ is the Hodge Laplacian and $S_g$ is the scalar curvature \cite{LP}.
When we consider $\R^d$ equipped with the standard metric $g_\std = dx_1^2+\dots+dx_d^2$,
\begin{align*}
L_{g_\std}= \Delta_{g_\std}= - \sum_{i=1}^d {\pa}_i^2.
\end{align*}

An important property of the conformal Laplacian is the following \cite[(2.7)]{LP}:
\begin{prop}\label{prop_Yamabe}
Let $\om:M \rightarrow \R$ be a smooth map.
Then, for $\hat{g} = e^{2\om} g$,
\begin{align*}
L_{\hat{g}}(f) = e^{-\frac{d+2}{2}\om} L_g (e^{\frac{d-2}{2}\om} f).
\end{align*}
\end{prop}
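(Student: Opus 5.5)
The plan is a direct local computation of the conformal covariance identity. Set $\hat g=e^{2\om}g$ and write $\Delta_g=-\mathrm{div}_g\nabla_g$ for the (nonnegative) Laplacian. I would first collect three standard transformation rules under this conformal change, then substitute $e^{\frac{d-2}{2}\om}f$ into $L_g$ and check that all the unwanted first-order and zeroth-order terms cancel.

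The three rules are as follows.

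1. The Laplacian transforms as
\begin{align*}
\Delta_{\hat g}u = e^{-2\om}\bigl(\Delta_g u-(d-2)\,g(\nabla\om,\nabla u)\bigr).
\end{align*}
This follows from $\mathrm{vol}_{\hat g}=e^{d\om}\mathrm{vol}_g$ and $\hat g^{ij}=e^{-2\om}g^{ij}$, together with $\Delta_{\hat g}u=-|\hat g|^{-1/2}\pa_i(|\hat g|^{1/2}\hat g^{ij}\pa_j u)$.

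2. The scalar curvature transforms by the classical formula
\begin{align*}
S_{\hat g}=e^{-2\om}\bigl(S_g+2(d-1)\Delta_g\om-(d-2)(d-1)|\nabla\om|^2\bigr).
\end{align*}
It follows from the transformation of the Levi-Civita connection, $\hat\Gamma^k_{ij}=\Gamma^k_{ij}+\delta^k_i\pa_j\om+\delta^k_j\pa_i\om-g_{ij}\nabla^k\om$, by tracing the Ricci tensor. Note the sign convention $\Delta_g=d^*d\ge0$.

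3. The Leibniz rule for the Laplacian of a product, $\Delta_g(\phi f)=\phi\Delta_g f+f\Delta_g\phi-2g(\nabla\phi,\nabla f)$. I apply it with $\phi=e^{a\om}$ and $a=\frac{d-2}{2}$, using $\Delta_g e^{a\om}=e^{a\om}(a\Delta_g\om-a^2|\nabla\om|^2)$.

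With these in hand, I compute $e^{-\frac{d+2}{2}\om}L_g(e^{a\om}f)$ and compare it with $L_{\hat g}f=\Delta_{\hat g}f+\frac{d-2}{4(d-1)}S_{\hat g}f$.

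The gradient terms should match: the Leibniz rule contributes $-2a\,g(\nabla\om,\nabla f)=-(d-2)g(\nabla\om,\nabla f)$, which is exactly the term in rule 1. The overall prefactor is $e^{a\om-\frac{d+2}{2}\om}=e^{-2\om}$, which is also consistent.

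For the zeroth-order terms in $f$, the left side gives $a\Delta_g\om-a^2|\nabla\om|^2$. The right side, from the curvature term, gives $\frac{d-2}{4(d-1)}\bigl(2(d-1)\Delta_g\om-(d-2)(d-1)|\nabla\om|^2\bigr)=\frac{d-2}{2}\Delta_g\om-\frac{(d-2)^2}{4}|\nabla\om|^2$. These agree, since $a=\frac{d-2}{2}$ and $a^2=\frac{(d-2)^2}{4}$.

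The only genuine obstacle is deriving the scalar curvature formula in rule 2 with the correct signs and conventions. I would cite it from \cite{LP} rather than rederive it; if a derivation is required, I would do it via the Christoffel symbol variation as indicated. As a check, for $d=2$ the formula reduces to the Gauss curvature law $S_{\hat g}=e^{-2\om}(S_g+2\Delta_g\om)$, and $L_g=\Delta_g$ then satisfies $\Delta_{\hat g}=e^{-2\om}\Delta_g$, consistent with rule 1 at $d=2$.
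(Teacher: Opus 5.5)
Your proposal is correct: with $c=\frac{d-2}{4(d-1)}$ one has $2(d-1)c=\frac{d-2}{2}$ and $(d-2)(d-1)c=\frac{(d-2)^2}{4}$, so the first- and zeroth-order terms match exactly as you claim. This is the standard computation behind the identity, which the paper does not prove but simply cites from \cite[(2.7)]{LP}. One small bonus of your $e^{2\om}$ computation is that it is uniform in $d\ge 2$, whereas Lee--Parker work in dimension $d\ge 3$; for $d=2$, which the paper also needs, your argument reduces to $\Delta_{\hat g}=e^{-2\om}\Delta_g$.
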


Set
\begin{align*}
P(C_c^\infty(M))&= \bigoplus_{n \geq 0} \mathrm{Sym}^n C_c^\infty(M),\\
P^n(C_c^\infty(M))&= \mathrm{Sym}^n C_c^\infty(M).
\end{align*}
Here $C_c^\infty(M)$ denotes the vector space of real valued smooth functions on $M$ with compact support.
Throughout this paper, unless otherwise stated, all direct sums, tensor products, $\Sym$, cokernels, etc., are taken in the category of $\R$-vector spaces.
The vector space $P(C_c^\infty(M))$ is a commutative associative algebra, the polynomial algebra on $\Cc(M)$, and $P(C_c^\infty(\emptyset)) = \R$.
For $f_1,\dots,f_n \in \Cc(M)$, denote their product by $f_1\cdot f_2 \cdots f_n$.

Following \cite{CG}, define a linear map $\Delta_{M,g}^\mathrm{BV}:P(C_c^\infty(M))\otimes C_c^\infty(M) \rightarrow P(C_c^\infty(M))$ by
\begin{align*}
\Delta_{M,g}^{\mathrm{BV}}(f_1 \cdots f_n \otimes h)=- f_1 \cdots f_n \cdot L_g h+\sum_{i=1}^n \left(\int_{M} f_i h \mathrm{vol}_{g} \right) f_1 \cdots  f_{i-1} \cdot f_{i+1} \cdots f_n,
\end{align*}
where $\mathrm{vol}_{g}$ is the volume form of $(M,g)$, and set
\begin{align*}
\HY(M,g) = \mathrm{cok}(\Delta_{M,g}^\mathrm{BV}:  P(C_c^\infty(M)) \otimes C_c^\infty(M) \rightarrow P(C_c^\infty(M))) = P(C_c^\infty(M)))/ \mathrm{Im}(\Delta_{M,g}^\mathrm{BV}).
\end{align*}


Define a functor
$\HY:\Emb \rightarrow \Vect$ as follows:
On objects, we send a Riemannian manifold $(M,g)$ to the vector space $\HY(M,g)$.
Let $\phi : (M_1,g_1) \rightarrow (M_2,g_2)$ be a conformal open embedding with the conformal factor $\Om_\phi^2 g_1 = \phi^*(g_2)$.
For $f \in \Cc(M_1)$, define a function $W_\phi (f):M_2 \rightarrow \R$ by
\begin{align}
\begin{split}
W_\phi (f)(x) = \begin{cases}
\Om_\phi(\phi^{-1}(x))^{-\frac{d+2}{2}} f(\phi^{-1}(x)), & x \in \phi(M_1) \\
0, & \text{otherwise}
\end{cases}
\end{split}
\label{eq_def_W}
\end{align}
for $x\in M_2$. 
Since $f$ has compact support in $M_1$ and $\phi$ is a smooth open embedding, 
$W_\phi(f) \in \Cc(M_2)$.
Similarly, we define $\tW_\phi(f) \in \Cc(M_2)$ by
\begin{align*}
\tW_\phi(f)(x) 
= \Om_\phi(\phi^{-1}(x))^{-\frac{d-2}{2}} 
  f(\phi^{-1}(x)).
\end{align*}


Define a linear map
\begin{align*}
W_\phi^0: P(\Cc(M_1)) \rightarrow P(\Cc(M_2))
\end{align*}
by $W_\phi^0(f_1 \cdots f_n) = W_\phi(f_1)\cdots W_\phi(f_n)$ and
\begin{align*}
W_\phi^1:P(\Cc(M_1))\otimes \Cc(M_1) \rightarrow P(\Cc(M_2))\otimes \Cc(M_2)
\end{align*}
by $W_\phi^1(f_1 \cdots f_n \otimes h) = W_\phi(f_1)\cdots W_\phi(f_n) \otimes \tW_\phi(h)$.
\begin{lem}\label{lem_func_well}
Let $\phi : (M_1,g_1) \rightarrow (M_2,g_2)$ be a conformal open embedding. Then, the following diagram commutes:
\begin{align*}
\begin{array}{ccc}
P(\Cc(M_1))\otimes \Cc(M_1) &\overset{\Delta_{M_1,g_1}^{\mathrm{BV}}}{\longrightarrow}&P(\Cc(M_1)) \\
\downarrow_{W_\phi^1} & & \downarrow_{W_\phi^0} \\
P(\Cc(M_2))\otimes \Cc(M_2) &\overset{\Delta_{M_2,g_2}^{\mathrm{BV}}}{\longrightarrow}&P(\Cc(M_2)).
\end{array}
\end{align*}
\end{lem}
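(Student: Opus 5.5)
Since $W_\phi^0$ is multiplicative and $\Delta^{\mathrm{BV}}$ is given by an explicit formula, it suffices to check commutativity on a generator $f_1\cdots f_n\otimes h$. Applying $W_\phi^0\circ\Delta^{\mathrm{BV}}_{M_1,g_1}$ gives
\begin{align*}
-W_\phi(f_1)\cdots W_\phi(f_n)\cdot W_\phi(L_{g_1}h)+\sum_{i}\Big(\int_{M_1} f_i h\,\vol_{g_1}\Big)\prod_{j\neq i}W_\phi(f_j).
\end{align*}
Applying $\Delta^{\mathrm{BV}}_{M_2,g_2}\circ W_\phi^1$ gives
\begin{align*}
-\prod_j W_\phi(f_j)\cdot L_{g_2}\tW_\phi(h)+\sum_i\Big(\int_{M_2} W_\phi(f_i)\tW_\phi(h)\,\vol_{g_2}\Big)\prod_{j\neq i}W_\phi(f_j).
\end{align*}
So the lemma reduces to two identities for $f,h\in\Cc(M_1)$:
\begin{align*}
\text{(a)}\quad L_{g_2}\tW_\phi(h)=W_\phi(L_{g_1}h),\qquad
\text{(b)}\quad \int_{M_2}W_\phi(f)\tW_\phi(h)\,\vol_{g_2}=\int_{M_1}fh\,\vol_{g_1}.
\end{align*}

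For (b), we pull back along $\phi$. The integrand is supported in $\phi(M_1)$, and $\phi^*\vol_{g_2}=\vol_{\phi^*g_2}=\Om_\phi^d\vol_{g_1}$ because $\phi$ is orientation-preserving and $\phi^*g_2=\Om_\phi^2g_1$. After pullback the integrand becomes $\Om_\phi^{-\frac{d+2}{2}-\frac{d-2}{2}}fh\,\Om_\phi^d=fh$, which gives (b).

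For (a), we first note that both sides vanish off $\phi(M_1)$. Indeed, $\tW_\phi(h)$ has compact support inside the open set $\phi(M_1)$, and $L_{g_2}$ is a differential operator, so it does not enlarge supports. It therefore suffices to check (a) on $\phi(M_1)$, and after pulling back by the isometry $\phi:(M_1,\phi^*g_2)\to(\phi(M_1),g_2)$ this becomes an identity on $M_1$ for the metric $\hat g=\phi^*g_2=e^{2\om}g_1$, where $\om=\log\Om_\phi$. Naturality of $L$ under isometries gives $\phi^*(L_{g_2}u)=L_{\hat g}(\phi^*u)$. Moreover $\phi^*\tW_\phi(h)=e^{-\frac{d-2}{2}\om}h$. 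Proposition~\ref{prop_Yamabe} then yields
\begin{align*}
L_{\hat g}\big(e^{-\frac{d-2}{2}\om}h\big)=e^{-\frac{d+2}{2}\om}L_{g_1}h=\phi^*W_\phi(L_{g_1}h),
\end{align*}
which is (a).

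The only delicate point is this last step: checking that the exponents in $W_\phi$ and $\tW_\phi$ match Proposition~\ref{prop_Yamabe} in the correct direction, and that the isometry-invariance of the scalar curvature and of the Hodge Laplacian justifies transporting $L$. The rest is bookkeeping on generators.
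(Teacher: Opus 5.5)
Your proposal is correct and follows essentially the same route as the paper. The paper also reduces to the two identities: the pairing identity via the change of variables $y=\phi(x)$ with $\phi^*\vol_{g_2}=\Om_\phi^d\vol_{g_1}$, and $L_{g_2}\tW_\phi(h)=W_\phi(L_{g_1}h)$ via locality on $\phi(M_1)$, transport along the isometry $(M_1,\phi^*g_2)\to(\phi(M_1),g_2)$, and Proposition~\ref{prop_Yamabe} applied with $\phi^*g_2=\Om_\phi^2 g_1$. Your exponent bookkeeping there is right.
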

\begin{proof}
By definition, we have
\begin{align*}
&\Delta_{M_2,g_2}^\mathrm{BV} \circ W_\phi^1(f_1 \cdots  f_n \otimes h)=\Delta_{M_2,g_2}^\mathrm{BV} (W_\phi(f_1) \cdots  W_\phi(f_n) \otimes \tW_\phi(h))\\
&=  \sum_{i=1}^n \left(\int_{M_2} W_\phi(f_i) \tW_\phi(h)\mathrm{vol}_{g_2} \right) (W_\phi(f_1) \cdots  \hat{f_i}  \cdots  W_\phi(f_n)) - (W_\phi(f_1) \cdots W_\phi(f_n) \cdot (L_{M_2, g_2} \tW_\phi(h)).
\end{align*}
By setting $y=\phi(x)$, we have:
\begin{align*}
\int_{M_2} W_\phi(f_i) \tW_\phi(h)\mathrm{vol}_{g_2}&= \int_{M_2} \Om_\phi(\phi^{-1}(y))^{-\frac{d+2}{2}} f(\phi^{-1}(y))
 \Om_\phi(\phi^{-1}(y))^{-\frac{d-2}{2}}h(\phi^{-1}(y)) \mathrm{vol}_{g_2}(y)\\
 &= \int_{\phi(M_1)} \Om_\phi(\phi^{-1}(y))^{-d} f(\phi^{-1}(y))h(\phi^{-1}(y)) \mathrm{vol}_{g_2}(y)\\
 &= \int_{M_1} \Om_\phi(x)^{-d} f(x)h(x) \mathrm{vol}_{\phi^*(g_2)}(x)\\
  &= \int_{M_1} \Om_\phi(x)^{-d} f(x)h(x) \mathrm{vol}_{\Om_\phi^2 g_1}(x)\\
   &= \int_{M_1} f(x)h(x) \mathrm{vol}_{g_1}(x).
\end{align*}
By Proposition \ref{prop_Yamabe}, we have:
\begin{align*}
L_{M_2, g_2} \tW_\phi(h)&= L_{M_2, g_2}  \Om_\phi(\phi^{-1}(y))^{-\frac{d-2}{2}}h(\phi^{-1}(y))=L_{\phi(M_1), g_2}  \Om_\phi(\phi^{-1}(y))^{-\frac{d-2}{2}}h(\phi^{-1}(y))\\
&=\left(L_{M_1, \phi^*(g_2)}  \Om_\phi^{-\frac{d-2}{2}} h\right)\circ \phi^{-1}=\left(L_{M_1, \Om_\phi g_1}  \Om_\phi^{-\frac{d-2}{2}} h\right)\circ \phi^{-1}\\
&=\left(  \Om_\phi^{-\frac{d+2}{2}} L_{M_1, g_1}h\right)\circ \phi^{-1}=W_\phi (L_{M_1,g_1}(h)).
\end{align*}
Hence, the assertion holds.
%
\end{proof}

Let $\psi:(M_2,g_2) \rightarrow (M_3,g_3)$ be a conformal open embedding.
Then, for $f\in \Cc(M_1)$,
\begin{align*}
W_\psi W_\phi(f) &= W_\psi\left( (\Om_{\phi}\circ \phi^{-1})^{-\frac{d+2}{2}}f \circ \phi^{-1}\right)\\
&= (\Om_{\psi} \circ \psi^{-1})^{-\frac{d+2}{2}} (\Om_{\phi}\circ \phi^{-1}\circ \psi^{-1})^{-\frac{d+2}{2}}f \circ \phi^{-1} \circ \psi^{-1}\\
&= (\Om_{\psi\circ \phi} \circ  \phi^{-1} \circ \psi^{-1})^{-\frac{d+2}{2}}f \circ \phi^{-1} \circ \psi^{-1}=W_{\psi\circ \phi}(f)
\end{align*}
by $\phi^*\psi^*(g_3)=\phi^*(\Om_{\psi}g_2) = \Om_\phi \cdot(\Om_\psi \circ \phi) g_1$.
Hence, $\HY:\Emb \rightarrow \Vect$ is a functor.

We claim that $\HY$ is symmetric monoidal.
In fact, for $(M_1,g_1), (M_2,g_2) \in \Emb$,
\begin{align*}
\Cc(M_1 \sqcup M_2) \cong \Cc(M_1) \oplus \Cc(M_2)
\end{align*}
and thus $P(\Cc(M_1 \sqcup M_2)) \cong P(\Cc(M_1)) \otimes P(\Cc(M_2))$, which is functorial.
Note that for any $f^{(1)} \in \Cc(M_1)$ and $f^{(2)} \in \Cc(M_2)$
\begin{align*}
\int_{M_1 \sqcup M_2} f^{(1)} f^{(2)} \vol_{g_1 \sqcup g_2}=0.
\end{align*}
Hence, for any $(f_1^{(1)} \cdots f_n^{(1)}) \otimes (f_1^{(2)} \cdots f_m^{(2)})\otimes (h_1+h_2) \in 
P(\Cc(M_1 \sqcup M_2)) \otimes \Cc(M_1 \sqcup M_2) \cong (P(\Cc(M_1)) \otimes P(\Cc(M_2)))\otimes (\Cc(M_1)\oplus \Cc(M_2))$,
we have
\begin{align*}
&\Delta_{M_1 \sqcup M_2,g_1\sqcup g_2}^\mathrm{BV}((f_1^{(1)} \cdots f_n^{(1)}) \otimes (f_1^{(2)} \cdots f_m^{(2)}) \otimes (h_1+h_2))\\
&=
\Delta_{M_1,g_1}^\mathrm{BV}((f_1^{(1)} \cdots f_n^{(1)})\otimes h_1)
\otimes (f_1^{(2)} \cdots f_m^{(2)})
+
(f_1^{(1)} \cdots f_n^{(1)})\otimes \Delta_{M_2,g_2}^\mathrm{BV}((f_1^{(2)} \cdots f_m^{(2)})\otimes h_2).
\end{align*}
Hence, we have:
\begin{thm}\label{prop_functor}
For any $d\geq 2$, $\HY:\Emb \rightarrow \Vect$ is a symmetric monoidal functor.
\end{thm}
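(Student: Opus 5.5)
To prove Theorem~\ref{prop_functor}, I will assemble the pieces already established: well-definedness on morphisms, functoriality, and the monoidal structure maps. The first step is to define $\HY(\phi)$ for a conformal open embedding $\phi:(M_1,g_1)\to(M_2,g_2)$ as the map induced by $W_\phi^0$ on cokernels. By Lemma~\ref{lem_func_well}, we have $W_\phi^0\circ\Delta^{\mathrm{BV}}_{M_1,g_1}=\Delta^{\mathrm{BV}}_{M_2,g_2}\circ W_\phi^1$. Hence $W_\phi^0$ sends $\mathrm{Im}(\Delta^{\mathrm{BV}}_{M_1,g_1})$ into $\mathrm{Im}(\Delta^{\mathrm{BV}}_{M_2,g_2})$ and descends to a linear map $\HY(\phi):\HY(M_1,g_1)\to\HY(M_2,g_2)$. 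Functoriality then follows from the computation $W_\psi W_\phi=W_{\psi\circ\phi}$ made above. This uses the chain rule for conformal factors, $\Om_{\psi\circ\phi}=\Om_\phi\cdot(\Om_\psi\circ\phi)$; the displayed relation $\phi^*\psi^*g_3$ should be read with squared factors, and the exponents work out the same way. Multiplicativity of $W_\phi^0$ then gives $W^0_\psi W^0_\phi=W^0_{\psi\circ\phi}$. Finally, $W_{\id}=\id$ because $\Om_{\id}=1$.

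For the monoidal structure, I will use the natural isomorphism $\Cc(M_1\sqcup M_2)\cong\Cc(M_1)\oplus\Cc(M_2)$ and the induced algebra isomorphism $P(\Cc(M_1\sqcup M_2))\cong P(\Cc(M_1))\otimes P(\Cc(M_2))$. By the displayed Leibniz-type formula for $\Delta^{\mathrm{BV}}_{M_1\sqcup M_2}$, the image of $\Delta^{\mathrm{BV}}_{M_1\sqcup M_2}$ corresponds to
\[
\mathrm{Im}(\Delta^{\mathrm{BV}}_{M_1})\otimes P(\Cc(M_2))+P(\Cc(M_1))\otimes\mathrm{Im}(\Delta^{\mathrm{BV}}_{M_2}).
\]
The inclusion $\subseteq$ is immediate from that formula. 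For $\supseteq$, I take $h_2=0$ or $h_1=0$ and note that elements of the form $(\text{monomial})\otimes(\text{monomial})\otimes h_i$ span. Over a field, the cokernel of a tensor product of this form is the tensor product of the cokernels, $(A/I)\otimes(B/J)\cong(A\otimes B)/(I\otimes B+A\otimes J)$. This yields an isomorphism $\HY(M_1\sqcup M_2)\cong\HY(M_1)\otimes\HY(M_2)$. In addition, $\HY(\emptyset)=P(0)=\R$, which gives the unit isomorphism.

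It remains to check coherence. I need naturality of the monoidal isomorphism in $\phi_1\sqcup\phi_2$: $W_{\phi_1\sqcup\phi_2}$ acts componentwise on the direct sum, so $W^0_{\phi_1\sqcup\phi_2}$ corresponds to $W^0_{\phi_1}\otimes W^0_{\phi_2}$. The associativity, unit and symmetry constraints are inherited from those of $\oplus$ on function spaces and of $\Sym$, which turns direct sums into tensor products symmetric-monoidally, and they are compatible with passing to quotients. The main obstacle, though only a mild one, is the precise identification of $\mathrm{Im}(\Delta^{\mathrm{BV}})$ for a disjoint union with the sum of the two tensor images; the cross terms vanish because $\int f^{(1)}h_2=0$ when the supports lie in different components. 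Everything else is formal bookkeeping.
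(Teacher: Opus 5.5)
Your proposal is correct and follows the paper's argument: descend $W^0_\phi$ to cokernels via Lemma~\ref{lem_func_well}, obtain functoriality from $W_\psi W_\phi=W_{\psi\circ\phi}$, and get the monoidal structure from $\Cc(M_1\sqcup M_2)\cong\Cc(M_1)\oplus\Cc(M_2)$ together with the splitting of $\Delta^{\mathrm{BV}}$ on a disjoint union. Your explicit identification of $\mathrm{Im}(\Delta^{\mathrm{BV}}_{M_1\sqcup M_2})$, the quotient-of-tensor-products step, the unit $\HY(\emptyset)=\R$, and the squared-conformal-factor correction fill in details the paper leaves implicit.
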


By Proposition \ref{prop_CF_algebra}, we have:
\begin{cor}\label{cor_disk_algebra}
The restriction of $\HY$ to $\tDisk$ determines a $\CE$-algebra structure on the vector space $\HY(\bD)$.
\end{cor}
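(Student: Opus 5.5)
The statement follows by combining Theorem~\ref{prop_functor} with Proposition~\ref{prop_CF_algebra}. The plan is to restrict $\HY$ to the full monoidal subcategory $\tDisk\subset\Emb$ and then read off the operad structure.

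First, a full monoidal subcategory is closed under $\sqcup$, and the unit $\emptyset$ is the object $\sqcup_0\bD$. Hence the composite $\tDisk\hookrightarrow\Emb\xrightarrow{\HY}\Vect$ is a symmetric monoidal functor. Its structure isomorphisms
\[
\HY(\sqcup_n\bD)\cong\HY(\bD)^{\otimes n}
\]
are the ones exhibited before Theorem~\ref{prop_functor}. They come from $\Cc(M_1\sqcup M_2)\cong\Cc(M_1)\oplus\Cc(M_2)$ together with the splitting of $\Delta^{\mathrm{BV}}$ over disjoint unions, and they pass to cokernels.

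Second, we apply Proposition~\ref{prop_CF_algebra}. For this we need the target to be strict. We either replace $\Vect$ by an equivalent strict symmetric monoidal category, or note that the proposition is applied up to the canonical associativity and unit isomorphisms of $\otimes$. We then define
\[
\rho_n(\phi)=\HY(\phi)\circ\bigl(\HY(\bD)^{\otimes n}\cong\HY(\sqcup_n\bD)\bigr)\qquad\text{for }\phi\in\CE(n).
\]

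With this definition the axioms of Definition~\ref{def_operad} are checked as follows.
\begin{enumerate}
\item Unitality holds because $\HY$ is a functor, so $\HY(\id_{\bD})=\id$.
\item Operadic composition is compatible because $\phi_{[n]}\circ_i\phi_{[m]}=\phi_{[n]}\circ(\id\sqcup\cdots\sqcup\phi_{[m]}\sqcup\cdots\sqcup\id)$. Functoriality and monoidality of $\HY$ then give $\rho_{n+m-1}(\phi_{[n]}\circ_i\phi_{[m]})=\rho_n(\phi_{[n]})\circ_i\rho_m(\phi_{[m]})$.
\item $S_n$-equivariance follows from compatibility of $\HY$ with the symmetry isomorphisms. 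This holds because the symmetry of $\Cc(M_1)\oplus\Cc(M_2)$ corresponds to the symmetry of $P(\Cc(M_1))\otimes P(\Cc(M_2))$.
\end{enumerate}

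The only step needing real care is checking that the monoidal structure maps of $\HY$ are compatible with these symmetry and composition isomorphisms on cokernels. That is, $W^0_\phi$ must respect the identification $P(\Cc(M_1\sqcup M_2))\cong P(\Cc(M_1))\otimes P(\Cc(M_2))$. This is immediate, since for $\phi=\phi_1\sqcup\phi_2$ we have $W_{\phi_1\sqcup\phi_2}=W_{\phi_1}\oplus W_{\phi_2}$ on $\Cc(M_1)\oplus\Cc(M_2)$. I expect no genuine obstacle beyond this bookkeeping.
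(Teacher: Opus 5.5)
Your proposal is correct and follows the paper's argument: the corollary comes from restricting the symmetric monoidal functor $\HY$ of Theorem~\ref{prop_functor} to the full monoidal subcategory $\tDisk$ and then invoking Proposition~\ref{prop_CF_algebra}. Your explicit check of the operad axioms, including the remark about strictifying $\Vect$, just unpacks the correspondence in that proposition, which the paper cites from \cite{MLeft} without further comment.
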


\subsection{Quantization and conformal transformations}\label{sec_classical_quantum}
In this section we consider the restriction of the functor $\HY:\Emb \rightarrow \Vect$ to open subsets of $(\R^d,g_\std)$.
By \cite[Section 4.6.2]{CG}, we have a linear isomorphism
\begin{align*}
\Psi_{G_d}(U): \HY(U) \cong P(\Cc(U)/\Delta \Cc(U))
\end{align*}
by using the Green function
\begin{align}
\begin{split}
G_d(x,y) = \begin{cases}
-(2\pi)^{-1}\log|x-y|, & d=2\\
\frac{1}{(d-2)\om_d \norm{x-y}^{d-2}} & d \geq 3
\end{cases}
\end{split}
\label{eq_def_Green}
\end{align}
for each open subset $U \subset \R^d$. Here, $\om_d$ is the volume of the $(d-1)$-dimensional sphere $S^{d-1} \subset \R^d$ induced from $g_\std$.
The purpose of this section is to show that $\Psi_{G_d}(U)$ is a natural transformation with respect to arbitrary conformal maps when $d \geq 3$, but it is not natural in the case $d=2$.
This is due to the fact that the Green function in two dimensions does not vanish at infinity. Since $\Psi_{G_d}(U)$ yields a physical quantity called the {\bf vacuum expectation value}. The failure of the naturality corresponds to the fact that the local conformal invariance of the vacuum expectation value is broken in $d=2$. This conformal anomaly is well known in physics (see \cite{Polyakov,Duff}).
The following result is basic:
\begin{prop}\cite[Proposition 1.18]{MLeft}
\label{prop_conf_identity}
For any $d \geq 2$ and $\phi \in \Conf^+(S^d)$, $\norm{\phi(x)-\phi(y)}^2=\Om_\phi(x)\Om_\phi(y)\norm{x-y}^2$ holds.
In particular, if $d \geq 3$, then for $\phi \in \Conf^+(S^d)$
\begin{align}
G_d(x,y)= \Om_\phi(x)^{\frac{d-2}{2}}\Om_\phi(y)^{\frac{d-2}{2}}
G_d(\phi(x),\phi(y)).\label{eq_Green_trans}
\end{align}
\end{prop}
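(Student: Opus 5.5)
The plan is to verify the distance identity on a generating set of $\Conf^+(S^d)$ and then show that it is preserved under composition. Here $\Om_\phi$ is the conformal factor with $\phi^*g_\std=\Om_\phi^2 g_\std$, as in Definition~\ref{def_conformal_map}. The identity is to hold at all points $x,y$ with $\phi(x),\phi(y)\neq\infty$.

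\emph{Closure under composition.} For conformal maps $\phi,\psi$ the chain rule gives $(\psi\circ\phi)^*g=\phi^*(\Om_\psi^2 g)=(\Om_\psi\circ\phi)^2\Om_\phi^2\, g$, so
\begin{align*}
\Om_{\psi\circ\phi}=(\Om_\psi\circ\phi)\,\Om_\phi .
\end{align*}
Suppose the identity holds for $\phi$ and for $\psi$. Then
\begin{align*}
\norm{\psi\phi(x)-\psi\phi(y)}^2=\Om_\psi(\phi(x))\Om_\psi(\phi(y))\norm{\phi(x)-\phi(y)}^2=\Om_{\psi\circ\phi}(x)\Om_{\psi\circ\phi}(y)\norm{x-y}^2 .
\end{align*}
So the identity holds for $\psi\circ\phi$, and it suffices to treat generators.

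\emph{Generators.} For translations and rotations $\Om=1$ and distances are preserved. For a dilation $R^D$ we have $\Om=R$ and $\norm{Rx-Ry}^2=R^2\norm{x-y}^2$.

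For the special conformal transformation $K_b$, the first step is to write it as $K_b=I\circ T_{-b}\circ I$. Here $I(x)=x/\norm{x}^2$ is the inversion, and this factorization can be checked directly from the formula for $K_b$. The inversion is orientation-reversing, so it is not in $\Conf^+(S^d)$, but it is conformal with $\Om_I(x)=\norm{x}^{-2}$. The composition argument above uses only conformality, not orientation, so it applies. It then remains to check the identity for $I$:
\begin{align*}
\norm{I(x)-I(y)}^2=\frac{1}{\norm{x}^2}-\frac{2(x,y)}{\norm{x}^2\norm{y}^2}+\frac{1}{\norm{y}^2}=\frac{\norm{x-y}^2}{\norm{x}^2\norm{y}^2}=\Om_I(x)\Om_I(y)\norm{x-y}^2 .
\end{align*}
This gives the identity for $K_b$. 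The only mildly delicate point is domain bookkeeping: the points where $K_b$ or $I$ take the value $\infty$ are excluded, and the identity is meant where both sides are finite.

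\emph{Green's function.} For $d\geq 3$, raise the identity to the power $-\frac{d-2}{2}$:
\begin{align*}
\norm{\phi(x)-\phi(y)}^{-(d-2)}=\Om_\phi(x)^{-\frac{d-2}{2}}\Om_\phi(y)^{-\frac{d-2}{2}}\norm{x-y}^{-(d-2)} .
\end{align*}
Multiplying by $\frac{1}{(d-2)\om_d}$ and rearranging gives \eqref{eq_Green_trans}. The main obstacle is essentially only the factorization of $K_b$ through inversions; everything else is routine.
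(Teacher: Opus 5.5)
Your argument is correct. This paper gives no proof of the proposition; it imports it from \cite[Proposition 1.18]{MLeft}, so there is no in-paper argument to compare against. Your route is the standard one:
\begin{itemize}
\item check the identity on the generators listed in Section~\ref{sec_geom};
\item propagate it through the cocycle $\Om_{\psi\circ\phi}=(\Om_\psi\circ\phi)\,\Om_\phi$;
\item handle $K_b$ via $K_b=I\circ T_{-b}\circ I$, using the orientation-reversing inversion $I$.
\end{itemize}
The factorization of $K_b$, the value $\Om_I(x)=\norm{x}^{-2}$, the inversion identity and the passage to $G_d$ all check out. Since the generating set is closed under inverses (for instance $K_b^{-1}=K_{-b}$), closure under composition is enough.

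One small correction concerns your domain bookkeeping. The composition $I\circ T_{-b}\circ I$ does not reach the points $x=0$ or $y=0$, because there the intermediate value is $I(0)=\infty$. Yet $K_b(0)=0$ is finite, so these points lie in the domain you declared, and you should not simply exclude them. They are recovered by continuity. The cocycle formula gives $\Om_{K_b}(x)=(1-2(x,b)+\norm{x}^2\norm{b}^2)^{-1}$, which is smooth and positive away from $x=b/\norm{b}^2$, the only point where $K_b(x)=\infty$. So both sides of the identity are continuous on the set where $K_b(x),K_b(y)$ are finite. They agree on the dense subset where $x,y\neq 0$, hence everywhere on that set.
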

Note that this proposition no longer holds for a general $\phi\in \CE(1)$ when $d=2$.

\begin{rem}[Characterization of the Green function]
\label{rem_Green_characterize}
Using the fact that any bounded harmonic function must be constant, the fundamental solution $\Psi_d(x) = \frac{1}{(d-2)\om_d \norm{x}^{d-2}}$ for $d \geq 3$ is characterized as the unique function satisfying the following conditions:
\begin{enumerate}
\item
$F(x) \in L_{\mathrm{loc}}^1(\R^d)$, a locally-integrable function, and $F|_{\R^d \setminus \{0\}}$ is continuous.
\item
$\Delta_{g_\std} F(x) = \delta_0$ as a distribution.
\item
For any $\epsilon >0$, there is $R>0$ such that $\sup_{|x|>R}|F(x)|<\ep$.
\end{enumerate}
This fact follows easily from \cite[Theorem 3.3.2]{Hormander} and Weyl's lemma (see for example \cite[Theorem IX.25]{RS2}).
On the other hand, in $d=2$, since $\log|z|$ does not vanish at infinity, it does not satisfy (3).
Thus, the properties of the fundamental solution differ between $d \geq 3$ and $d=2$.
\end{rem}

%
%

Let $U \subset \R^d$.
We recall the construction of the linear isomorphism from \cite[Section 4.6.2]{CG}
\begin{align*}
\Psi_{G_d}(U): \HY(U) = P(\Cc(U))/\Delta_{\R^d,g_\std}^{\mathrm{BV}} \rightarrow P(\Cc(U)/\Delta \Cc(U)).
\end{align*}
Let $P \in L_{\mathrm{loc}}^1(U \times U)$, a locally-integrable function, with $P(x,y)=P(y,x)$. Then, define a symmetric bilinear form $\partial_P :\Cc(U) \otimes \Cc(U) \rightarrow \C$
by 
\begin{align*}
\pa_P (f,g) = \int_{U \times U} P(x,y)f(x)g(y)dxdy.
\end{align*}
Then, we can extend $\pa_P$ on $P^n(\Cc(U)) = \mathrm{Sym}^n \Cc(U)$ by
\begin{align*}
\pa_P(f_1\dots f_n) = \sum_{1 \leq i <j \leq n}\pa_P(f_i,f_j) f_1\dots \hat{f_i}\dots\hat{f_j}\dots f_n,
\end{align*}
which is a linear map $P^n(\Cc(U)) \rightarrow P^{n-2}(\Cc(U))$.
Here $\pa_P$ lowers the degree, so $\exp(\pa_P):P(\Cc(U)) \rightarrow P(\Cc(U))$ is a well-defined linear map.
Similarly, on $P(\Cc(U)) \otimes \Cc(U)$, by letting $\pa_P$ act only on the $P(\Cc(U))$ factor, we obtain a linear map $\exp(\pa_P):P(\Cc(U))\otimes \Cc(U) \rightarrow P(\Cc(U))\otimes \Cc(U)$.
Let $\Delta_{\R^d,g_\std}^{\mathrm{cl}}:P(\Cc(U))\otimes \Cc(U) \rightarrow P(\Cc(U))$ be a linear map defined by
\begin{align*}
\Delta_{\R^d,g_\std}^{\mathrm{cl}}(f_1\dots f_n\otimes h) =-f_1\dots f_n\cdot \Delta_{g_\std} h.
\end{align*}
Then, it is clear that $P(\Cc(U))/\mathrm{Im}(\Delta_{\R^d,g_\std}^{\mathrm{cl}})$ is isomorphic to
$P(\Cc(U)/\Delta_{g_\std} \Cc(U))$, the polynomial ring over the vector space $\Cc(U)/\Delta_{g_\std} \Cc(U)$.
Set
\begin{align*}
S^\cl(U) =\Cc(U)/\Delta_{g_\std} \Cc(U)
\end{align*}
$P(S^\cl(U))$ is called the space of \textbf{classical observables},
while $P(\Cc(U))/\Delta_{\R^d,g_\std}^{\mathrm{BV}}$ is called the space of quantum observables.
The following proposition is proved in \cite[Section 4.6.2]{CG}.
\begin{prop}\cite[Section 4.6.2]{CG}\label{prop_CG}
Let $P(x,y) \in L_{\mathrm{loc}}^1(U\times U)$ satisfies $P(x,y)=P(y,x)$ and $\Delta_{g_\std,x} P(x,y) =\delta(x-y)$ as distribution. Then, the following diagram commutes
\begin{align*}
\begin{array}{ccc}
P(\Cc(U)) \otimes \Cc(U) &\overset{\Delta_{\R^d,g_\std}^{\mathrm{BV}}}{\longrightarrow}& P(\Cc(U))\\
\downarrow_{\exp(\pa_{P})} & & \downarrow_{\exp(\pa_{P})} \\
P(\Cc(U)) \otimes \Cc(U) &\overset{\Delta_{\R^d,g_\std}^{\mathrm{cl}}}{\longrightarrow}& P(\Cc(U)).
\end{array}
\end{align*}
Hence, $\exp(\pa_{P})$ induces a linear isomorphism
\begin{align*}
\Psi_P(U)= \exp(\pa_P): P(\Cc(U))/\Delta_{\R^d,g_\std}^{\mathrm{BV}} \cong P(S^\cl(U)).
\end{align*}
\end{prop}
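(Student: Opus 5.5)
The plan is to check the commutativity $\exp(\pa_P)\circ\Delta^{\mathrm{BV}} = \Delta^{\mathrm{cl}}\circ\exp(\pa_P)$ on a monomial $f_1\cdots f_n\otimes h$, and then deduce the isomorphism formally. Write $\Delta^{\mathrm{BV}} = \Delta^{\mathrm{cl}} + K$, where
\[
K(f_1\cdots f_n\otimes h)=\sum_{i=1}^n \Bigl(\int_U f_i h\,dx\Bigr) f_1\cdots\hat{f_i}\cdots f_n
\]
is the contraction term. The identity to prove is then $\exp(\pa_P)\Delta^{\mathrm{cl}} + \exp(\pa_P)K = \Delta^{\mathrm{cl}}\exp(\pa_P)$, which is equivalent to
\[
[\exp(\pa_P),\Delta^{\mathrm{cl}}] = -\exp(\pa_P)K .
\]

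For the commutator I will use the fact that $\pa_P$ is a second-order operator (a sum over pairs). For a product $F\cdot g$ with $F=f_1\cdots f_n$ and $g=-\Delta h$ one has
\[
\pa_P(F\cdot g)=\pa_P(F)\cdot g+\sum_i \pa_P(f_i,g)\,f_1\cdots\hat{f_i}\cdots f_n ,
\]
and the last two terms commute with $\pa_P$. Exponentiating, and noting that the mixed terms are first order in $g$ while the $g$–$g$ contraction never appears because $g$ occurs only once, gives
\[
\exp(\pa_P)(F\cdot g)=\exp(\pa_P)(F)\cdot g+\exp(\pa_P)\Bigl(\sum_i \pa_P(f_i,g)\,\hat F_i\Bigr),
\]
where $\hat F_i=f_1\cdots\hat{f_i}\cdots f_n$. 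The key analytic input comes next. Using the symmetry of $P$, the distributional identity $\Delta_x P(x,y)=\delta(x-y)$, and the fact that $h$ has compact support so that the Laplacian may be moved onto $P$ by Fubini and the definition of the distributional derivative, I get
\[
\pa_P(f_i,-\Delta h)=-\int f_i(y)\bigl(\Delta_x P(x,y)\bigr)h(x)\,dx\,dy=-\int f_i h .
\]
Substituting this back gives exactly $\exp(\pa_P)\Delta^{\mathrm{cl}}(F\otimes h)=\Delta^{\mathrm{cl}}\exp(\pa_P)(F\otimes h)-\exp(\pa_P)K(F\otimes h)$, which is the desired identity. One must also check that $\exp(\pa_P)$ on $P(\Cc(U))\otimes\Cc(U)$ acts only on the first factor, and that $\Delta^{\mathrm{cl}}$ multiplies by $-\Delta h$; both agree with the definitions above.

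The consequence is formal. Since $\pa_P$ is locally nilpotent (it lowers degree by two), $\exp(\pa_P)$ is invertible with inverse $\exp(-\pa_P)$. The commuting square therefore sends $\mathrm{Im}\,\Delta^{\mathrm{BV}}$ bijectively onto $\mathrm{Im}\,\Delta^{\mathrm{cl}}$: it maps into it by commutativity, and it is onto because $\exp(\pa_P)$ is bijective on the source $P(\Cc(U))\otimes\Cc(U)$. Hence $\exp(\pa_P)$ induces an isomorphism of the cokernels, and $P(\Cc(U))/\mathrm{Im}\,\Delta^{\mathrm{cl}}\cong P(S^\cl(U))$ as already noted.

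The main obstacle I expect is the combinatorial bookkeeping in the exponential. I need to verify that the term $\pa_P(f_i,g)$ appears with coefficient exactly one after resumming $\sum_k \pa_P^k/k!$, so that the $k$ choices of which step contracts $g$ cancel the factor $1/k$. Alongside this, the distributional step requires care: one needs $P$ locally integrable on $U\times U$ and compact supports so that the pairing with $\Delta h$ is legitimate. If the combinatorics becomes messy, I would instead prove the derivation-type commutator identity $[\pa_P, m_g]=\iota_{\pa_P(\cdot,g)}$, note that this commutator commutes with $\pa_P$, and exponentiate using $e^{A}Be^{-A}=B+[A,B]$.
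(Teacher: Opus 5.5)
Your proposal is correct and is essentially the paper's argument: the paper's proof consists only of the identity $\pa_P(f_1,\Delta f_2)=\int f_1 f_2$, obtained from the symmetry of $P$ and $\Delta P=\delta$, which is exactly your key analytic step. Your decomposition $\Delta^{\mathrm{BV}}=\Delta^{\mathrm{cl}}+K$, the exponentiation via $[\pa_P,m_g]=\iota_g$ with $[\pa_P,\iota_g]=0$ (which settles the $1/k$ bookkeeping you were worried about), and the cokernel argument using the invertibility of $\exp(\pa_P)$ supply the details the paper leaves implicit.
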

\begin{proof}
For any $f_1,f_2 \in \Cc(U)$, since
\begin{align*}
\pa_P(f_1,\Delta f_2) &= \int_{\R^d\times \R^d}
P(x,y)f_1(x)(\Delta_y f_2(y)) dxdy= \int_{\R^d\times \R^d}
(\Delta_y P(x,y))f_1(x) f_2(y) dxdy\\
&= \int_{\R^d\times \R^d}
\delta(x-y)f_1(x) f_2(y) dxdy= \int_{\R^d}f_1(x) f_2(x) dx,
\end{align*}
the assertion holds.
\end{proof}

For each open subset $U \subset \R^d$, we write simply $\Psi_G(U)$ for the isomorphism in Proposition~\ref{prop_CG} defined by the restriction of the Green function $G_d(x,y)|_{U \times U}$, \eqref{eq_def_Green}.

Define a $\Disk$-algebra structure on $P(S^\cl(\bD))$
\begin{align*}
\m_{\bullet}^\cl:\Disk(n) \rightarrow \Hom_\Vect(P(S^\cl(\bD))^{\otimes n}, P(S^\cl(\bD))
\end{align*}
as follows:
For $\phi_{[n]}=(\phi_1,\dots,\phi_n) \in \Disk(n)$,
\begin{align}
\m_{\phi_{[n]}}^\cl:P(S^\cl(\bD))^{\otimes n} \overset{\otimes_{i=1}^n \Psi_{G}(\bD)^{-1}}{\longrightarrow} \HY(\bD)^{\otimes n} \overset{\HY(\phi_{[n]})}{\rightarrow} \HY(\bD) 
 \overset{\Psi_G(\bD)}{\rightarrow} P(S^\cl(\bD)).\label{def_pullback_algebra}
\end{align}

In what follows, we study this $\CE$-algebra structure.
Let $\phi:\bD \rightarrow \bD$ be a conformal open embedding.
Recall that the linear map $W_\phi^0: P(\Cc(\bD)) \rightarrow P(\Cc(\bD))$ is given by
\begin{align*}
f_1\cdots f_n &\mapsto ((\Om_\phi(\phi^{-1}(x))^{-\frac{d+2}{2}} f_1\circ \phi^{-1})\cdots ((\Om_\phi(\phi^{-1}(x))^{-\frac{d+2}{2}} f_n\circ \phi^{-1}),
\end{align*}
which induces a linear map $\HY(\phi): \HY(\bD) \rightarrow \HY(\bD)$.

Similarly, for any $f \in \Cc(\bD)$, since by Proposition \ref{prop_Yamabe},
\begin{align*}
\Om_\phi(\phi^{-1}(x))^{-\frac{d+2}{2}} &(\Delta_{g_\std}f)\circ \phi^{-1}(x)=\left(\Om_\phi(x)^{-\frac{d+2}{2}} L_{g_\std} f \right)\circ \phi^{-1}=\left(L_{\phi^* g_\std} \Om_\phi(x)^{-\frac{d-2}{2}} f \right)\circ \phi^{-1}\\
&=L_{g_\std} \left(\Om_\phi(\phi^{-1}(x))^{-\frac{d-2}{2}} f\circ \phi^{-1}\right)
=\Delta_{g_\std} \left(\Om_\phi(\phi^{-1}(x))^{-\frac{d-2}{2}} f\circ \phi^{-1}\right),
\end{align*}
$W_\phi^0$ induces a linear map $W_\phi^{\mathrm{cl}}: P(S^\cl(\bD)) \rightarrow P(S^\cl(\bD))$.
\begin{prop}\label{prop_conformal_commute}For $d \geq 3$ and a conformal open embedding $\phi:\bD \rightarrow \bD$, the following diagram commutes:
\begin{align*}
\begin{array}{ccc}
P(\Cc(\bD))/\Delta_{\R^d,g_\std}^{\mathrm{BV}}
&\overset{\HY(\phi)}{\longrightarrow}& P(\Cc(\bD))/\Delta_{\R^d,g_\std}^{\mathrm{BV}}\\
\downarrow_{\Psi_G(\bD)} & & \downarrow_{\Psi_G(\bD)} \\
P(\Cc(\bD)/\Delta_{\R^d,g_\std} \Cc(\bD)) &\overset{W_\phi^{\mathrm{cl}}}{\longrightarrow}& P(\Cc(\bD)/\Delta_{\R^d,g_\std} \Cc(\bD)).
\end{array}
\end{align*}
\end{prop}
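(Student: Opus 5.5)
Since both $\HY(\phi)$ and $\Psi_G(\bD)$ are induced by maps on $P(\Cc(\bD))$, it suffices to prove the identity
\begin{align*}
\exp(\pa_G)\circ W_\phi^0 = W_\phi^0\circ \exp(\pa_G)
\end{align*}
on $P(\Cc(\bD))$, with $G=G_d|_{\bD\times\bD}$. Once this holds, we pass to the quotients: $W_\phi^0$ descends to $\HY(\phi)$ on the source side (Lemma \ref{lem_func_well}) and to $W_\phi^{\cl}$ on the target side (the computation preceding the proposition). The two exponentials descend to $\Psi_G(\bD)$ by Proposition \ref{prop_CG}. Both sides of the identity are built from the bilinear forms $\pa_G$ on pairs $f_i,f_j$, and $W_\phi^0$ is the algebra homomorphism induced by the linear map $W_\phi$. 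So it is enough to prove the single scalar identity
\begin{align*}
\pa_G(W_\phi f, W_\phi g)=\pa_G(f,g)\qquad (f,g\in \Cc(\bD)).
\end{align*}
Indeed, expanding $\exp(\pa_G)(W_\phi f_1\cdots W_\phi f_n)$ as a sum over partial pairings of $\{1,\dots,n\}$, each term is a product of factors $\pa_G(W_\phi f_i,W_\phi f_j)$ times $W_\phi$ applied to the unpaired $f_k$. Matching these terms with those of $W_\phi^0\exp(\pa_G)(f_1\cdots f_n)$ then reduces everything to the scalar identity.

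To verify the scalar identity, substitute $x=\phi(x')$ and $y=\phi(y')$ in
\[
\int_{\bD\times\bD}G_d(x,y)\,W_\phi f(x)\,W_\phi g(y)\,dx\,dy .
\]
The integrand is supported in $\phi(\bD)\times\phi(\bD)$. The Jacobian of $\phi$ is $\Om_\phi^d$, since $\phi^*g_\std=\Om_\phi^2g_\std$ and $\phi$ is orientation-preserving. Combining this with the factors $\Om_\phi^{-\frac{d+2}{2}}$ in $W_\phi$, the integral becomes
\[
\int G_d(\phi(x'),\phi(y'))\,\Om_\phi(x')^{\frac{d-2}{2}}\,\Om_\phi(y')^{\frac{d-2}{2}}\,f(x')\,g(y')\,dx'\,dy'.
\]
For $d\ge 3$, Liouville's theorem (Theorem \ref{thm_Liouville}) says $\phi$ is the restriction of an element of $\Conf^+(S^d)$. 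Proposition \ref{prop_conf_identity} then gives $G_d(\phi(x'),\phi(y'))\,\Om_\phi(x')^{\frac{d-2}{2}}\Om_\phi(y')^{\frac{d-2}{2}}=G_d(x',y')$, which is exactly the scalar identity.

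The main technical point is justifying the change of variables for the locally integrable kernel $G_d$, whose singularity lies on the diagonal. Since $\phi$ is a diffeomorphism onto its image, the substitution is a diffeomorphism of $\bD\times\bD$ onto $\phi(\bD)\times\phi(\bD)$. The integrand has compact support there and is absolutely integrable. Standard change of variables for $L^1$ functions therefore applies, and no regularization is needed.

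One must also check that the dimension-dependent exponents match the normalization of the conformal factor in Proposition \ref{prop_conf_identity}. There, $\Om_\phi$ appears linearly in $\norm{\phi(x)-\phi(y)}^2$. This is consistent with $\phi^*g=\Om_\phi^2 g$, whose Jacobian is $\Om_\phi^d$. With this convention, the powers combine to $\Om_\phi^{d-\frac{d+2}{2}}=\Om_\phi^{\frac{d-2}{2}}$ at each point, as required.
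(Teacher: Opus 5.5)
Your proposal is correct and follows the paper's proof: the paper also establishes the scalar identity $\partial_{G_d}(W_\phi f_1, W_\phi f_2)=\partial_{G_d}(f_1,f_2)$ via the substitution $x=\phi(x')$, $y=\phi(y')$, the Jacobian factor $\Omega_\phi^d$, and the transformation law \eqref{eq_Green_trans} obtained from Liouville's theorem and Proposition~\ref{prop_conf_identity}. The only difference is that you spell out why this scalar identity suffices ($W_\phi^0$ commutes with $\partial_G$, hence with $\exp(\partial_G)$, before passing to the quotients), a step the paper leaves implicit.
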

\begin{proof}
Let $f_1,f_2 \in \Cc(\bD)$. Since the support of $f_i \circ \phi^{-1}$ is in ${\phi(\bD)} \subset \bD$, by Proposition \ref{prop_conf_identity} and
Theorem \ref{thm_Liouville} and setting $x=\phi(x')$ and $y=\phi(y')$, we have
\begin{align}
\begin{split}
\pa_{G_d}&((\Om_\phi(\phi^{-1}(x))^{-\frac{d+2}{2}} f_1\circ \phi^{-1}), (\Om_\phi(\phi^{-1}(x))^{-\frac{d+2}{2}} f_2\circ \phi^{-1}))\\
&=\int_{\bD \times \bD}G_d(x,y)\Om_\phi(\phi^{-1}(x))^{-\frac{d+2}{2}} f_1\circ \phi^{-1}(x) \Om_\phi(\phi^{-1}(y))^{-\frac{d+2}{2}} f_2\circ \phi^{-1}(y) \mathrm{vol}_{g_\std}(x) \mathrm{vol}_{g_\std}(y)\\
&=\int_{\phi(\bD) \times \phi(\bD)}G_d(x,y)\Om_\phi(\phi^{-1}(x))^{-\frac{d+2}{2}} f_1\circ \phi^{-1}(x) \Om_\phi(\phi^{-1}(y))^{-\frac{d+2}{2}} f_2\circ \phi^{-1}(y) \mathrm{vol}_{g_\std}(x) \mathrm{vol}_{g_\std}(y)\\
&=\int_{\bD \times \bD}G_d(\phi(x'),\phi(y'))\Om_\phi(x')^{-\frac{d+2}{2}} f_1(x') \Om_\phi(y')^{-\frac{d+2}{2}} f_2(y') \mathrm{vol}_{\phi^*g_\std}(x') \mathrm{vol}_{\phi^* g_\std}(y')\\
&=\int_{\bD\times \bD}G_d(\phi(x'),\phi(y'))\Om_\phi(x')^{\frac{d-2}{2}} f_1(x') \Om_\phi(y')^{\frac{d-2}{2}} f_2(y') \mathrm{vol}_{g_\std}(x') \mathrm{vol}_{g_\std}(y')\\
&=\int_{\bD \times \bD}G_d(x',y')f_1(x')f_2(y') \mathrm{vol}_{g_\std}(x') \mathrm{vol}_{g_\std}(y')=\pa_{G_d}(f_1,f_2).
\end{split}
\label{eq_G_d_comp}
\end{align}
\end{proof}
It is important to note that this naturality fails in $d=2$. 
We will discuss this issue in the next section.

Let $d \geq 3$.
By Proposition \ref{prop_conformal_commute}, for any $\phi \in \Disk(1)$,
\begin{align*}
\m_\phi^{\mathrm{cl}} = W_\phi^{\mathrm{cl}}
\end{align*}
holds. In particular, $W_\bullet^\cl: \Disk(1) \rightarrow \mathrm{End}(P(S^\cl(\bD)))$ is a monoid homomorphism.

Next, we write the higher products explicitly.
Let $(\phi_1,\phi_2) \in \CE(2)$ and 
\[
F^{(1)}=[f_1^{(1)}]\cdots [f_n^{(1)}], \quad  
F^{(2)}=[f_1^{(2)}]\cdots [f_m^{(2)}] 
\in P(S^\cl(\bD)),
\]
where $f_i^{(k)} \in \Cc(\bD)$ is a representative of $[f_i^{(k)}] \in S^\cl(\bD)$.
Then
\begin{align*}
\m_{\phi_1,\phi_2}^\cl(F^{(1)},F^{(2)})
&=\Psi_G(\bD)\HY(\phi_1,\phi_2)
\left(\Psi_G(\bD)^{-1}(F^{(1)}),
      \Psi_G(\bD)^{-1}(F^{(2)})\right)\\
&=W_{\phi_1}^\cl(F^{(1)})\cdot W_{\phi_2}^\cl(F^{(2)})
 + \sum \text{contractions between }
   W_{\phi_1}^\cl(F^{(1)}) 
   \text{ and } 
   W_{\phi_2}^\cl(F^{(2)}).
\end{align*}
Here, by contractions between $W_{\phi_1}^\cl(F^{(1)})$ and $W_{\phi_2}^\cl(F^{(2)})$, we mean all possible pairings of subsets of $\{1,\dots,n\}$ and $\{1,\dots,m\}$ of arbitrary length, contracted via $\pa_{G_d}$.

For $N>0$, set $[N]=\{1,\dots,N\}$.
Writing the contractions explicitly, we obtain
\begin{align*}
\m_{\phi_1,\phi_2}^\cl(F^{(1)},F^{(2)})
&=\sum_{p=0}^{\min\{n,m\}}
\frac{1}{p!}
\sum_{\substack{C_1:[p]\hookrightarrow[n] \\ C_2:[p]\hookrightarrow[m]}}
\prod_{k \in [p]}
\pa_{G_d}\bigl(
  W_{\phi_1}(f_{C_1(k)}^{(1)}), 
  W_{\phi_2}(f_{C_2(k)}^{(2)})
\bigr)\\
&\quad\times
\prod_{i \in [n] \setminus C_1([p])}
W_{\phi_1}(f_i^{(1)})
\prod_{j \in [m] \setminus C_2([p])}
W_{\phi_2}(f_j^{(2)}),
\end{align*}
where $C_1:[p] \hookrightarrow [n]$ and $C_2:[p] \hookrightarrow [m]$ range over all injective maps.
%
For example,
\begin{align*}
\m_{\phi_1,\phi_2}^\cl(f_1f_2,h)&=W_{\phi_1}^\cl(f_1)W_{\phi_1}^\cl(f_2)W_{\phi_2}^\cl(h)\\
&+\pa_{G_d}\left(W_{\phi_1}^\cl(f_2),W_{\phi_2}^\cl(h)\right) W_{\phi_1}^\cl(f_1)+
\pa_{G_d}\left(W_{\phi_1}^\cl(f_1),W_{\phi_2}^\cl(h)\right) W_{\phi_1}^\cl(f_2).
\end{align*}
%
%
Similarly, for $\phi_{[n]} = (\phi_1,\dots,\phi_n) \in \Disk(n)$, 
the multiplication is defined by contractions in the same way.
\begin{thm}\label{thm_classical_disk}
For any $d \geq 3$, the $\Disk$-algebra
$P(S^\cl(\bD))$
satisfies
\begin{itemize}
\item
For any $\phi \in \Disk(1)$,
 $\m_\phi^{\mathrm{cl}} = W_\phi^{\mathrm{cl}}$. In particular,
 $W_\bullet^\cl: \Disk(1) \rightarrow \mathrm{End}(P(S^\cl(\bD)))$ is a monoid homomorphism.
 \item
 For any $\phi_{[n]} = (\phi_1,\dots,\phi_n) \in \Disk(n)$,
\begin{align}
\m_{\phi_{[n]}}^{\mathrm{cl}}(F_1,\dots,F_n)= W_{\phi_1}^\cl(F_1)\cdots W_{\phi_n}^\cl(F_n) + \sum \text{contractions of $W_{\phi_1}^\cl(F_1), \dots, W_{\phi_n}^\cl(F_n)$}
\label{eq_conformal_inv}
\end{align}
for $F_i \in P(S^\cl(\bD))$.
\end{itemize}
\end{thm}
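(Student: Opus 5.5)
The first assertion follows from Proposition~\ref{prop_conformal_commute}. By definition \eqref{def_pullback_algebra}, $\m_\phi^\cl=\Psi_G(\bD)\circ\HY(\phi)\circ\Psi_G(\bD)^{-1}$, and the commutative square there says exactly that this composite equals $W_\phi^\cl$. Since $\HY$ is a functor (Theorem~\ref{prop_functor}), $\phi\mapsto \m^\cl_\phi$ is a monoid homomorphism. Independently, the computation $W_\psi W_\phi=W_{\psi\circ\phi}$ done before Theorem~\ref{prop_functor} descends to $S^\cl(\bD)$, since $W_\phi$ maps $\Delta\Cc(\bD)$ into $\Delta\Cc(\bD)$. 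So $W^\cl_\bullet$ is a homomorphism.

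For the second assertion, I would factor $\phi_{[n]}$ through the monoidal structure. By the symmetric monoidal identification $P(\Cc(\sqcup_n\bD))\cong P(\Cc(\bD))^{\otimes n}$, the map $\HY(\phi_{[n]})$ is induced by
\[
f^{(1)}\otimes\cdots\otimes f^{(n)}\mapsto W^0_{\phi_1}(f^{(1)})\cdots W^0_{\phi_n}(f^{(n)}).
\]
Fix representatives $F_i\in P(\Cc(\bD))$ of classes in $P(S^\cl(\bD))$. Then $\Psi_G(\bD)^{-1}[F_i]=[\exp(-\pa_G)F_i]$, since $\exp(\pa_G)$ is invertible with inverse $\exp(-\pa_G)$ (it is unipotent because $\pa_G$ lowers degree). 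Hence
\[
\m^\cl_{\phi_{[n]}}(F_1,\dots,F_n)=\Big[\exp(\pa_G)\Big(\prod_i W^0_{\phi_i}\exp(-\pa_G)F_i\Big)\Big].
\]

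The key step is to decompose $\pa_G$ acting on a product $A_1\cdots A_n$ as a sum of self-contractions and cross-contractions:
\[
\pa_G(A_1\cdots A_n)=\sum_i A_1\cdots \pa_G(A_i)\cdots A_n+\sum_{i<j}\pa_G^{(ij)}(A_1\cdots A_n),
\]
where $\pa_G^{(ij)}$ contracts one factor of $A_i$ against one factor of $A_j$. All these operators commute, because they are second-order "constant coefficient" operators on a polynomial algebra. Therefore
\[
\exp(\pa_G)=\prod_{i<j}\exp(\pa_G^{(ij)})\prod_i\exp(\pa_G^{(i)}).
\]
Next I would use the identity \eqref{eq_G_d_comp}, namely $\pa_G(W_\phi f,W_\phi g)=\pa_G(f,g)$. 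It gives $\pa_G\circ W^0_{\phi_i}=W^0_{\phi_i}\circ\pa_G$ on $P(\Cc(\bD))$, so $\exp(\pa_G^{(i)})$ cancels $W^0_{\phi_i}\exp(-\pa_G)$ down to $W^0_{\phi_i}$. What remains is $\prod_{i<j}\exp(\pa_G^{(ij)})\big(W_{\phi_1}^0F_1\cdots W_{\phi_n}^0F_n\big)$. Expanding the exponential gives precisely the sum over all partial matchings between distinct blocks, with the $1/p!$ factors absorbing the orderings of injections. This is \eqref{eq_conformal_inv}.

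Finally, the result must be well-defined on classes, i.e.\ independent of the representatives. Proposition~\ref{prop_CG} only needs the Green function property, and cross-contraction values $\pa_G(W_{\phi_i}f,W_{\phi_j}\Delta g)=\int W_{\phi_i}f\,\tW_{\phi_j}g$ vanish since the supports of the two functions lie in the disjoint sets $\phi_i(\bD)$ and $\phi_j(\bD)$. I expect the main obstacle to be the careful bookkeeping of the commuting decomposition of $\exp(\pa_G)$ and the combinatorial identification with the stated contraction sum. The geometric input is entirely contained in \eqref{eq_G_d_comp}.
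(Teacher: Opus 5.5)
Your proposal is correct and follows the same route as the paper. The first part is read off from the commutative square of Proposition~\ref{prop_conformal_commute}, and the higher products come from conjugating $\HY(\phi_{[n]})$ by $\exp(\pm\pa_{G_d})$, where the conformal invariance \eqref{eq_G_d_comp} cancels all self-contractions and leaves only the cross-contractions between different blocks. Your commuting splitting of $\exp(\pa_{G_d})$ into self- and cross-contraction parts (most cleanly defined on $P(\Cc(\bD))^{\otimes n}\cong P(\Cc(\sqcup_n\bD))$, where the block structure is still visible) makes explicit the bookkeeping the paper leaves implicit. Your final well-definedness check is harmless but redundant, since the composite \eqref{def_pullback_algebra} is well defined by construction.
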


\subsection{Harmonic cocycle in two dimensions}\label{sec_central}

In the previous section, we described explicitly the $\Disk$-algebra structure on $P(S^\cl(\bD))$ for $d \geq 3$.
In this section, we consider the case $d=2$, and identify $\R^2$ by $\C$.
We will see that the $\CEt$-algebra structure on $P(S^\cl(\bD))$ can only be obtained after a correction by a harmonic cocycle.

For any injective holomorphic map $\phi:\bD \rightarrow \bD$, set
\begin{align}
H_\phi(z,w) &= \log |\phi(z)-\phi(w)| - \log|z-w| -\ft \log|\phi'(z)| - \ft \log|\phi'(w)|,\\
\label{eq_Green_2d}
\tilde{H}_\phi(z,w) &= \log |\phi(z)-\phi(w)| - \log|z-w|
\end{align}
for $z,w \in \bD \times \bD$ with $z\neq w$. Then, we have:
\begin{prop}\label{prop_Green_2d}
For any $\phi \in \CEt(1)$, the following properties hold:
\begin{enumerate}
\item
$H_\phi(z,w)$ and $\tilde{H}_\phi(z,w)$ are harmonic functions on $\bD \times \bD$.
\item
$H_\phi(z,w)=0$ if and only if $\phi$ is a restriction of an element in $\Conf^+(S^2)$,
that is, there is $\begin{pmatrix}
   a & b \\
   c & d
\end{pmatrix} \in \mathrm{SL}_2\C$ such that $\phi(z) = \frac{az+b}{cz+d}$.
\item
For any $\phi,\psi \in \CEt(1)$,
\begin{align*}
H_{\phi \circ \psi}(z,w)&=H_\psi(z,w)+H_\phi(\psi(z),\psi(w)),\\
\tilde{H}_{\phi \circ \psi}(z,w)&=\tilde{H}_\psi(z,w)+\tilde{H}_\phi(\psi(z),\psi(w))
\end{align*}
for any $z,w \in \bD$.
\end{enumerate}
\end{prop}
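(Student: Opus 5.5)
The plan is to work throughout with the difference quotient
\begin{align*}
Q_\phi(z,w)=\frac{\phi(z)-\phi(w)}{z-w},
\end{align*}
so that $\tilde{H}_\phi=\log|Q_\phi|$ and $H_\phi=\log|Q_\phi|-\ft\log|\phi'(z)|-\ft\log|\phi'(w)|$.

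\emph{Properties of $Q_\phi$.} Since $\phi$ is holomorphic on $\bD$, $Q_\phi$ extends holomorphically to all of $\bD\times\bD$, with $Q_\phi(z,z)=\phi'(z)$. One can see this from $Q_\phi(z,w)=\int_0^1\phi'(w+t(z-w))\,dt$, which makes sense because $\bD$ is convex. Moreover $Q_\phi$ never vanishes: off the diagonal this is injectivity of $\phi$, and on the diagonal it is $\phi'\neq0$ (Theorem \ref{thm_Liouville}). Consequently $\tilde{H}_\phi$ extends to a smooth function on $\bD\times\bD$, with diagonal value $\log|\phi'(z)|$, and $H_\phi$ extends with $H_\phi(z,z)=0$.

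\emph{Proof of (1).} The domain $\bD\times\bD$ is simply connected, so $\log Q_\phi$ exists as a holomorphic function there. Hence $\log|Q_\phi|=\mathrm{Re}\log Q_\phi$ is pluriharmonic, and in particular it is annihilated by the Laplacian $4(\pa_z\pa_{\z}+\pa_w\pa_{\w})$ on $\R^4$. The same argument shows $\log|\phi'(z)|$ and $\log|\phi'(w)|$ are harmonic. Therefore both $\tilde H_\phi$ and $H_\phi$ are harmonic on $\bD\times\bD$.

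\emph{Proof of (3).} For $z\neq w$, injectivity of $\psi$ gives $\psi(z)\neq\psi(w)$, so we may write
\begin{align*}
Q_{\phi\circ\psi}(z,w)=Q_\phi(\psi(z),\psi(w))\,Q_\psi(z,w).
\end{align*}
Taking $\log|\cdot|$ gives the cocycle identity for $\tilde H$. The chain rule $(\phi\circ\psi)'(z)=\phi'(\psi(z))\psi'(z)$ shows that the correction terms $-\ft\log|\phi'(z)|-\ft\log|\phi'(w)|$ satisfy the same telescoping identity. Adding the two gives the identity for $H$. Both sides are continuous on $\bD\times\bD$, so the identities extend to the diagonal.

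\emph{Proof of (2), forward direction.} Let $\phi(z)=\frac{az+b}{cz+d}$ with $ad-bc=1$. A direct computation gives
\begin{align*}
\phi(z)-\phi(w)=\frac{z-w}{(cz+d)(cw+d)},\qquad \phi'(z)=(cz+d)^{-2}.
\end{align*}
Hence $|Q_\phi(z,w)|^2=|\phi'(z)\phi'(w)|$, which is exactly $H_\phi=0$. This is also the case $d=2$ of Proposition \ref{prop_conf_identity}.

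\emph{Proof of (2), converse.} This is the main step. Suppose $H_\phi=0$. Then $F(z,w)=Q_\phi(z,w)^2/(\phi'(z)\phi'(w))$ is holomorphic and nonvanishing on the connected set $\bD\times\bD$ and satisfies $|F|=1$. By the open mapping theorem (or because $\log|F|=0$ forces the harmonic conjugate $\mathrm{Im}\log F$ to be constant), $F$ is a unimodular constant. Evaluating on the diagonal gives $F=1$, so
\begin{align*}
(\phi(z)-\phi(w))^2=\phi'(z)\phi'(w)(z-w)^2 .
\end{align*}
Apply $\pa_z\pa_w\log$ to both sides. The $\log\phi'$ terms are killed by the mixed derivative, leaving
\begin{align*}
\frac{\phi'(z)\phi'(w)}{(\phi(z)-\phi(w))^2}=\frac{1}{(z-w)^2}.
\end{align*}
Expand the left side in $z$ around $w$. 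Its Laurent expansion is $(z-w)^{-2}+\tfrac16\{\phi,w\}+O(z-w)$, where $\{\phi,w\}$ denotes the Schwarzian derivative. Comparing constant terms gives $\{\phi,w\}\equiv0$ on $\bD$. It is classical that a locally injective holomorphic function with vanishing Schwarzian is fractional linear: $\phi''/\phi'$ satisfies a Riccati equation whose solutions force $\phi=\frac{az+b}{cz+d}$, and one may normalize to $\mathrm{SL}_2\C$. The step needing care is this Laurent expansion. If that bookkeeping gets messy, I would instead normalize by a M\"obius map, using (3) and the forward direction, so that $\phi(0)=0$, $\phi'(0)=1$, $\phi''(0)=0$. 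Then I would compare Taylor coefficients in the identity $(\phi(z)-\phi(w))^2=\phi'(z)\phi'(w)(z-w)^2$ at $w=0$ to conclude inductively that $\phi=\id$.
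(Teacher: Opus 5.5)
Your proposal is correct and is essentially the paper's proof. Your $F$ is the reciprocal of the paper's $F_\phi(z,w)=\phi'(z)\phi'(w)(z-w)^2/(\phi(z)-\phi(w))^2$, and the paper likewise gets (1) from $H_\phi=-\frac{1}{2}\log|F_\phi|$ with $F_\phi$ holomorphic, (3) from the chain rule, and the converse in (2) from $|F_\phi|=1$ forcing $F_\phi$ to be constant and then the vanishing of the Schwarzian from the expansion near the diagonal; your write-up is slightly more careful (nonvanishing of $Q_\phi$, fixing the constant to $1$ on the diagonal, treating $\tilde{H}_\phi$ explicitly, and the correct error order in the Schwarzian expansion).
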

\begin{proof}
Since $\phi$ is an injective holomorphic map, $\phi'(z) \neq 0$ for any $z\in \bD$.
Set 
\begin{align*}
F_\phi(z,w)= \frac{\phi'(z)\phi'(w)(z-w)^2}{(\phi(z)-\phi(w))^2},
\end{align*}
which is a holomorphic function on $\bD \times \bD \setminus \{z=w\}$. 
Since $\phi(z)=\phi(w)+(z-w)\phi'(w)+{O}((z-w)^2)$, $F_\phi(z,w)$ defines a holomorphic function on $\bD \times \bD$ and it satisfies
\begin{align}
F_\phi(z,w) = 1+ \frac{1}{6}S(\phi)(z) (w-z)^2+{O}((z-w)^2),\label{eq_Sch}
\end{align}
where $S(\phi)(z) = \frac{\phi^{'''}(z)}{\phi'(z)} - \frac{3}{2} \left(\frac{\phi^{''}(z)}{\phi'(z)}\right)^2$ is the Schwarzian derivative.
Since $H_\phi(z,w)=-\ft \log|F_\phi(z,w)|$, (1) follows.
We will show (2).
For $\phi(z) = \frac{az+b}{cz+d}$, it is easy to see that
\begin{align*}
\left(\frac{az+b}{cz+d}-\frac{aw+b}{cw+d}\right) = \frac{1}{(cz+d)(cw+d)}(z-w).
\end{align*}
Hence, $H_\phi=0$. Conversely, assume $H_\phi=0$.
Since $|F_\phi(z,w)|=1$ and $F_\phi$ is holomorphic, $F_\phi(z,w)$
is a constant function. This implies that by \eqref{eq_Sch} the Schwarzian derivative of $\phi$ is zero on $\bD$. Hence, $\phi \in \Conf^+(S^2)$.
(3) follows from $\log|(\phi\circ\psi)'(z)| = 
\log|\psi'(z)(\phi'\circ\psi)(z)|=
\log|\psi'(z)|+\log|(\phi'(\psi(z))|$.
\end{proof}

Let $\phi \in \CEt(1)$ and $f_1,f_2 \in \Cc(\bD)$. 
Now, we consider \eqref{eq_G_d_comp} in the case $d=2$, then
\begin{align}
\begin{split}
&\pa_{G_2}((\Om_\phi(\phi^{-1}(x))^{-2} f_1\circ \phi^{-1}), (\Om_\phi(\phi^{-1}(x))^{-2} f_2\circ \phi^{-1}))\\
&=-(2\pi)^{-1}\int_{\bD \times \bD}\log|\phi(x')-\phi(y')| f_1(x')f_2(y') \mathrm{vol}_{g_\std}(x') \mathrm{vol}_{g_\std}(y')\\
&=-(2\pi)^{-1}\int_{\bD \times \bD}\left(\log|x'-y'|+\tilde{H}_\phi(x',y')\right)f_1(x')f_2(y') \mathrm{vol}_{g_\std}(x') \mathrm{vol}_{g_\std}(y')\\
&=\pa_{G_2-(2\pi)^{-1} \tilde{H}_\phi}(f_1,f_2)
\end{split}
\label{eq_Weyl_res}
\end{align}
Note that $G_2(x,y)-(2\pi)^{-1}\tilde{H}_\phi(x,y)$ satisfies the assumption of Proposition \ref{prop_CG}, since $\tilde{H}_\phi(x,y)$ is a harmonic function on $\bD \times \bD$.
Therefore, as in Proposition \ref{prop_conformal_commute}, the above computation implies that the following diagram commutes:
\begin{align*}
\begin{array}{ccc}
P(\Cc(\bD))/\Delta_{\R^d,g_\std}^{\mathrm{BV}}
&\overset{W_\phi}{\longrightarrow}& P(\Cc(\bD))/\Delta_{\R^d,g_\std}^{\mathrm{BV}}\\
\downarrow_{\Psi_{G_2-(2\pi)^{-1}\tilde{H}_\phi}(\bD)} & & \downarrow_{\Psi_{G_2}(\bD)} \\
P(S^\cl(\bD)) &\overset{W_\phi^{\mathrm{cl}}}{\longrightarrow}& P(S^\cl(\bD)).
\end{array}
\end{align*}
Hence, for any $\phi \in \Disk(1)$,  $\m_\phi^{\mathrm{cl}}$ in \eqref{def_pullback_algebra} is equal to 
 \begin{align*}
W_\phi^{\mathrm{cl}}\circ \exp(-(2\pi)^{-1}\pa_{\tilde{H}_\phi}):P(S^\cl(\bD)) \rightarrow P(S^\cl(\bD)),
\end{align*}
which gives a monoid homomorphism.

Thus, in two dimensions Proposition \ref{prop_conformal_commute} does not hold, and a correction by a harmonic cocycle is required.
Because of the property in Proposition \ref{prop_Green_2d} (2), it is preferable to use $H_\phi$ rather than $\tilde{H}_\phi$.
Rewriting \eqref{eq_Weyl_res} in terms of $H_\phi$, we obtain
\begin{align}
\begin{split}
&\pa_{G_2}((\Om_\phi(\phi^{-1}(x))^{-2} f_1\circ \phi^{-1}), (\Om_\phi(\phi^{-1}(x))^{-2} f_2\circ \phi^{-1}))\\
&=-(2\pi)^{-1}\int_{\bD \times \bD}\left(\log|x'-y'|+ \ft \log |\phi'(x')\phi'(y')| +H_\phi(x',y')\right)f_1(x')f_2(y') \mathrm{vol}_{g_\std}(x') \mathrm{vol}_{g_\std}(y')\\
&=
\pa_{G_2-(2\pi)^{-1} H_\phi}(f_1,f_2)-\frac{1}{4\pi} \int_{\bD \times \bD}\left(\log |\phi'(x')|+\log |\phi'(y')|\right)f_1(x')f_2(y') \mathrm{vol}_{g_\std}(x') \mathrm{vol}_{g_\std}(y').
\end{split}
\label{eq_Weyl_res2}
\end{align}
Set
\begin{align*}
\Cc(\bD)_0 =\left\{h \in \Cc(\bD) \mid \int_{\bD} h d^2x=0 \right\}.
\end{align*}
If $f_1,f_2 \in \Cc(\bD)_0$, then
\begin{align*}
\int_{\bD \times \bD}&\log |\phi'(y')| f_1(x')f_2(y') \mathrm{vol}_{g_\std}(x') \mathrm{vol}_{g_\std}(y')\\
&=\int_\bD f_1(x')  \mathrm{vol}_{g_\std}(x') \int_{\bD}\log |\phi'(y')| f_2(y')\mathrm{vol}_{g_\std}(y')=0
\end{align*}
Therefore, in this case,
\begin{align}
\pa_{G_2}(W_\phi^\cl(f_1), W_\phi^\cl(f_2))=
\pa_{G_2-(2\pi)^{-1}H_\phi}(f_1,f_2)
\end{align}
holds. 
Note that if $h \in \Cc(\bD)$, then $\int \Delta_x h(x)d^2x =0$. Hence,
\begin{align*}
S_0^\cl(\bD) = \Cc(\bD)_0 / \Delta \Cc(\bD)
\end{align*}
is a codimension $1$ subspace of $S^\cl(\bD)$.
Let $[f] \in S_0^\cl(\bD)$. Then, for any $\phi \in \CEt(1)$,
\begin{align}
\int_{\bD} W_\phi^\cl(f) d^2x &= \int_{\phi(\bD)} \Om_\phi(\phi^{-1}(x))^{-2} f\circ \phi^{-1}(x) d^2x =\int_{\bD} \Om_\phi(y)^{-2} f(y) \Om_\phi(y)^2 d^2y =0,
\label{eq_push_zero}
\end{align}
where $y=\phi^{-1}(x)$.
Thus, $W_\phi^\cl$ preserves $P(S_0^\cl(\bD))$. Similarly, by restricting $\Delta_{\R^d,g_\std}^{\mathrm{BV}}$ to $\Cc(\bD)_0$, we obtain
\begin{align*}
\Delta_{\R^d,g_\std}^{\mathrm{BV}}:P(\Cc(\bD)_0)\otimes \Cc(\bD) \rightarrow P(\Cc(\bD)_0)
\end{align*}
\begin{prop}\label{prop_conformal_commute2}For $\phi \in \CEt(1)$, the following diagram commutes:
\begin{align*}
\begin{array}{ccc}
P(\Cc(\bD)_0)/\Delta_{\R^d,g_\std}^{\mathrm{BV}}
&\overset{W_\phi}{\longrightarrow}& P(\Cc(\bD)_0)/\Delta_{\R^d,g_\std}^{\mathrm{BV}}\\
\downarrow_{\Psi_{G_2-(2\pi)^{-1}H_\phi}(\bD)} & & \downarrow_{\Psi_{G_2}(\bD)} \\
P(S_0^\cl(\bD)) &\overset{W_\phi^{\mathrm{cl}}}{\longrightarrow}& P(S_0^\cl(\bD)).
\end{array}
\end{align*}
\end{prop}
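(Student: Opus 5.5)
The plan is to compare the two composites on a monomial $f_1\cdots f_n$ with $f_i\in\Cc(\bD)_0$, exactly as in the proof of Proposition~\ref{prop_conformal_commute}. First we check that every map in the diagram is well defined on the restricted complex.

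\textbf{Well-definedness of the maps.}
\begin{itemize}
\item $W_\phi$ sends $\Cc(\bD)_0$ into $\Cc(\bD)_0$. This is \eqref{eq_push_zero}, which only uses the change of variables $y=\phi^{-1}(x)$ and $\Om_\phi^2\,d^2y=d^2x$.
\item The $h$-slot of the BV complex is pushed forward by $\tW_\phi$ (in $d=2$, $\tW_\phi(h)=h\circ\phi^{-1}$ extended by zero). By Lemma~\ref{lem_func_well}, $W_\phi^0$ intertwines $\Delta^{\mathrm{BV}}$ on $P(\Cc(\bD)_0)\otimes\Cc(\bD)$ with $\Delta^{\mathrm{BV}}$ on $P(\Cc(\bD)_0)\otimes\Cc(\bD)$.
\item This intertwining is consistent because $\Delta h\in\Cc(\bD)_0$ for any $h\in\Cc(\bD)$, so the image of $\Delta^{\mathrm{BV}}$ does lie in $P(\Cc(\bD)_0)$. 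Hence $W_\phi$ descends to the top arrow.
\item Both $G_2$ and $G_2-(2\pi)^{-1}H_\phi$ are symmetric and $L^1_{\mathrm{loc}}$. They are fundamental solutions, since $H_\phi$ is harmonic on $\bD\times\bD$ by Proposition~\ref{prop_Green_2d}(1) and symmetric by its defining formula. So Proposition~\ref{prop_CG}, whose proof is purely algebraic and applies verbatim to the subcomplex, gives the two vertical isomorphisms onto $P(\Cc(\bD)_0/\Delta\Cc(\bD))=P(S_0^\cl(\bD))$.
\item The bottom arrow $W_\phi^\cl$ preserves $P(S^\cl_0(\bD))$, as already noted.
\end{itemize}

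\textbf{Commutativity.} Both $\Psi_P=\exp(\pa_P)$ and $W_\phi^0$ are algebraic operations on polynomials. Since $W^0_\phi$ is multiplicative, we have
\[
\exp(\pa_{G_2})\circ W_\phi^0=W_\phi^0\circ\exp(\pa_{\phi^*G_2}),
\]
where $\pa_{\phi^*G_2}(f,g):=\pa_{G_2}(W_\phi f,W_\phi g)$. This follows by expanding both sides as sums over partial pairings of $\{1,\dots,n\}$: each pairing contributes the same product of scalar contractions times the same remaining monomial. It therefore suffices to show the bilinear identity
\[
\pa_{G_2}(W_\phi f_1,W_\phi f_2)=\pa_{G_2-(2\pi)^{-1}H_\phi}(f_1,f_2)
\]
for $f_1,f_2\in\Cc(\bD)_0$. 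This identity was established just above, from \eqref{eq_Weyl_res2} together with the vanishing of the $\log|\phi'|$ terms, since each such term factors through $\int f_i\,d^2x=0$.

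\textbf{Passing to the quotient.} The identity holds at the level of $P(\Cc(\bD)_0)$. Projecting to $P(S^\cl_0(\bD))$ and using that $W_\phi^\cl$ is induced by $W^0_\phi$ gives commutativity of the diagram on the quotients.

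\textbf{Main obstacle.} The delicate point is making sure that the log-derivative correction terms are killed only when both arguments of each contraction lie in $\Cc(\bD)_0$. This is exactly why the whole complex must be restricted to $P(\Cc(\bD)_0)$, and why it matters that the image of $\Delta^{\mathrm{BV}}$ stays inside that restricted space.
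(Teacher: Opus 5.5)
Your proposal is correct and follows the paper's argument. The paper likewise reduces commutativity to the bilinear identity $\pa_{G_2}(W_\phi f_1,W_\phi f_2)=\pa_{G_2-(2\pi)^{-1}H_\phi}(f_1,f_2)$, obtained from \eqref{eq_Weyl_res2} with the $\log|\phi'|$ terms vanishing on $\Cc(\bD)_0$, uses \eqref{eq_push_zero} for the restriction, and concludes ``as in Proposition~\ref{prop_conformal_commute}''. Beyond that, you make explicit the intertwining $\exp(\pa_{G_2})\circ W^0_\phi=W^0_\phi\circ\exp(\pa_{\phi^*G_2})$ and the well-definedness checks that the paper leaves implicit. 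One small sharpening: ``harmonic on $\bD\times\bD$'' should be read as harmonic in each variable separately, which holds because $H_\phi=-\tfrac12\log|F_\phi|$ with $F_\phi$ holomorphic and nonvanishing.
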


By \eqref{eq_push_zero}, $P(\Cc(\bD)_0)/\Delta_{\R^d,g_\std}^{\mathrm{BV}}$ is a subalgebra of the $\CEt$-algebra $\HY(\bD)$. We denote it by $\HY(\bD)_0$.
Hence, by \eqref{def_pullback_algebra}, $P(S_0^\cl(\bD))$ inherits a $\CEt$-algebra structure.
As in Theorem \ref{thm_classical_disk}, we have:
\begin{thm}\label{thm_classical_disk2}
The $\CEt$-algebra $P(S_0^\cl(\bD))$
satisfies
\begin{itemize}
\item
For any $\phi \in \CEt(1)$,
 $\m_\phi^{\mathrm{cl}}$ is equal to 
 \begin{align*}
W_\phi^{\mathrm{cl}}\circ \exp(-(2\pi)^{-1}\pa_{H_\phi}):P(S_0^\cl(\bD)) \rightarrow P(S_0^\cl(\bD)).
\end{align*}
\item
 For any $\phi_{[n]} = (\phi_1,\dots,\phi_n) \in \CEt(n)$,
\begin{align}
\begin{split}
\m_{\phi_{[n]}}^{\mathrm{cl}}(F_1,\dots,F_n)&= W_{\phi_1}^\cl(\exp(-(2\pi)^{-1}\pa_{H_{\phi_1}}) F_1)\cdots W_{\phi_n}^\cl(\exp(-(2\pi)^{-1}\pa_{H_{\phi_n}}) F_n) \\
&+ \sum \text{contractions of $W_{\phi_1}^\cl(\exp(-(2\pi)^{-1}\pa_{H_{\phi_1}}) F_1), \dots, W_{\phi_n}^\cl(\exp(-(2\pi)^{-1}\pa_{H_{\phi_n}})F_n)$}
\end{split}
\label{eq_conformal_inv2}
\end{align}
for $F_i \in P(S_0^\cl(\bD))$.
\end{itemize}
\end{thm}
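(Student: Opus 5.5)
The plan is to prove the two items separately. The first follows from Proposition \ref{prop_conformal_commute2}. The second follows by reducing the $n$-ary operation to the unary ones together with the multiplication $\HY(\bD)^{\otimes n}\to\HY(\sqcup_n\bD)\to\HY(\bD)$ of disjoint pieces.

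\emph{Unary operations.} Fix $\phi\in\CEt(1)$. By definition \eqref{def_pullback_algebra}, restricted to the subalgebra $\HY(\bD)_0$, we have $\m_\phi^\cl=\Psi_{G_2}(\bD)\circ\HY(\phi)\circ\Psi_{G_2}(\bD)^{-1}$. Proposition \ref{prop_conformal_commute2} gives $\Psi_{G_2}(\bD)\circ\HY(\phi)=W_\phi^\cl\circ\Psi_{G_2-(2\pi)^{-1}H_\phi}(\bD)$ on $P(\Cc(\bD)_0)/\Delta^{\mathrm{BV}}$. Since $\partial_P$ is linear in the kernel $P$, and second-order operators with scalar-valued pairings commute, we get $\exp(\partial_{G_2-(2\pi)^{-1}H_\phi})=\exp(-(2\pi)^{-1}\partial_{H_\phi})\exp(\partial_{G_2})$. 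Hence $\Psi_{G_2-(2\pi)^{-1}H_\phi}\circ\Psi_{G_2}^{-1}=\exp(-(2\pi)^{-1}\partial_{H_\phi})$.

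This operator descends to $P(S_0^\cl(\bD))$. The reason is that $H_\phi$ is harmonic in each variable (Proposition \ref{prop_Green_2d}), so $\partial_{H_\phi}(f,\Delta h)=0$. One should also check that the relevant diagram of Proposition \ref{prop_conformal_commute2} holds as stated; this is the computation \eqref{eq_Weyl_res2}, where the $\log|\phi'|$ terms vanish because the arguments lie in $\Cc(\bD)_0$.

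\emph{Higher operations.} Let $\phi_{[n]}=(\phi_1,\dots,\phi_n)$. Using Theorem \ref{prop_functor}, write $\HY(\phi_{[n]})$ as follows: first apply $\HY(\phi_i)$ on each factor, then take the product $F_1\cdots F_n$ of the images in $P(\Cc(\bD))$. The images have disjoint supports $\phi_i(\bD)$. Conjugating by $\Psi_{G_2}$, the first step becomes $\m^\cl_{\phi_i}$ on each factor, which is known from the first item. The second step becomes $\exp(\partial_{G_2})$ applied to a product $\exp(-\partial_{G_2})(\tilde F_1)\cdots\exp(-\partial_{G_2})(\tilde F_n)$, where $\tilde F_i=\m^\cl_{\phi_i}F_i=W^\cl_{\phi_i}\exp(-(2\pi)^{-1}\partial_{H_{\phi_i}})F_i$.

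Since $\partial_{G_2}$ is a second-order operator, it acts on a product by the Leibniz-type rule: it equals the sum of the operators acting within each factor plus the cross-contraction terms. Because all these pieces are scalar pairings, they commute. Hence $\exp(\partial)(\prod\exp(-\partial)\tilde F_i)=\exp(\sum_{i<j}\partial^{(ij)})\prod\tilde F_i$. This is exactly $\prod\tilde F_i$ plus all contractions between distinct factors, giving \eqref{eq_conformal_inv2}.

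The main obstacle is bookkeeping rather than analysis. One must make sure the product of representatives in $P(\Cc(\bD)_0)$ is compatible with the quotients. One must also check that $\Psi_{G_2}$ and its inverse are well defined on the disjoint-union pieces. This uses $G_2$ restricted to $\bD\times\bD$ together with the symmetric monoidal structure, exactly as in Theorem \ref{thm_classical_disk}. Beyond this, the argument copies the $d\geq3$ case with the correction inserted.
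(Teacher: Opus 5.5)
Your proposal is correct and follows the paper's (very brief) argument. The unary case is Proposition~\ref{prop_conformal_commute2} together with $\exp(\pa_{G_2-(2\pi)^{-1}H_\phi})=\exp(-(2\pi)^{-1}\pa_{H_\phi})\exp(\pa_{G_2})$, and the $n$-ary case is the same Wick-contraction computation used for Theorem~\ref{thm_classical_disk}, with $\m^\cl_{\phi_i}$ in place of $W^\cl_{\phi_i}$. Your bookkeeping concern is settled because the intertwining identity $\exp(\pa_{G_2})W^0_{\phi}=W^0_{\phi}\exp(\pa_{G_2-(2\pi)^{-1}H_\phi})$ already holds on $P(\Cc(\bD)_0)$; hence the representatives stay supported in $\phi_i(\bD)$ and the cross-contractions are well defined on classes.
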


We end this section with a remark on the conformal anomaly.
Let $\pr_0:
P(S^\cl(\bD))= \bigoplus_{n=0}^\infty \Sym^n(S^\cl(\bD)) \rightarrow \R$ be the projection onto the constant term of the polynomial algebra.
Also, let $\langle \bullet \rangle_{P}:\HY(\bD) \rightarrow \R$ be a linear map given by
\begin{align}
\langle \bullet \rangle_{P}:
\HY(\bD) \overset{\Psi_P(\bD)}{\rightarrow} P(S^\cl(\bD))\overset{\pr_0}{\rightarrow} \R
\label{eq_def_state}
\end{align}
for any $P$ in Proposition \ref{prop_CG}.
The map $\langle \bullet \rangle_{P}$ is called a state, which makes us possible to compute physical quantities such as correlation functions.
Note here that, in order to define a state, we need to choose a boundary condition of the solution $P$
(c.f., (3) in Remark \ref{rem_Green_characterize}).
When such a boundary condition can be chosen in a conformally invariant way, the resulting state is conformally invariant,
which is not the case in two dimensions.
\begin{prop}\label{prop_Weyl_anomaly}
If $d \geq 3$, then for any $\phi \in \Disk(1)$ and $f_1 \dots f_n \in \Cc(\bD)$,
\begin{align*}
\langle f_1 \dots f_n \rangle_{G_d} =\langle \m_\phi^\cl (f_1) \dots \m_\phi^\cl(f_n) \rangle_{G_d}.
\end{align*}
If $d = 2$, then for any $\phi \in \CEt(1)$ and $f_1 \dots f_n \in \Cc(\bD)$,
\begin{align*}
\langle f_1 \dots f_n \rangle_{G_2-(2\pi)^{-1}\tilde{H}_\phi} =\langle \m_\phi^\cl (f_1) \dots \m_\phi^\cl(f_n) \rangle_{G_2}.
\end{align*}
Moreover, if $\phi \in \Conf^+(S^2)$ and $f_1 \dots f_n \in \Cc(\bD)_0$, then 
\begin{align*}
\langle f_1 \dots f_n \rangle_{G_2} =\langle \m_\phi^\cl (f_1) \dots \m_\phi^\cl(f_n) \rangle_{G_2}.
\end{align*}
\end{prop}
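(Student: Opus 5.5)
The plan is to reduce each identity to the commutative squares already established and then compute constant terms. Recall that $\langle F\rangle_P = \pr_0(\exp(\pa_P)F)$. For a monomial $f_1\cdots f_n$, this constant term equals the sum over all perfect matchings of $\{1,\dots,n\}$ of the products $\prod \pa_P(f_i,f_j)$, and it vanishes for odd $n$. Both sides of each claimed identity have exactly this form. Hence it suffices to compare the two-point pairings $\pa_P(f_i,f_j)$ with $\pa_{G}(W_\phi f_i, W_\phi f_j)$.

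Throughout, I read $\m_\phi^\cl(f_1)\cdots\m_\phi^\cl(f_n)$ as the product $W_\phi(f_1)\cdots W_\phi(f_n)=W_\phi^0(f_1\cdots f_n)$ in $P(\Cc(\bD))$. This is the image under $\HY(\phi)$, and on degree-one generators $\m_\phi^\cl$ acts by $W_\phi$ in every case. For $d=2$ the correction $\exp(-(2\pi)^{-1}\pa_{\tilde H_\phi})$ acts trivially on degree one.

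For $d\ge 3$, \eqref{eq_G_d_comp} in the proof of Proposition~\ref{prop_conformal_commute} gives
\[
\pa_{G_d}(W_\phi f_i,W_\phi f_j)=\pa_{G_d}(f_i,f_j)
\]
for all $f_i,f_j\in\Cc(\bD)$. Substituting this into the matching sum yields equality term by term. Equivalently, $\Psi_G\circ\HY(\phi)=W^\cl_\phi\circ\Psi_G$ and $\pr_0\circ W_\phi^\cl=\pr_0$, since $W_\phi^\cl$ preserves polynomial degree and is the identity on degree $0$.

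For $d=2$ and general $\phi\in\CEt(1)$, I would use \eqref{eq_Weyl_res}:
\[
\pa_{G_2}(W_\phi f_i,W_\phi f_j)=\pa_{G_2-(2\pi)^{-1}\tilde H_\phi}(f_i,f_j).
\]
Inserting this into the matching expansion gives the second identity. The same conclusion follows from the commutative square displayed before Proposition~\ref{prop_conformal_commute2}, again composed with $\pr_0$. One point needs checking: $G_2-(2\pi)^{-1}\tilde H_\phi$ is an admissible kernel for Proposition~\ref{prop_CG}. This holds because $\tilde H_\phi$ is harmonic and symmetric by Proposition~\ref{prop_Green_2d}, which is already noted in the text.

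For the Möbius case, take $\phi\in\Conf^+(S^2)$ and $f_i\in\Cc(\bD)_0$. I would use \eqref{eq_Weyl_res2}. The $\log|\phi'|$ terms drop out because each $f_i$ has zero integral, which gives
\[
\pa_{G_2}(W_\phi f_i,W_\phi f_j)=\pa_{G_2-(2\pi)^{-1}H_\phi}(f_i,f_j).
\]
By Proposition~\ref{prop_Green_2d}(2), $H_\phi=0$, so the matching sums coincide. The only real subtlety in the whole argument is this bookkeeping: identifying the state with the matching sum, and confirming that zero mean is needed only to kill the $\log|\phi'|$ terms, which is supplied by the computation after \eqref{eq_Weyl_res2}. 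Beyond that, no step should present an obstacle.
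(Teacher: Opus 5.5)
Your proposal is correct and is essentially the paper's argument. The paper composes the commutative square $\Psi_{G_2}\circ W_\phi = W_\phi^\cl\circ\Psi_{G_2-(2\pi)^{-1}\tilde H_\phi}$ (and its $d\ge 3$ and Möbius analogues) with $\pr_0$ and uses $\pr_0\circ W_\phi^\cl=\pr_0$; this is exactly your stated alternative, and your perfect-matching expansion simply unpacks that square into the two-point identities \eqref{eq_G_d_comp}, \eqref{eq_Weyl_res}, \eqref{eq_Weyl_res2} from which it was built. Your bookkeeping is actually slightly cleaner than the printed proof, which cites Proposition~\ref{prop_conformal_commute2} and writes $H_\phi$ where, for general $f_i\in\Cc(\bD)$, the kernel needed is $\tilde H_\phi$ from \eqref{eq_Weyl_res}, as you use.
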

\begin{proof}
By Proposition \ref{prop_conformal_commute2}, we have
\begin{align*}
\langle \m_\phi^\cl (f_1) \dots \m_\phi^\cl(f_n) \rangle_{G_2}&=\pr_0 \exp(\pa_{G_2}) \m_\phi^\cl (f_1) \dots \m_\phi^\cl(f_n)\\
&=\pr_0  W_\phi^\cl \exp(\pa_{G_2}-(2\pi)^{-1}\pa_{\tilde{H}_\phi}) (f_1 \dots f_n)\\
&=\pr_0  \exp(\pa_{G_2}-(2\pi)^{-1}\pa_{H_\phi}) (f_1 \dots f_n)=\langle f_1 \dots f_n \rangle_{G_2-(2\pi)^{-1}\tilde{H}_\phi}.
\end{align*}
\end{proof}

\section{Observable on $\R^d$ and harmonic functions}\label{sec_obs}
In the previous section we showed that $\HY$ defines a symmetric monoidal functor from $\Emb$ to $\Vect$, and we studied its restriction to $(\R^d,g_\std)$. In particular, via the linear isomorphism
\begin{align*}
\Psi_{G_d}(\bD):\HY(\bD) \overset{\cong}{\rightarrow} P(S^\cl(\bD))
\end{align*}
the space $P(S^\cl(\bD))$ carries the structure of a $\Disk$-algebra. In this section we give an explicit description of
\begin{align*}
S^\cl(U) = \mathrm{cok}(\Delta:\Cc(U) \rightarrow \Cc(U))\qquad (U \subset \R^d)
\end{align*}
in terms of harmonic functions. In Section \ref{sec_compact} we show that $S^\cl(U)$ agrees with $H'(U)$ (the topological dual of the space of harmonic functions on $U$). In Section \ref{sec_expansion}, we give an explicit description of $H'(\bD)$ using the expansion in harmonic polynomials, and show that the space of observables of the prefactorization algebra is embedded into the Hilbert Fock space $\hat{\mathrm{Sym}}(H_\CFT)$, which plays an important role in the axiomatic QFT (see for example \cite{RS2, Arai}).
In section \ref{sec_algebra}, we explicitly describe the $\Disk$-algebra structure on $P(H'(\bD))$.

\subsection{Classical observable as distributions on harmonic functions}\label{sec_compact}
Let $d\geq 2$ and $U \subset \R^d$ be an open subset.
In this section, we will describe $\mathrm{cok}(\Delta:C_c^\infty(U) \rightarrow C_c^\infty(U))$.
Set $\Delta = - \sum_{i=1}^d \pa_i^2$ and 
\begin{align*}
H(U) = \{f\in C^\infty(U) \mid \Delta f =0\},
\end{align*}
the space of Harmonic functions, which is a topological vector space as a subspace of $C^\infty(U)$,
namely, the topology is defined by the family of seminorms
\begin{align}
p_{n,m}(u) = \max_{|\al|\leq m} \sup_{x\in K_n} |\partial^\al u(x)|\label{def_top_harm}
\end{align}
for $u \in H(U)$, where $\{K_n \subset U\}_{n\in \N}$ is compact subsets such that $U= \cup_n K_n$ and $K_n \subset K_{n+1}^\circ$ (see Appendix \ref{sec_app}).
Since $\Delta:C^\infty(U) \rightarrow C^\infty(U)$ is continuous, $H(U) \subset C^\infty(U)$ is a closed subspace.
Denote by $H'(U)$ the topological dual of $H(U)$.

Let $B:\Cc(U) \otimes H(U) \rightarrow \R$
be a bilinear form defined by
\begin{align*}
B(f,u) = \int_U f(x)u(x) d^dx
\end{align*}
for any $f \in \Cc(U)$ and $u\in H(U)$.
Since $|\int_U f(x)u(x) dx^d| \leq (\int_U|f(x)|dx) \sup_{x\in \supp(f)}|u(x)|$,
for any $f \in \Cc(U)$, the linear map $B(f,\bullet): H(U) \rightarrow \R$ is continuous.
Hence, we have a linear map $\Cc(U) \rightarrow H'(U)$.
Moreover, if $g \in \Cc(U)$, then $B(\Delta g,\bullet)=0$ since for any $u \in H(U)$,
\begin{align*}
B(\Delta g, u) = \int_U (\Delta g)(x) u(x) d^dx =\int_U g(x) (\Delta u)(x) d^dx =0.
\end{align*}
Hence, $B$ induces a linear map
\begin{align*}
T_U: \Cc(U) / \Delta \Cc(U) \rightarrow H'(U).
\end{align*}

\begin{thm}\label{thm_compact_isomorphism}
Let $d \geq 2$ and $U \subset \R^d$ be an open subset. Then, 
$T_U: \Cc(U) / \Delta \Cc(U) \rightarrow H'(U)$ is a linear isomorphism.
\end{thm}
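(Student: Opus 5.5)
We prove injectivity and surjectivity of $T_U$ separately. Both reduce to standard facts about $\Delta$ on $\Cc(U)$ and on $\mathcal{D}'(U)$, namely elliptic regularity (Weyl's lemma) and the Hahn--Banach theorem in the Fréchet space $C^\infty(U)$.

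\emph{Injectivity.} Let $f \in \Cc(U)$ with $\int_U f u\, d^dx = 0$ for all $u \in H(U)$. We must produce $g \in \Cc(U)$ with $\Delta g = f$.

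The natural candidate is $g = G_d * f$, the Newtonian potential, which satisfies $\Delta g = f$ on $\R^d$ but need not have compact support in $U$. Outside $\supp f$, $g$ is harmonic. The difficulty is to show that $g$ vanishes on the "unbounded component" of $\R^d\setminus K$ for a suitable compact $K\subset U$, and that $g$ can be corrected on the remaining holes.

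A cleaner route, which I plan to use, is functional-analytic.
\begin{enumerate}
\item Consider the transpose map $\Delta:\mathcal{D}'(U) \to \mathcal{D}'(U)$, which is surjective (Malgrange--Ehrenpreis, together with the solvability of $\Delta$ on every open set).
\item The functional $\ell(u) = \int f u$ is defined on all of $C^\infty(U) = \mathcal{E}(U)$, and $\mathcal{E}'(U) \ni f$.
\item We want $f \in \Delta \mathcal{E}'(U)$. Since $\Delta: \mathcal{E}(U)\to\mathcal{E}(U)$ is surjective (for elliptic constant-coefficient operators, every open set is $P$-convex), its range is closed. By the closed range theorem for Fréchet spaces, the transpose $\Delta:\mathcal{E}'(U)\to\mathcal{E}'(U)$ has weak-$*$ closed range equal to the annihilator of $\ker(\Delta|_{\mathcal{E}(U)}) = H(U)$.
\item Hence $f = \Delta T$ for some $T \in \mathcal{E}'(U)$.
\item By elliptic regularity, $T$ is smooth wherever $\Delta T$ is smooth; here $\Delta T = f$ is smooth everywhere, so $T = g \in \Cc(U)$.
\end{enumerate}
This gives injectivity. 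The paper's appendix on distributions presumably supplies these facts.

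\emph{Surjectivity.} Let $\lambda \in H'(U)$. By Hahn--Banach, extend $\lambda$ to a continuous functional on $C^\infty(U)$, i.e.\ to a compactly supported distribution $T \in \mathcal{E}'(U)$. We then need to replace $T$ by a smooth compactly supported function with the same pairing against harmonic functions.

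Choose $\chi \in \Cc(U)$ with $\chi = 1$ on a neighborhood of $\supp T$, and set $S = G_d * T$, which is smooth away from $\supp T$. Then $T - \Delta(\chi S) = T - \chi T - [\Delta,\chi]S = -[\Delta,\chi]S$. Since $[\Delta,\chi]$ is supported where $\nabla\chi \neq 0$, away from $\supp T$, the right-hand side is a smooth compactly supported function $f$.

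Because $\chi S \in \mathcal{E}'(U)$, for $u \in H(U)$ we get $\langle \Delta(\chi S), u\rangle = \langle \chi S, \Delta u\rangle = 0$. Hence $\langle T,u\rangle = \int f u$, so $T_U([f]) = \lambda$.

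The same mollification trick also simplifies injectivity. Once $f = \Delta T$ with $T\in\mathcal{E}'(U)$, regularity gives $T$ smooth directly. The only genuinely nontrivial input, and the main obstacle, is step 3 of the injectivity argument: closedness of the range of $\Delta$ on $C^\infty(U)$ for arbitrary open $U$. This is Hörmander's theorem on $P$-convexity, which holds for elliptic operators; I would cite it rather than reprove it.
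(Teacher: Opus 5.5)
Your proof is correct. The surjectivity half is essentially the paper's argument (Lemma~\ref{lem_cpt_surj}): a Hahn--Banach extension to $\mathcal{E}'(U)$ followed by a parametrix. The only difference is where the cutoff goes: the paper cuts off the kernel, writing $\Delta(\chi\Psi)=\delta_0+R$ and $\tilde{T}=\Delta(\chi\Psi)*\tilde{T}-R*\tilde{T}$, whereas you cut off the potential $\chi\,(G_d*T)$.

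For injectivity you take a genuinely different route. The paper (Lemma~\ref{lem_cpt_rad}) works directly with $g=\Psi*f$ and shows by hand that $\supp g$ is compact in $U$. For $x_0\notin U$, the function $y\mapsto G(x_0-y)$ and the terms $\norm{y-x_0}^{-(2n+d-2)}Y_{n,k}(y-x_0)$ of its expansion about $x_0$ are harmonic on $U$. So the hypothesis forces $g$ to vanish on a neighbourhood of $\R^d\setminus U$, and the expansion at infinity (using $1\in H(U)$ when $d=2$) makes $g$ vanish outside a ball. You instead cite Malgrange's theorem ($\Delta$ is onto $C^\infty(U)$), deduce $\Delta\mathcal{E}'(U)=H(U)^\perp$ from the open mapping/closed range theorem, and finish with hypoellipticity.

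Your preimage is in fact forced to be the paper's: for $T\in\mathcal{E}'$ one has $T=\Psi*\Delta T=\Psi*f$. So both proofs establish the same fact, namely $\Psi*f\in\Cc(U)$. Moreover, the P-convexity input you cite is itself proved by exactly this kind of support argument, using real analyticity of the solution off $\supp f$. Your version is shorter modulo citations and applies verbatim to any elliptic constant-coefficient operator. The paper's version is elementary, self-contained and explicit.

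Two side remarks. First, your worry about ``correcting on the remaining holes'' in the direct route is unfounded. Bounded components of $\R^d\setminus\supp f$ that do not meet $\R^d\setminus U$ have boundary in $\supp f$, hence closure inside $U$, so $\Psi*f$ needs no correction. Second, your step~1 (surjectivity of $\Delta$ on $\mathcal{D}'(U)$) is never used.
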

To prove this theorem, we need some basic properties of distributions, which are collected in the appendix.
\begin{lem}\label{lem_cpt_surj}
$T_U$ is surjective.
\end{lem}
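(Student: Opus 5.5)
Given $\lambda \in H'(U)$, we must produce $f\in\Cc(U)$ with $\lambda(u)=\int_U fu\,d^dx$ for all $u\in H(U)$. The plan is to extend $\lambda$ to all of $C^\infty(U)$, represent the extension as a compactly supported distribution, and then replace that distribution by a smooth compactly supported function modulo $\Delta$ of something compactly supported.

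\emph{Step 1: extension.} Since $H(U)$ carries the subspace topology from the Fréchet space $C^\infty(U)$, the Hahn--Banach theorem extends $\lambda$ to a continuous functional $\Lambda\in (C^\infty(U))'=\mathcal{E}'(U)$. Concretely, continuity gives a single seminorm bound $|\lambda(u)|\le C\,p_{n,m}(u)$, and we extend dominated by $C\,p_{n,m}$. Thus $\Lambda$ is a distribution with compact support, contained in a neighbourhood of $K_n$. This is the appendix fact identifying $\mathcal{E}'(U)$ with the dual of $C^\infty(U)$.

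\emph{Step 2: smoothing modulo $\Delta$.} We want $g\in\Cc(U)$ and a compactly supported distribution $S\in\mathcal{E}'(U)$ with
\[
\Lambda=g+\Delta S .
\]
Pairing with $u\in H(U)$ then gives $\Lambda(u)=\int gu+S(\Delta u)=\int gu$, so $\lambda=T_U([g])$.

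To construct $g$ and $S$, choose $\chi\in\Cc(U)$ with $\chi=1$ on a neighbourhood $V$ of $\supp\Lambda$. Let $E$ be the fundamental solution of $\Delta$ (the Green function $G_d$), and set $w=E*\Lambda$, a distribution on $\R^d$ with $\Delta w=\Lambda$. By elliptic regularity (Weyl's lemma), $w$ is smooth, indeed harmonic, off $\supp\Lambda$. Define
\[
S=\chi w\in\mathcal{E}'(U).
\]
Then
\[
\Delta S=\chi\Lambda+[\Delta,\chi]w=\Lambda+[\Delta,\chi]w .
\]
The commutator term involves only derivatives of $\chi$, which are supported in $U\setminus V$, where $w$ is smooth. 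Hence $g:=-[\Delta,\chi]w$ lies in $\Cc(U)$, and $\Lambda=g+\Delta S$ as required.

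\emph{Step 3: smoothing $S$.} The remaining issue is that $S$ is only a distribution, while $\Delta\Cc(U)$ consists of Laplacians of smooth functions. This does not matter: $T_U$ is defined directly by integration against $g$, and we only need $\lambda(u)=\int gu$ for harmonic $u$, which follows from $S(\Delta u)=0$. Since $S$ is compactly supported in $U$, $S(\Delta u)=(\Delta S)(u)$ makes sense for every $u\in C^\infty(U)$. So surjectivity follows directly without smoothing $S$.

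\emph{Main obstacle.} The delicate point is Step 2: justifying that $E*\Lambda$ is smooth away from $\supp\Lambda$ and that the commutator $[\Delta,\chi]w$ is a genuine smooth compactly supported function supported inside $U$. For this I would invoke the appendix facts on convolution of $\mathcal{E}'$ with $L^1_{\mathrm{loc}}$ kernels and the hypoellipticity of $\Delta$ (Weyl's lemma, as cited in Remark~\ref{rem_Green_characterize}).
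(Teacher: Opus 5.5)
Your proof is correct. After the same Hahn--Banach step, it reaches the same kind of decomposition as the paper: $\Lambda=\Delta S+g$ with $S$ a compactly supported distribution in $U$ and $g\in\Cc(U)$. The difference is where the cutoff is placed, and hence what guarantees that $g$ is smooth.

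The paper cuts off the \emph{kernel}. With $\chi$ supported in $\overline{B_\ep(0)}$ and equal to $1$ near the origin, it writes $\Delta(\chi\Psi)=\delta_0+R$. Here $R=\Delta((\chi-1)\Psi)$ is smooth and supported in $\overline{B_\ep(0)}$, because $\Psi$ is harmonic away from $0$. It then takes $S=(\chi\Psi)*\tT$ and $g=-R*\tT$. That $g\in\Cc(U)$ follows at once from two facts: a compactly supported distribution convolved with a test function is a test function, and $\supp(R*\tT)\subset K+\overline{B_\ep(0)}$. Only convolutions of compactly supported objects occur, which is exactly what the appendix sets up.

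You instead cut off the \emph{potential} $w=E*\Lambda$ in space. This needs two things. First, the convolution of the non-compactly supported $E$ with $\Lambda$, which is standard but slightly beyond the appendix as stated. Second, as you rightly flag, the regularity of $w$ on $\R^d\setminus\supp\Lambda$, which Weyl's lemma supplies since $\Delta w=\Lambda$ vanishes there. You also observe that $\supp\nabla\chi$ is a compact subset of $U$ disjoint from $\supp\Lambda$. Together these make $g=-[\Delta,\chi]w=-(\Delta\chi)w+2\nabla\chi\cdot\nabla w$ a genuine element of $\Cc(U)$.

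So the paper's route replaces hypoellipticity by the explicit smooth remainder $R$ and never leaves compactly supported distributions. Yours is the familiar cutoff-and-commutator argument, which uses only hypoellipticity of $\Delta$ beyond the existence of a fundamental solution. Your Step 3 is superfluous but harmless: $(\Delta S)(u)=S(\Delta u)=0$ for harmonic $u$ is all that is needed.
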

\begin{proof}
Let $T \in H'(U)$. Then, there is a compact subset $K \subset U$ and $C>0$, $n \in \N$ such that
\begin{align*}
|T(u)| \leq C p_{K,n}(u) = C \max_{|\al|\leq n} \sup_{x \in K}|\partial^{\al}u(x)|
\end{align*}
for any $u \in H(U)$.
Then, by the Hahn-Banach theorem, there is $\tT \in C^\infty(U)'$ such that
$\tT|_{H(U)} =T$ and $|\tT(f)| \leq C p_{K,n}(f)$ for any $f\in C^\infty(U)$.
Then, $\supp(\tT) \subset K \subset U$.

Choose $\ep>0$ sufficiently small so that $K + \overline{B_\ep(0)} \subset U$,
where $B_\ep(0)=\{x\in \R^d \mid \norm{x}<\ep\}$.
Let $\chi$ be a smooth function supported in $\overline{B_\ep(0)}$ and equal to $1$ in a neighborhood of the origin, and let 
\begin{align}
\begin{split}
\Psi(x)=
\begin{cases}
\frac{1}{(d-2)\om_d \norm{x}^{\,d-2}} & d\geq 3\\
-(2\pi)^{-1} \log |x| & d=2
\end{cases}
\end{split}
\label{eq_fundamental_sec2}
\end{align}
be the fundamental solution. Then, $\Psi(x)$ is a locally-integrable function, and thus, $\Psi(x) \in \Cc(\R^d)'$, and $(\chi \Psi)(x)$ is a compactly supported distribution in $\R^d$.
Set $R(x) = \Delta (\chi \Psi) - \delta_x = \Delta ((\chi -1)\Psi) \in C^\infty(\R^d)'$.
Since $(\chi -1)\Psi \in C^\infty(\R^d)$ and $\pa_i \Psi(x)$ is locally integrable \cite[Theorem 3.3.2]{Hormander},
\begin{align*}
R(x) =\Delta (\chi) \Psi + 2\sum_j (\partial_j \chi)(\partial_j \Psi).
\end{align*}
Then, $\mathrm{supp}R \subset \overline{B_\ep(0)}$ and $R(x)$ is smooth, since $\partial_j \chi =0$ around the origin.
Since $\tT$ is compactly supported distributions,
we can regard it as an element in $C^\infty(\R^d)'$. Then, the convolution satisfies
\begin{align}
\Delta(\chi \Psi) * \tT = (\delta_x+R(x)) * \tT = \tT+R * \tT.
\label{eq_C_infty1}
\end{align}
Since $\supp(R*\tT), \supp(\Delta(\chi \Psi) * \tT) \subset K + \overline{B_\ep(0)} \subset U$ by Proposition \ref{prop_conv_distribution},
$\Delta(\chi \Psi) * \tT$ and $R * \tT$ can be regarded as elements in $C^\infty(U)'$.
Namely, let $\eta \in \Cc(U)$ be a function that is identically $1$ on a neighborhood of $K+\overline{B_\ep(0)}$.
Then, for $f \in C^\infty(U)$, the pairing $\langle \Delta(\chi \Psi) * \tT, f \rangle$ is defined by
$\langle \Delta(\chi \Psi) * \tT, \eta f \rangle$. Here $\eta f$ is regarded as an element of $\Cc(\R^d)$ via zero extension (see Proposition \ref{app_dist_compact}).
Then, for any $u \in H(U)$, by Proposition \ref{prop_conv_distribution}, $\langle \Delta(\chi \Psi) * \tT,u  \rangle = \langle \Delta(\chi \Psi * \tT),\eta u  \rangle =\langle \chi \Psi * \tT, \Delta (\eta u)  \rangle$.
Since $\Delta(\eta u)$ vanishes on a neighborhood of $K+\overline{B_\ep(0)}$, 
$\langle \Delta(\chi \Psi) * \tT,u  \rangle = 0$. Hence, we have
\begin{align*}
\langle T,u  \rangle&=\langle \tilde{T},u  \rangle=\langle \Delta(\chi \Psi) * \tT - R* \tT,u  \rangle= - \langle R* \tilde{T},u  \rangle.
\end{align*}
Since $R * \tT \in \Cc(U)$ by Proposition \ref{prop_convolutions}, $T(u) =  - \int  (R *\tT) u d^dx = B(-R*\tT,u)$.
\end{proof}

\begin{lem}\label{lem_cpt_rad}
Assume that $f \in \Cc(U)$ satisfies $B(f,\phi)=0$ for any $\phi \in H(U)$.
Then, there is $h\in \Cc(U)$ such that $f = \Delta h$.
\end{lem}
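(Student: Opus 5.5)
The approach is to build $h$ directly as the Newtonian potential of $f$ and then show that it has compact support in $U$. Set $h=\Psi * f$, where $\Psi$ is the fundamental solution \eqref{eq_fundamental_sec2}. Because $f\in\Cc(U)\subset\Cc(\R^d)$, the function $h$ is smooth on $\R^d$ and satisfies $\Delta h = f$ there, with the same sign convention $\Delta\Psi=\delta_0$ used in Lemma~\ref{lem_cpt_surj}. So the only thing to prove is that $\supp h$ is a compact subset of $U$.

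\textbf{Vanishing of $h$ off $U$.} For $y\notin U$, the function $u_y(x)=\Psi(y-x)$ is harmonic on $U$, so it lies in $H(U)$. The hypothesis then gives
\begin{align*}
h(y)=\int_U \Psi(y-x)f(x)\,d^dx = B(f,u_y)=0 .
\end{align*}
Hence $h$ vanishes on $\R^d\setminus U$.

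\textbf{Vanishing near $\pa U$ and at infinity.} Let $V$ be the union of the unbounded component of $\R^d\setminus\supp f$ with every component of $\R^d\setminus \supp f$ that meets $\R^d\setminus U$. On $\R^d\setminus\supp f$ we have $\Delta h=f=0$, so $h$ is real-analytic there.
\begin{itemize}
\item A component that meets $\R^d\setminus U$ contains a point of $\pa U$ together with points of its exterior. If the exterior of $U$ has nonempty interior in that component, then $h$ vanishes on an open set, and unique continuation gives $h\equiv0$ on the whole component. If $\R^d\setminus U$ is thin there, I would instead vary $y$: the same identity $h(y)=B(f,\Psi(y-\cdot))$ holds as long as $\Psi(y-\cdot)$ is harmonic on $U$. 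The cleanest way to get an open set of vanishing is to use that all derivatives $\pa^\al_y\Psi(y-x)$ are also harmonic in $x$ on $U$. This gives $\pa^\al h(y)=0$ at a boundary point $y\in\pa U\setminus\supp f$, and analyticity then propagates the vanishing to the whole component.
\item On the unbounded component: for $d\ge3$, expand $\Psi(y-x)$ for large $|y|$ in harmonic polynomials in $x$. These polynomials lie in $H(U)$, so they pair to zero with $f$. Hence $h$ vanishes for large $|y|$, and analyticity extends this to the whole unbounded component. For $d=2$, the same works, using the expansion $\log|y-x|=\log|y|-\mathrm{Re}\sum_n x^n/(n y^n)$. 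Here the constant function $1$ kills the $\log|y|$ term, and the polynomials $\mathrm{Re}\,x^n$, $\mathrm{Im}\,x^n$ kill the rest.
\end{itemize}

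Consequently $\supp h\subset \R^d\setminus V$. The complement of $V$ is $\supp f$ together with the bounded "holes" of $\supp f$ that lie entirely inside $U$. This set is compact and contained in $U$, so $h\in\Cc(U)$ and $f=\Delta h$.

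\textbf{Main obstacle.} The delicate step is the component-by-component argument when a hole of $\supp f$ meets $\pa U$ only along a thin piece of $\R^d\setminus U$. The uniform tool I expect to settle it is the derivative trick: for $y\notin U$, every $x\mapsto\pa_y^\al\Psi(y-x)$ is harmonic on $U$, so $h$ vanishes to infinite order at every point of $\R^d\setminus(U\cup\supp f)$, and analyticity spreads this over each such component. An alternative route is to first enlarge $\supp f$ to a compact $K\subset U$ whose complement's components all meet $\R^d\setminus U$ (a Runge-type hull). Then every component of $\R^d\setminus K$ reaches the exterior of $U$, and the derivative argument applies there directly.
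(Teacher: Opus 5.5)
Your proof is correct and follows essentially the paper's route. Both set $h=\Psi*f$ and then show: $h$ vanishes on $\R^d\setminus U$ by pairing $f$ with $\Psi(y-\cdot)\in H(U)$; $h$ vanishes for large $|y|$ via the harmonic-polynomial expansion, with the constant $1$ absorbing $\log|y|$ when $d=2$; and $h$ vanishes near $\overline{U}\setminus U$.

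The only difference is at the boundary. The paper expands $G(x-y)$ around $x_0\in\overline{U}\setminus U$ in the multipole functions $\norm{y-x_0}^{-(2n+d-2)}Y_{n,k}(y-x_0)\in H(U)$, which span the same space as your $\pa_y^\al\Psi(y-\cdot)$ at $y=x_0$. This gives vanishing directly on the ball $B_\rho(x_0)$ with $\rho=\mathrm{dist}(x_0,\supp f)$, so your unique-continuation argument over the components of $\R^d\setminus\supp f$, and the set $V$, are not needed.
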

\begin{proof}
Consider the convolution of $f$ with the fundamental solution $\Psi(x) \in \Cc(\R^d)'$ in \eqref{eq_fundamental_sec2}.
By Proposition \ref{prop_convolutions}, $\Psi * f \in C^\infty(\R^d)$ and satisfies $\Delta(\Psi * f) = f$.
Hence, it suffices to show that $G * f$ has compact support in $U$, where $G(x)=\frac{1}{\norm{x}^{d-2}}$ ($d \geq 3$) and $G(x) = 2\log |x|$ ($d=2$).
Set $K=\mathrm{supp}f \subset U$.
For $x_0\notin U$,
\begin{align*}
(G*f)(x_0)&=\int_{\R^d} G(x_0-y)f(y)d^dy=\int_{K} G(x_0-y)f(y)d^dy.
\end{align*}
Since the restriction of $G(x_0-y)$ on $y \in U$ is harmonic, $(G*f)(x_0)=0$ by the assumption.
Hence, $\supp(G*f) \subset \overline{U}$.

Since $K$ is compact, there is $R>0$ such that $K \subset B_R(0)$.
We claim that $(G*f)(x_0)=0$ for any $\norm{x_0}>R$.
In fact, for $d \geq 3$, by \eqref{eq_Harm_expansion}, $G$ admits an expansion by harmonic polynomials which uniformly convergent in $K$ (see also Proposition \ref{prop_regular_harmonic}).
 Hence, we have:
\begin{align}
\begin{split}
(G*f)(x_0)&=\int_{K} G(x_0-y)f(y)d^dy\\
&=\int_{K} \sum_{n \geq 0} \left(\sum_{k=1}^{\dim \Harm_{d,n}}\frac{c_{d,n}}{\norm{x_0}^{2n+d-2}}Y_{n,k}(x_0)Y_{n,k}(y)\right)f(y)d^dy\\
&=\sum_{n \geq 0}\int_{K}  \left(\sum_{k=1}^{\dim \Harm_{d,n}}\frac{c_{d,n}}{\norm{x_0}^{2n+d-2}}Y_{n,k}(x_0)Y_{n,k}(y)\right)f(y)d^dy\\
&=\sum_{n \geq 0}\left(\sum_{k=1}^{\dim \Harm_{d,n}}\frac{c_{d,n}}{\norm{x_0}^{2n+d-2}}Y_{n,k}(x_0) \int_{K} Y_{n,k}(y)f(y)d^dy \right).
\end{split}
\label{eq_GfR_d3}
\end{align}
Since the harmonic polynomial $Y_{n,k}(y)$ can be extended onto $\R^d$ as a harmonic function, and thus, $Y_{n,k}(y) \in H(U)$.
Hence, $(G*f)(x_0)=0$ for any $\norm{x_0}>R$ and $\supp(G*f) \subset \overline{B_R(0)} \cap \overline{U}$ and $G*f$ is compactly supported.
Similarly, in the case $d=2$, \eqref{eq_GfR_d3} becomes, for $z_0 \in \C$ satisfying $|z_0|>R$,
\begin{align*}
(G*f)(z_0)&=\int_{K} 2\log|z_0(1-w/z_0)|f(w)d^2w\\
&=\int_{K}\left(2 \log|z_0| + \sum_{n \geq 1}\frac{1}{n}((w/z_0)^n+(\bar{w}/\z_0)^n)\right)f(w)d^2w =0.
\end{align*}
Thus, it suffices to prove that $\supp(G*f) \subset U$.
Let $x_0 \in \overline{U} \setminus U$
and 
\begin{align*}
\rho = \inf_{y \in K} \norm{x_0-y}>0.
\end{align*}
Then, in the case of $d \geq 3$, for any $x \in B_{\rho}(x_0)$,
\begin{align*}
(G*f)(x)&=\int_{\R^d} G(x-y)f(y)d^dy=\int_{K} G((x-x_0)-(y-x_0))f(y)d^dy\\
&= \int_K 
\sum_{n \geq 0} \left(\sum_{k=1}^{\dim\Harm_{d,n}}\frac{c_{d,n}}{\norm{y-x_0}^{2n+d-2}}Y_{n,k}(y-x_0)Y_{n,k}(x-x_0)\right)f(y)d^dy\\
&= \sum_{n \geq 0} \left(\sum_{k=1}^{\dim\Harm_{d,n}} Y_{n,k}(x-x_0) \int_K 
\frac{c_{d,n}}{\norm{y-x_0}^{2n+d-2}}Y_{n,k}(y-x_0)f(y)d^dy \right).
\end{align*}
Since $\frac{1}{\norm{y-x_0}^{2n+d-2}}Y_{n,k}(y-x_0)$ is a smooth harmonic function except for $y=x_0$ (see for example, \cite[Theorem 5.18]{ABR}),
$\frac{1}{\norm{y-x_0}^{2n+d-2}}Y_{n,k}(y-x_0) \in H(U)$.
Hence, $(G*f)(x)=0$ for any $x \in B_{\rho}(x_0)$, and thus, $G*f \in \Cc(U)$.
Similarly, in the case of $d =2$, for any $z \in \overline{U} \setminus U$, $z \in B_{\rho}(z_0)$,
\begin{align*}
&(G*f)(z)=\int_{K} 2\log|(z-z_0)-(w-z_0)|f(w)d^2w\\
&= \int_K \left( 2\log|w-z_0| + \sum_{n \geq 1}\frac{1}{n}\left(((z-z_0)/(w-z_0))^n + ((\z-\z_0)/(\w-\z_0))^n \right)\right)f(w)d^2w=0.
\end{align*}
Hence, the assertion holds.
\end{proof}
By Lemma \ref{lem_cpt_surj} and Lemma \ref{lem_cpt_rad}, Theorem \ref{thm_compact_isomorphism} is obtained.

Next, we extend Theorem \ref{thm_compact_isomorphism} to the case of $\Cc(U)_0/\Delta \Cc(U)$ for $d=2$.
Recall
\begin{align*}
\Cc(U)_0=\{f \in \Cc(U) \mid \int f d^dx =0\}.
\end{align*}
Note that $\Cc(U)_0/\Delta \Cc(U)$ is a codimension one subspace of $\Cc(U)/\Delta \Cc(U)$.
Set
\begin{align*}
H'(U)_0 = \{T \in H'(U) \mid T(1_U)=0 \},
\end{align*}
where $1_U$ is the constant function on $U$. Then, for any $f \in \Cc(U)_0$, by definition we have
\begin{align*}
\langle T_U(f), 1_U \rangle = \int_U f d^dx =0
\end{align*}
and hence the image of $\Cc(U)_0/\Delta \Cc(U)$ in $H'(U)$ under $T_U$ is $H'(U)_0$.
\begin{cor}\label{cor_restriction_T}
The restriction of $T_U$ gives a linear isomorphism $\Cc(U)_0/\Delta \Cc(U) \rightarrow H'(U)_0$.
\end{cor}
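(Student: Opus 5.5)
The plan is to deduce the corollary directly from Theorem~\ref{thm_compact_isomorphism} by a codimension-one bookkeeping argument. We already know $T_U:\Cc(U)/\Delta\Cc(U)\to H'(U)$ is a linear isomorphism. Moreover, the discussion preceding the statement shows that $T_U$ maps the subspace $\Cc(U)_0/\Delta\Cc(U)$ into $H'(U)_0$, because $\langle T_U(f),1_U\rangle=\int_U f\,d^dx$.

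This subspace is well defined. For $h\in\Cc(U)$ we have $\int_U \Delta h\, d^dx=0$ by the divergence theorem, equivalently by pairing with the harmonic function $1_U$. Hence $\Delta\Cc(U)\subset\Cc(U)_0$, and the quotient $\Cc(U)_0/\Delta\Cc(U)$ makes sense as a subspace of $\Cc(U)/\Delta\Cc(U)$.

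Injectivity of the restriction is immediate, since it is the restriction of the injective map $T_U$.

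For surjectivity, take $T\in H'(U)_0$. By Theorem~\ref{thm_compact_isomorphism} (specifically Lemma~\ref{lem_cpt_surj}) there is $f\in\Cc(U)$ with $T_U([f])=T$. Then
\[
\int_U f\,d^dx=B(f,1_U)=T(1_U)=0,
\]
so $f\in\Cc(U)_0$, and $T$ lies in the image of the restricted map.

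There is essentially no obstacle here. The only point needing care is that $1_U\in H(U)$, which is clear because constants are harmonic. If $U$ is empty the statement is trivial. One could also remark that both subspaces have codimension one when $U\neq\emptyset$: pick $f\geq 0$ nonzero in $\Cc(U)$, so that $\int f>0$ and $T(1_U)\neq 0$ for $T=T_U([f])$. This remark is not needed for the argument.
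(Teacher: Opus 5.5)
Your proposal is correct and follows essentially the same argument as the paper. The paper likewise deduces the corollary from Theorem~\ref{thm_compact_isomorphism} via the identity $\langle T_U(f),1_U\rangle=\int_U f\,d^dx$. Your surjectivity step, which pulls back $T\in H'(U)_0$ and uses $T(1_U)=0$ to get $\int_U f\,d^dx=0$, just spells out the reverse inclusion that the paper leaves implicit.
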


Let $\al \in \N^d$ and 
\begin{align*}
\partial^\al \delta_0: H(\bD) \rightarrow \C,\qquad
u \mapsto (-1)^{|\al|}(\partial^\al u)(0),
\end{align*}
which defines a continuous linear map.
Since $H(\bD)' \cong \Cc(\bD)/\Delta \Cc(\bD)$, there is a compactly supported function $f$ which represent $\partial^\al \delta_0$.
We end this section by explicitly giving this function.

\begin{lem}\label{lem_dirac}
Let $U \subset \R^d$ be an open subset and $a\in U$.
Let $\ep >0$ with $\overline{B_\ep(a)} \subset U$.
Assume that $f_\ep^a \in \Cc(U)$ satisfies
\begin{enumerate}
\item
$\supp(f_\ep^a) \subset \overline{B_\ep(a)}$.
\item
$f_\ep^a(R x+a) = f_\ep^a(x+a)$ for any $R \in \mathrm{SO}(d)$ and $x \in \overline{B_\ep(0)}$.
\item
$\int_U f_\ep^a(x) d^dx =1$.
\end{enumerate}
Then, $f_\ep^a(x)$ represent $\delta_a$, that is, for any $u \in H(U)$,
\begin{align*}
u(a) = \int_{U} f_\ep^a(x) u(x) d^dx.
\end{align*}
Moreover, $\pa^\al \delta_a$ is represented by $\pa^\al f_\ep^a(x)$ for any $\al \in \N^d$.
\end{lem}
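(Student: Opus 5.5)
The argument rests on the mean value property of harmonic functions. The function $f_\ep^a$ is radial about $a$ by (2), so I write it as $f_\ep^a(a+r\theta)=g(r)$ for $0\le r\le \ep$ and $\theta\in S^{d-1}$. Fix $u\in H(U)$. Since $\supp f_\ep^a\subset \overline{B_\ep(a)}\subset U$ by (1), I use polar coordinates centered at $a$:
\begin{align*}
\int_U f_\ep^a(x)u(x)\,d^dx=\int_0^\ep g(r)\,r^{d-1}\left(\int_{S^{d-1}}u(a+r\theta)\,d\theta\right)dr .
\end{align*}
For $r\le\ep$ the sphere $\{|x-a|=r\}$ lies in $U$, so the spherical mean value property applies. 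It gives $\int_{S^{d-1}}u(a+r\theta)\,d\theta=\om_d\,u(a)$. The right-hand side then becomes $u(a)\,\om_d\int_0^\ep g(r)r^{d-1}dr$. The same polar computation applied to $u\equiv 1$ shows that this last factor is $u(a)\int_U f_\ep^a\,d^dx=u(a)$, using (3). This proves the first claim. Condition (2) is used only to make $f_\ep^a$ a function of $r$ alone. Equivalently, one may average over $\SO(d)$: rotation invariance of $f_\ep^a$ lets one replace $u$ by its rotational average about $a$, and that average equals the constant $u(a)$ on the balls involved.

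For derivatives, let $u\in H(U)$. Then $\pa^\al u$ is again harmonic, because $\Delta$ commutes with constant-coefficient derivatives, so $\pa^\al u\in H(U)$. Applying the first part to $\pa^\al u$ gives
\begin{align*}
(\pa^\al u)(a)=\int_U f_\ep^a(x)\,(\pa^\al u)(x)\,d^dx .
\end{align*}
Next I integrate by parts $|\al|$ times. There are no boundary terms, since $f_\ep^a$ has compact support in $U$, and so
\begin{align*}
\int_U f_\ep^a\,\pa^\al u\,d^dx=(-1)^{|\al|}\int_U (\pa^\al f_\ep^a)\,u\,d^dx .
\end{align*}
Combining the two displays with the convention $\pa^\al\delta_a(u)=(-1)^{|\al|}(\pa^\al u)(a)$ shows that $\pa^\al f_\ep^a$ represents $\pa^\al\delta_a$.

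The only delicate points are that $\overline{B_\ep(a)}\subset U$ is exactly what the mean value property needs, and that the sign convention in $\pa^\al\delta_a$ matches the sign produced by integrating by parts. I do not expect a serious obstacle.
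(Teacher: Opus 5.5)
Your proof is correct and follows the paper's argument: polar coordinates about $a$, rotational invariance to reduce $f_\ep^a$ to a radial profile, the spherical mean value property for $r\le\ep$, and condition (3) to identify the remaining radial factor with $1$. Your treatment of the derivative claim is also correct: you apply the first part to the harmonic function $\pa^\al u$ and integrate by parts, and the resulting sign matches the convention $\pa^\al\delta_a(u)=(-1)^{|\al|}(\pa^\al u)(a)$. In fact this supplies a step the paper's proof leaves implicit.
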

\begin{proof}
Since $u$ is harmonic on $U$ and $\overline{B_\ep(a)} \subset U$,
by the mean value property of harmonic functions \cite[1.4]{ABR},
\begin{align*}
u(a) =\int_{S^{d-1}} u(r\om+a) \frac{d\om}{\om_d}
\end{align*}
for any $0 < r \leq \ep$, where $\om_d$ is the volume of $S^{d-1}$. Hence, using spherical coordinates,
\begin{align*}
 \int_{U} f_\ep^a(x) u(x) d^dx&=
\int_{\overline{B_\ep(a)}} f_\ep^a(x) u(x)dx^d\\
&=\int_{\overline{B_\ep(0)}} f_\ep^a(x+a) u(x+a)dx^d\\
&=\int_{r=0}^\ep \int_{S^{d-1}}  f_\ep^a(r\om+a) u(r\om+a) r^{d-1} d\om dr\\
&=\int_{r=0}^\ep f_\ep^a(r e_1+a)r^{d-1}\int_{S^{d-1}} u(r\om+a) d\om dr\\
&=u(a) \int_{r=0}^\ep f_\ep^a(r e_1+a)\om_d r^{d-1} dr.
\end{align*}
Hence, the assertion follows from
\begin{align*}
\int_{r=0}^\ep f_\ep^a(r e_1+a)\om_d r^{d-1} dr
&= \int_{r=0}^\ep\int_{S^{d-1}} f_\ep^a(r \om+a)r^{d-1}d\om dr\\
&= \int_{B_\ep(0)} f_\ep^a(x+a)d^dx\\
&=\int_U f_\ep^a(x) d^dx =1.
\end{align*}
\end{proof}

Recall the following classical result:
\begin{prop}\cite[Theorem 2.3.4]{Hormander}\label{prop_finite_dist}
Let $x_0 \in U$ and $T:C^\infty(U) \rightarrow \R$ be a distribution supported at $\{x_0\}$.
Then, there is $N>0$ and $a_\al \in \R$ such that
\begin{align}
T= \sum_{|\al|\leq N} a_\al \pa^\al \delta_{x_0}.\label{eq_finite_support_classical}
\end{align}
\end{prop}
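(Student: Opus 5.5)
The plan is the standard argument via the Taylor expansion of test functions at $x_0$, combined with the finite-order estimate enjoyed by every distribution. Since $\supp T=\{x_0\}$ is compact, $T$ extends to $C^\infty(U)$ and is continuous for the seminorms \eqref{def_top_harm}. Hence there are a compact $K\subset U$, a constant $C>0$ and an order $N$ such that
\begin{align*}
|T(f)|\le C\max_{|\al|\le N}\sup_{x\in K}|\pa^\al f(x)|
\end{align*}
for all $f\in C^\infty(U)$. This $N$ is the $N$ in \eqref{eq_finite_support_classical}.

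The key claim is that $T(f)=0$ whenever $\pa^\al f(x_0)=0$ for all $|\al|\le N$. To prove it, fix a cutoff $\chi\in\Cc(\R^d)$ equal to $1$ near $0$ and set $\chi_\ep(x)=\chi((x-x_0)/\ep)$. Because $T$ is supported at $x_0$, we have $T(f)=T(\chi_\ep f)$ for every $\ep>0$. Now estimate $\pa^\al(\chi_\ep f)$ on $B_{c\ep}(x_0)$ with the Leibniz rule. Each derivative falling on $\chi_\ep$ costs a factor $\ep^{-1}$. The vanishing of the Taylor jet of $f$ to order $N$, via Taylor's formula with remainder, gives $|\pa^\beta f(x)|=O(|x-x_0|^{N+1-|\beta|})$. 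Together these yield $\sup|\pa^\al(\chi_\ep f)|=O(\ep^{N+1-|\al|})=O(\ep)$ for $|\al|\le N$. Letting $\ep\to0$ gives $T(f)=0$. This uniform estimate is the only genuinely technical step; I expect it to be routine bookkeeping with the Leibniz rule, and it is the main thing to get right.

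For general $f$, write $f=\chi_1\sum_{|\al|\le N}\frac{\pa^\al f(x_0)}{\al!}(x-x_0)^\al+g$, where $\chi_1$ is a fixed cutoff equal to $1$ near $x_0$. The remainder $g$ has vanishing $N$-jet at $x_0$, so $T(g)=0$ by the claim. By linearity,
\begin{align*}
T(f)=\sum_{|\al|\le N}\pa^\al f(x_0)\,\frac{T\bigl(\chi_1(x-x_0)^\al\bigr)}{\al!}.
\end{align*}
Setting $a_\al=(-1)^{|\al|}T(\chi_1(x-x_0)^\al)/\al!$, to match the sign convention for $\pa^\al\delta_{x_0}$, gives \eqref{eq_finite_support_classical}.
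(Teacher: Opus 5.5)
Your proof is correct. It follows the standard route: a finite order $N$ from the continuity estimate, then vanishing of $T$ on functions with zero $N$-jet at $x_0$ via the rescaled cutoff $\chi_\epsilon$, then Taylor expansion with $a_\al=(-1)^{|\al|}T(\chi_1(x-x_0)^\al)/\al!$ matching the paper's sign convention. This is exactly the argument behind \cite[Theorem 2.3.4]{Hormander}, which the paper cites without giving a proof of its own.
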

Therefore, a distribution supported at $\{0\}$ can only be a finite sum as in \eqref{eq_finite_support_classical}.
However, using the real-analyticity of harmonic functions, one sees that $H'(\bD)$ contains infinite sums of derivatives of the Dirac delta function. In the next section, we will study this expansion of harmonic distributions.

\subsection{Growth conditions of distributions on harmonic functions}\label{sec_expansion}
Let $d \geq 2$.
In this section, using basic properties of harmonic functions defined on $B_r(0)$, we give an explicit description of $H'(B_r(0))$.
The contents of this section are an analogue of the harmonic analysis of $C^\infty(S^{d-1})$ in \cite{EK}.

First, we recall the well-known fact that the $C^\infty$-topology on $H(B_r(0))$ (see Definition \ref{def_topology}) agrees with the topology of uniform convergence on compact sets. 
\begin{prop}\cite[Theorem 2.4]{ABR}
\label{prop_analytic_harm}
Let $r>0$ and $a\in \R^d$.
For any $\al \in \N^d$, there is a constant $C_\al >0$ such that
\begin{align*}
|(\partial^\al u)(a)|\leq \frac{C_\al}{r^{|\al|}}M
\end{align*}
for any harmonic function $u$ on $B_r(a)$ which is bounded by $M>0$ on ${\overline{B_r(a)}}$.
\end{prop}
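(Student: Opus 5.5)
The most economical route to this Cauchy-type estimate is the mean value property on balls, applied to the derivatives of $u$, which are themselves harmonic. The statement is translation invariant, so I take $a=0$. I also rescale by setting $v(x)=u(rx)$. Then $v$ is harmonic on $B_1(0)$, bounded by $M$ on $\overline{B_1(0)}$, and $\partial^\al u(0)=r^{-|\al|}\partial^\al v(0)$. It therefore suffices to show $|\partial^\al v(0)|\le C_\al M$ with $C_\al$ depending only on $\al$ and $d$. I will do this by induction on $|\al|$, proving the stronger uniform statement below.

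\emph{Inductive statement.} For every $\rho>0$, every point $b$, and every $v$ harmonic on $B_\rho(b)$ with $|v|\le M$ there,
\begin{align*}
|\partial^\al v(b)|\le C_\al \rho^{-|\al|} M .
\end{align*}
The case $|\al|=0$ is trivial. For $|\al|=1$, say $\al=e_j$, note that $\partial_j v$ is harmonic because $\Delta$ commutes with $\partial_j$. I apply the volume mean value property \cite[1.4]{ABR} on $B_s(b)$ with $s<\rho$ and then use the divergence theorem:
\begin{align*}
\partial_j v(b)=\frac{d}{\om_d s^d}\int_{B_s(b)}\partial_j v\,d^dx=\frac{d}{\om_d s^d}\int_{\partial B_s(b)} v\,\nu_j\,d\sigma .
\end{align*}
The right-hand side is bounded in absolute value by $\frac{d}{s}M$. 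Letting $s\to\rho$ gives $|\partial_j v(b)|\le d M/\rho$.

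\emph{Inductive step.} Write $\al=\beta+e_j$. I split the radius in half. For every $y\in B_{\rho/2}(b)$ we have $B_{\rho/2}(y)\subset B_\rho(b)$, so the induction hypothesis applied on $B_{\rho/2}(y)$ gives $|\partial^\beta v(y)|\le C_\beta(\rho/2)^{-|\beta|}M$. Now $\partial^\beta v$ is harmonic on $B_{\rho/2}(b)$ and bounded by this quantity, so the first-order estimate on $B_{\rho/2}(b)$ yields
\begin{align*}
|\partial^\al v(b)|\le \frac{2d}{\rho}\,C_\beta\Big(\frac{2}{\rho}\Big)^{|\beta|}M .
\end{align*}
Hence I may take $C_\al=2^{|\al|}d\,C_\beta$, which inductively gives the explicit constant $C_\al=(2d)^{|\al|}$. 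This constant is independent of $\rho$, $b$ and $v$, which is exactly what the statement requires.

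There is no serious obstacle. The one point needing care is the boundary behaviour: the mean value and divergence identities must be applied on balls strictly inside the domain, with boundedness coming from the hypothesis on the closed ball, and then a limit taken. Using $s<\rho$ followed by $s\to\rho$ handles this, since the hypothesis gives the bound $M$ on all of $\overline{B_r(a)}$.
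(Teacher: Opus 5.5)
Your argument is correct and proves the statement in the uniform form that is actually needed: a constant depending only on $\alpha$ and $d$, independent of $r$, $a$ and $u$. The first-order bound $|\partial_j v(b)|\le dM/\rho$, obtained from the volume mean value property and the divergence theorem, is right. The halving induction is right, and you only use the bound on the open ball.

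One correction: the recursion $C_\alpha=2^{|\alpha|}d\,C_\beta$ does not give $(2d)^{|\alpha|}$. Starting from $C_0=1$ it gives
\[
C_\alpha=d^{|\alpha|}\,2^{|\alpha|(|\alpha|+1)/2}.
\]
In fact no estimate with constant $(2d)^{|\alpha|}$ can hold. For $d=2$, $u=\frac{2}{\pi}\arg\frac{1+z}{1-z}$ is harmonic and bounded by $1$ on $\{|z|<1\}$, and $|\partial_y^n u(0)|=\frac{4}{\pi}(n-1)!$ for odd $n$. So the optimal constants grow at least factorially. Since the proposition only asserts that some $C_\alpha$ exists, just drop the closed form.

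As for the route: the paper does not prove this itself but cites \cite[Theorem~2.4]{ABR}, where the estimate comes from the Poisson integral. After reducing to $a=0$, $r=1$, one writes (dilates of) $u$ as Poisson integrals of their boundary values and differentiates under the integral sign. This gives $C_\alpha=\sup_{\zeta\in S^{d-1}}\bigl|(\partial^\alpha P(\cdot,\zeta))(0)\bigr|$, with the dilation parameter sent to $1$ at the end. That proof is immediate once the Dirichlet problem on the ball has been solved, needs no induction, and gives constants of the correct factorial order.

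Your proof is more elementary, using only the mean value property and the divergence theorem. The price is the wasteful growth $2^{|\alpha|^2/2}$ caused by repeated halving. That is harmless here, since the paper only uses the proposition to compare the seminorms $p_s$ with the $C^\infty$ seminorms. If you want the right order from your method, apply the first-order estimate $|\alpha|$ times on balls of radius $\rho/|\alpha|$ instead of halving; this gives $C_\alpha=(d|\alpha|)^{|\alpha|}$.
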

The following is clear from the above proposition:
\begin{cor}
Define a seminorm $p_s$ ($0<s<r$) on $H(B_r(0))$ by
\begin{align}
p_s(u) = \sup_{x\in B_s(0)}|u(x)|\label{eq_app_seminorm_uniform}
\end{align}
The topology on $H(B_r(0))$ defined by $\{p_s\}$ (the topology of uniform convergence on compact sets) is equivalent to the topology defined in Definition \eqref{def_top_harm}.
\end{cor}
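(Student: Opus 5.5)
We must show that the topology on $H(B_r(0))$ generated by the seminorms $p_s$ ($0<s<r$) coincides with the one generated by the seminorms $p_{n,m}$ in \eqref{def_top_harm}. The plan is to prove that each family of seminorms is dominated by finitely many seminorms of the other family, up to constants.

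\textbf{The easy direction.} Fix $0<s<r$. The closed ball $\overline{B_s(0)}$ is a compact subset of $B_r(0)$. For an exhaustion $\{K_n\}$ with $K_n\subset K_{n+1}^\circ$, compactness lets us choose $n$ with $\overline{B_s(0)}\subset K_n$. Then $p_s(u)\le p_{n,0}(u)$ for all $u$. So every $p_s$ is continuous for the $C^\infty$-topology, and this direction uses no harmonicity.

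\textbf{The main step.} Fix $n,m$. The set $K_n$ is compact in $B_r(0)$, so there is $s'<r$ with $K_n\subset \overline{B_{s'}(0)}$. Choose $s$ with $s'<s<r$ and put $\rho=(s-s')/2>0$.

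For each $a\in K_n$ we have $\overline{B_\rho(a)}\subset B_s(0)$. The function $u$ is harmonic on $B_\rho(a)$ and, on $\overline{B_\rho(a)}$, it is bounded by $M=p_s(u)$. Proposition~\ref{prop_analytic_harm} then gives
\[
|(\partial^\alpha u)(a)|\le C_\alpha\,\rho^{-|\alpha|}\,p_s(u).
\]
To apply it literally on a closed ball contained in the domain, one can shrink to radius $\rho/2$ if needed.

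Taking the supremum over $a\in K_n$ and the maximum over $|\alpha|\le m$ gives
\[
p_{n,m}(u)\le \Bigl(\max_{|\alpha|\le m} C_\alpha\,\rho^{-|\alpha|}\Bigr)\,p_s(u).
\]
Hence each $p_{n,m}$ is continuous for the topology generated by the $p_s$.

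\textbf{Conclusion.} The two estimates together show that the identity map is continuous in both directions, so the topologies agree. The only real point is that the constant $C_\alpha$ in Proposition~\ref{prop_analytic_harm} is uniform in the centre $a$. This holds because the constant depends only on $\alpha$ (and $d$), by translation invariance. The main obstacle is therefore just bookkeeping of radii, so that every ball used lies strictly inside $B_s(0)$.
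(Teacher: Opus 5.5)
Your proof is correct and follows the route the paper intends. The paper states this corollary as an immediate consequence of the Cauchy estimates in Proposition~\ref{prop_analytic_harm} and gives no further detail. You supply that detail: the trivial bound $p_s\le p_{n,0}$ once $\overline{B_s(0)}\subset K_n$, and the bound $p_{n,m}\le C\,p_s$ obtained from the Cauchy estimate on balls $\overline{B_\rho(a)}\subset B_s(0)$, with the constant uniform in the centre $a$ by translation invariance.
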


For $n \geq 0$, set
\begin{align*}
\Harm_{d,n} = \{P(x) \in \R[x_1,\dots,x_d]\mid \Delta P(x)=0 \text{ and the total degree of }P(x) \text{ is }n \},
\end{align*}
the space of harmonic polynomials of degree $n$.
\begin{prop}\cite[Corollary 5.34]{ABR}
\label{prop_regular_harmonic}
Let $u$ be a harmonic function on $B_r(0)$. Then, there are unique harmonic polynomials $u_n(x) \in \Harm_{d,n}$ such that
\begin{align*}
u(x) = \sum_{n\geq 0} u_n(x)
\end{align*}
on $B_r(0)$, the series converging absolutely and uniformly on compact subsets of $B_r(0)$.
\end{prop}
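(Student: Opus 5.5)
The plan is to expand $u$ in homogeneous harmonic polynomials via the Poisson integral on a slightly smaller sphere, using the reproducing property of zonal harmonics. Both existence and uniqueness follow from this expansion.

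\emph{Existence.} Fix $0<s<r$. Since $u$ is harmonic on a neighbourhood of $\overline{B_s(0)}$, the Poisson integral formula gives, for $|x|<s$,
\begin{align*}
u(x)=\int_{S^{d-1}} P_s(x,s\zeta)\,u(s\zeta)\,\frac{d\zeta}{\om_d},
\end{align*}
where $P_s(x,y)=\dfrac{s^{d-2}(s^2-|x|^2)}{|x-y|^d}$ for $|y|=s$. The key input is the zonal harmonic expansion of the Poisson kernel,
\begin{align*}
P_1(x,\zeta)=\sum_{n\geq 0} Z_n(x,\zeta),
\end{align*}
where $Z_n(\cdot,\zeta)\in\Harm_{d,n}$ is the degree-$n$ zonal harmonic. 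We also need the standard bound $|Z_n(x,\zeta)|\leq \dim\Harm_{d,n}\,|x|^n$. Since $\dim\Harm_{d,n}=O(n^{d-2})$, the series converges absolutely and uniformly for $|x|\leq t<1$ and $\zeta\in S^{d-1}$.

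Rescaling to radius $s$ and integrating term by term, which is justified by uniform convergence, we set
\begin{align*}
u_n(x)=\int_{S^{d-1}} Z_n(x/s,\zeta)\,u(s\zeta)\,\frac{d\zeta}{\om_d}.
\end{align*}
Each $u_n$ lies in $\Harm_{d,n}$, being an average of elements of $\Harm_{d,n}$, and $u=\sum_n u_n$ on $B_s(0)$. The bound
\begin{align*}
|u_n(x)|\leq \dim\Harm_{d,n}\,(|x|/s)^n\,\sup_{\partial B_s}|u|
\end{align*}
gives absolute and uniform convergence on $\overline{B_t(0)}$ for every $t<s$.

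\emph{Independence of $s$ and compact subsets.} Uniqueness, proved next, shows that the $u_n$ do not depend on $s$. Any compact subset of $B_r(0)$ lies in some $\overline{B_t(0)}$ with $t<s<r$, so the convergence statement holds on all compact subsets of $B_r(0)$.

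\emph{Uniqueness.} Suppose $\sum_n v_n=0$ with $v_n\in\Harm_{d,n}$, converging uniformly near the origin. Restrict to the line $x=\lambda\zeta$ for a fixed unit vector $\zeta$. The function $\sum_n \lambda^n v_n(\zeta)$ is then a convergent power series in $\lambda$ that vanishes identically. Hence $v_n(\zeta)=0$ for all $\zeta$ and all $n$, and by homogeneity $v_n=0$.

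Alternatively, uniqueness follows from orthogonality of spherical harmonics of different degrees in $L^2(S^{t})$.

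\emph{Main obstacle.} The hard step is the zonal expansion of the Poisson kernel together with the growth bound on $Z_n$. To obtain them, I would expand $|x-\zeta|^{-d}$ via Gegenbauer polynomials (the generating function $(1-2\lambda t+\lambda^2)^{-d/2}$). Alternatively, I would take them directly from \cite{ABR}, Chapter~5, as the paper already cites this source for the expansion \eqref{eq_Harm_expansion}.
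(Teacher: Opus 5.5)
Your argument is correct. The paper itself gives no proof beyond citing \cite[Corollary 5.34]{ABR}, and there the result is obtained the same way: from the zonal-harmonic expansion of the Poisson kernel with the bound $|Z_n(x,\zeta)|\le \dim\Harm_{d,n}\,|x|^n$, followed by term-by-term integration on a smaller ball and an exhaustion of $B_r(0)$. Your uniqueness step (restriction to lines, equivalently identifying $u_n$ with the degree-$n$ part of the Taylor series at $0$) cleanly supplies the independence of the $u_n$ from the chosen radius.
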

\begin{rem}
Although harmonic functions are real-analytic, Proposition \ref{prop_regular_harmonic} does not mean that the radius of convergence of the expansion of $u(x)$ as a real-analytic function is $B_r(a)$.
Indeed,
\begin{align*}
1/(1-z)=1/(1-x-iy) =\sum_{n\geq 0}\sum_{k=0}^n \binom{n}{k}x^k (iy)^{n-k}
\end{align*}
is a harmonic function on $\{|z|<1\}$, but it converges absolutely only when $|x|+|y|<1$.
\end{rem}
By the above proposition, the space of harmonic functions embeds into the space of sequences of harmonic polynomials:
\begin{align}
\T_H:H(B_r(0)) \rightarrow \prod_{n \geq 0}\Harm_{d,n},\qquad u \mapsto (u_n)_{n\geq 0}\label{eq_app_exp_harm}
\end{align}
The purpose of this section is to give characterizations of spaces such as $H(B_r(0))$ and its topological dual in terms of $(u_n)_{n\geq 0} \in \prod_{n \geq 0}\Harm_{d,n}$.

Set $\Harm_d = \bigoplus_{n \geq 0}\Harm_{d,n}$.
By restricting a harmonic polynomial $u \in \R[x_1,\dots,x_d]$ to $S^{d-1}$, we obtain the following inner product $(-,-)_{L^2(S^{d-1})}: \Harm_{d} \otimes \Harm_{d} \rightarrow \R$:
%
\begin{align}
(u,v)_{L^2(S^{d-1})}= \int_{S^{d-1}} {u(\xi)}{v(\xi)} d\si(\xi)\label{eq_def_L2}
\end{align}
for $u,v \in \Harm_{d}$. Here we normalize $d\si(\xi)$ so that it satisfies $\int_{S^{d-1}} d\si(\xi)=1$.
The completion of $\Harm_d$ with respect to $(-,-)_{L^2(S^{d-1})}$ is $L^2(S^{d-1},d\si)$, and
\begin{align*}
L^2(S^{d-1},d\si) = \hat{\bigoplus_{n \geq 0}} \Harm_{d,n}
\end{align*}
gives an orthogonal decomposition of the Hilbert space.

\begin{lem}\label{lem_app_compare_norm}
Set $A_{d,n} = \dim_\R \Harm_{d,n}$, which is a degree $d-2$ polynomial of $n$. For any $u_n \in \Harm_{d,n}$,
\begin{align*}
\mathrm{sup}_{\xi \in S^{d-1}} |u_n(\xi)| &\leq \sqrt{A_{d,n}} \norm{u_n}_{L^2(S^{d-1})}\\
\norm{u_n}_{L^2(S^{d-1})} &\leq \mathrm{sup}_{\xi \in S^{d-1}}|u_n(\xi)|.
\end{align*}
\end{lem}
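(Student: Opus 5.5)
The second inequality is immediate from the normalization of $d\si$. Since $\int_{S^{d-1}} d\si =1$, we have
\begin{align*}
\norm{u_n}_{L^2(S^{d-1})}^2=\int_{S^{d-1}}|u_n(\xi)|^2 d\si(\xi)\leq \sup_{\xi\in S^{d-1}}|u_n(\xi)|^2 .
\end{align*}
So all the work is in the first inequality, which I would prove using the reproducing kernel of the finite-dimensional space $\Harm_{d,n}$.

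Restriction to the sphere is injective on $\Harm_{d,n}$, since a homogeneous polynomial is determined by its values on $S^{d-1}$. Hence $(\Harm_{d,n},(-,-)_{L^2(S^{d-1})})$ is a real inner product space of dimension $A_{d,n}$. Choose an orthonormal basis $Y_1,\dots,Y_{A}$ with $A=A_{d,n}$, and define the zonal kernel
\begin{align*}
Z_\xi(\eta)=\sum_{k=1}^{A} Y_k(\xi)Y_k(\eta).
\end{align*}
This kernel reproduces: $u_n(\xi)=(u_n,Z_\xi)_{L^2(S^{d-1})}$ for every $u_n\in\Harm_{d,n}$. Cauchy--Schwarz then gives
\begin{align*}
|u_n(\xi)|\leq \norm{u_n}_{L^2}\,\norm{Z_\xi}_{L^2}=\norm{u_n}_{L^2}\Big(\sum_k Y_k(\xi)^2\Big)^{1/2}.
\end{align*}

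The key step, and the only nontrivial one, is to show that $\sum_k Y_k(\xi)^2$ is independent of $\xi$. The space $\Harm_{d,n}$ is invariant under $\mathrm{O}(d)$ because the Laplacian commutes with rotations, and $d\si$ is rotation invariant. So for $R\in\mathrm{O}(d)$ the functions $Y_k\circ R$ again form an orthonormal basis, related to the $Y_k$ by an orthogonal matrix. Consequently $\sum_k Y_k(R\xi)^2=\sum_k Y_k(\xi)^2$. Since $\mathrm{O}(d)$ acts transitively on $S^{d-1}$, the sum equals a constant $c$. To identify $c$, integrate over the sphere:
\begin{align*}
c=\int_{S^{d-1}} c\, d\si=\sum_k \norm{Y_k}_{L^2}^2=A_{d,n}.
\end{align*}
Substituting this into the Cauchy--Schwarz bound gives $\sup_\xi|u_n(\xi)|\leq\sqrt{A_{d,n}}\norm{u_n}_{L^2}$.

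Finally, the claim that $A_{d,n}$ is a polynomial in $n$ of degree $d-2$ follows from the standard dimension formula
\begin{align*}
A_{d,n}=\binom{n+d-1}{d-1}-\binom{n+d-3}{d-1}.
\end{align*}
This formula comes from the decomposition of homogeneous polynomials of degree $n$ as $\Harm_{d,n}\oplus |x|^2\cdot(\text{homogeneous polynomials of degree } n-2)$ \cite{ABR}. The difference of the two binomial coefficients has leading term of order $n^{d-2}$, with coefficient $2/(d-2)!$ for $d\geq3$, and equals $2$ for $d=2$ and $n\geq1$.
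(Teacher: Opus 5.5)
Your proof is correct. The second inequality is exactly the paper's argument via the normalization $\int_{S^{d-1}} d\si = 1$. For the first inequality the paper only cites \cite[Proposition 2.6]{MLeft}, and your reproducing-kernel argument is the standard self-contained proof of that cited bound: rotation invariance makes $\sum_k Y_k(\xi)^2$ constant, integration against $d\si$ identifies the constant as $A_{d,n}$, and Cauchy--Schwarz finishes. Your dimension formula also justifies the polynomial-growth claim, including the caveat that for $d=2$ one has $A_{2,0}=1$ and $A_{2,n}=2$ only for $n\geq 1$.
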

\begin{proof}
The first inequality can be found in \cite[Proposition 2.6]{MLeft}. The second follows from
\begin{align*}
\norm{u_n}_{L^2(S^{d-1})}^2= \int_{S^{d-1}} |u_n(\xi)|^2 d\si(\xi)\leq  (\mathrm{sup}_{\xi \in S^{d-1}}|u_n(\xi)|)^2.
\end{align*}
\end{proof}
We will denote $\mathrm{sup}_{\xi \in S^{d-1}}|u(\xi)|$ by $\norm{u}_{S^{d-1},\infty}$.
Define a family of seminorms $q_s$ ($s >0$) on the sequence space $(u_n)_n \in \prod_{n \geq 0} \Harm_{d,n}$  by
\begin{align*}
q_s((u_n)_n) = \sum_{n\geq 0} s^n \norm{u_n}_{L^2(S^{d-1})}.
\end{align*}

\begin{prop}\label{prop_equivalence}
For a sequence $(u_n)_n \in \prod_{n \geq 0} \Harm_{d,n}$, the following conditions are equivalent:
\begin{enumerate}
\item
$\sum_{n \geq 0} u_n(x)$ is an expansion of a harmonic function on $B_r(0)$;
\item
$q_s((u_n)_n)<\infty$ for any $s$ with $r>s>0$.
\end{enumerate}
Moreover, the family of seminorms $\{q_s\}_{r>s>0}$ is equivalent to \eqref{eq_app_seminorm_uniform}.
\end{prop}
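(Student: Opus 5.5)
The plan is to compare sup norms on spheres with $q_s$ using homogeneity and Lemma~\ref{lem_app_compare_norm}, and to control the sup norm of a harmonic function on a ball through its homogeneous components. Two facts do the work. First, each $u_n \in \Harm_{d,n}$ is homogeneous of degree $n$, so
\[
\sup_{|x|\le s}|u_n(x)| = s^n \norm{u_n}_{S^{d-1},\infty}.
\]
Second, the $u_n$ are mutually orthogonal in $L^2(S^{d-1})$.

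For (2)$\Rightarrow$(1) and for bounding the uniform seminorms by the $q_s$, fix $0<t<s<r$. Lemma~\ref{lem_app_compare_norm} gives
\[
\sup_{|x|\le t}|u_n(x)| \le t^n\sqrt{A_{d,n}}\,\norm{u_n}_{L^2(S^{d-1})}.
\]
Since $A_{d,n}$ is polynomial in $n$, there is a constant $C_{t,s}$ with $t^n\sqrt{A_{d,n}}\le C_{t,s}s^n$ for all $n$. Hence $\sum_n u_n$ converges absolutely and uniformly on $\overline{B_t(0)}$ whenever $q_s<\infty$. A locally uniform limit of harmonic polynomials is harmonic on $B_r(0)$, which gives (1). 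The same estimate yields $p_t(u)\le C_{t,s}\,q_s((u_n)_n)$.

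For (1)$\Rightarrow$(2) and the reverse bound, let $u$ be harmonic on $B_r(0)$ with expansion $\sum u_n$ from Proposition~\ref{prop_regular_harmonic}, and fix $s<t<r$. On the sphere of radius $t$ the series converges uniformly, so orthogonality lets us extract each component:
\[
t^n\norm{u_n}_{L^2(S^{d-1})} = \norm{u_n(t\,\cdot)}_{L^2(S^{d-1})} \le \norm{u(t\,\cdot)}_{L^2(S^{d-1})} \le \sup_{|x|\le t}|u(x)| = p_t(u),
\]
where the first inequality is Bessel's inequality and the second uses $\int d\si=1$. Therefore
\[
q_s((u_n)_n) \le \sum_n (s/t)^n\, p_t(u) = \frac{t}{t-s}\,p_t(u) < \infty .
\]
Combining the two directions, every $q_s$ is dominated by some $p_t$ and every $p_t$ by some $q_s$, so the families $\{q_s\}_{0<s<r}$ and $\{p_s\}_{0<s<r}$ define the same topology. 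By the preceding corollary, this is also the topology of \eqref{def_top_harm}.

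The only delicate step is justifying the orthogonal-projection identity, i.e.\ that the $L^2(S^{d-1})$ projection of $u(t\,\cdot)$ onto $\Harm_{d,n}$ equals $t^n u_n$. This holds because the series converges uniformly on $|x|=t<r$ by Proposition~\ref{prop_regular_harmonic}, so it may be integrated term by term against elements of $\Harm_{d,n}$.
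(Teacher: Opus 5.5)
Your proof is correct and follows the paper's argument essentially step by step. Homogeneity together with Lemma~\ref{lem_app_compare_norm} and the polynomial growth of $A_{d,n}$ gives locally uniform convergence via the Weierstrass M-test and the bound $p_t\le C\,q_s$. Orthogonality of the homogeneous components on the sphere of radius $t$ gives $q_s\le \frac{t}{t-s}\,p_t$; the paper phrases this step via Parseval, while you use Bessel and explicitly justify the term-by-term projection, a small point the paper leaves implicit.
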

\begin{proof}
Assume (2) holds. Then, by $\sup_{|x| \leq s}|u_n(x)| =s^n \sup_{|x| =1}|u_n(x)|$, for any $t<r$,
\begin{align}
\sum_{n \geq 0} \sup_{|x| \leq t}|u_n(x)| &=\sum_{n \geq 0} t^n \norm{u_n(x)}_{S^{d-1},\infty}\leq \sum_{n \geq 0} t^n \sqrt{A_{d,n}} \norm{u_n(x)}_{L^2(S^{d-1})}.
\label{eq_seminorm_equiv}
\end{align}
Since $\sqrt{A_{d,n}}$ has polynomial growth in $n$, the condition $q_s(u)<\infty$ for $t<s<r$ implies that
$\sum_{n \geq 0} \sup_{|x| \leq t}|u_n(x)|$ is bounded. Hence, by the Weierstrass M-test,
$\sum_{n\geq 0}u_n(x)$ converges uniformly on compact sets in $B_r(0)$. Since the uniform limit of harmonic functions on compact sets is harmonic by \cite[Theorem 1.23]{ABR}, (1) follows.

Conversely, assume that there is $u(x) \in H(B_r(0))$ such that $u(x)=\sum_{n\geq 0}u_n(x)$ as in Proposition \ref{prop_regular_harmonic}.
Let $s$ with $0<s<r$. Take $\rho>0$ with $r>\rho>s$.
Then, $u(\rho \xi)$ is a continuous function on $\xi \in S^{d-1}$, and thus,
\begin{align*}
\norm{u(\rho \bullet)}_{L^2(S^{d-1})}^2 = \sum_{n\geq 0} \rho^{2n} \norm{u_n}_{L^2(S^{d-1})}^2 \geq \rho^{2n} \norm{u_n}_{L^2(S^{d-1})}^2
\end{align*}
for any $n \geq 0$. Hence, 
\begin{align*}
q_s(u)= \sum_{n\geq 0} s^n \norm{u_n}_{L^2(S^{d-1})} \leq  \sum_{n\geq 0}(s/\rho)^n \norm{u(\rho \bullet)}_{L^2(S^{d-1})}
\leq \frac{1}{1-s/\rho}\norm{u(\rho \bullet)}_{L^2(S^{d-1})}<\infty.
\end{align*}
Moreover, $q_s(u) \leq \frac{1}{1-s/\rho}\norm{u(\rho \bullet)}_{L^2(S^{d-1})} \leq \frac{1}{1-s/\rho}\norm{u(\rho \bullet)}_{S^{d-1},\infty}
\leq \frac{1}{1-s/\rho}p_\rho(u)$, and \eqref{eq_seminorm_equiv} implies that there is $t<r$ and $C>0$ such that $p_s(u) \leq C q_t(u)$ for any $u \in H(B_r(0))$.
Hence, the equivalence of the topology follows.
\end{proof}
\begin{lem}
Let $r>0$ and let $(a_n)_{n\geq 0}$ be a sequence of real numbers.
Then the following are equivalent:
\begin{enumerate}
  \item[(1)] For any $s$ with $0 < s < r$, the series
$\sum_{n=0}^\infty s^n |a_n|$ converges.
  \item[(2)] For any $s$ with $0 < s < r$, $\sup_{n\geq 0} s^n |a_n| < \infty$.
\end{enumerate}
\end{lem}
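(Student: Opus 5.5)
The two implications are proved separately; each is a one-line comparison with a geometric series.

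For (1) $\Rightarrow$ (2), fix $s$ with $0<s<r$. By (1) the series $\sum_{n\ge 0} s^n|a_n|$ converges, so its terms tend to zero. In particular they are bounded, and so
\begin{align*}
\sup_{n\geq 0} s^n|a_n| \leq \sum_{n\geq 0} s^n |a_n| <\infty .
\end{align*}
This is exactly (2) at the radius $s$.

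For (2) $\Rightarrow$ (1), fix $s$ with $0<s<r$ and choose an intermediate radius $t$ with $s<t<r$, for instance $t=(s+r)/2$. Applying (2) at the radius $t$ gives a constant $C_t=\sup_n t^n|a_n|<\infty$. Then for every $n$,
\begin{align*}
s^n|a_n| = (s/t)^n\, t^n|a_n| \leq C_t (s/t)^n .
\end{align*}
Since $0<s/t<1$, comparison with the geometric series yields
\begin{align*}
\sum_{n\ge 0} s^n|a_n| \leq \frac{C_t}{1-s/t} <\infty ,
\end{align*}
which is (1) at the radius $s$.

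The only point needing care is in (2) $\Rightarrow$ (1): one must use that (2) holds at a strictly larger radius $t<r$. This is available precisely because both conditions are stated for all $s$ in the open interval $(0,r)$. The resulting estimate $\sum_n s^n|a_n| \le C_t(1-s/t)^{-1}$ also shows that the families $\{\sum_n s^n|\cdot|\}_{0<s<r}$ and $\{\sup_n s^n|\cdot|\}_{0<s<r}$ are mutually dominated, after shifting the radius. This is the form presumably needed later to compare $q_s$ with sup-type seminorms on $\prod_{n}\Harm_{d,n}$ by taking $a_n=\norm{u_n}_{L^2(S^{d-1})}$.
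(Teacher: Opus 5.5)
Your proof is correct and takes essentially the same approach as the paper. For (1)$\Rightarrow$(2) you use that the terms of a convergent series are bounded. For (2)$\Rightarrow$(1) you pick an intermediate radius $s<t<r$ and compare with the geometric series $\sum_n (s/t)^n$, which gives the same bound $C/(1-s/t)$ as the paper.
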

\begin{proof}
Assume (1) holds. Fix $s$ with $0 < s < r$. By assumption, the series $\sum_{n=0}^\infty s^n |a_n|$
converges. Hence, $s^n |a_n| \to 0$ if $n\to\infty$.
In particular, the sequence $(s^n a_n)_{n\geq 0}$ is bounded. Hence, (2) holds.
Conversely, assume (2) holds.
Fix $s$ with $0 < s < r$. Choose $t$ such that $s < t < r$.
Applying (2), we obtain a constant $C>0$ such that $t^n |a_n| \leq C$ for any $n\geq 0$.
Hence, $\sum_{n=0}^\infty s^n |a_n| \leq \sum_{n=0}^\infty (s/t)^n t^n |a_n| \leq C \frac{1}{1-s/t} <\infty$.
\end{proof}

Hence, we can characterize $H(B_r(0))$ as
\begin{align}
H(B_r(0)) = \left\{(u)_{n\geq 0} \in \prod_{n \geq 0}\Harm_{d,n} \mid \sup_{n\geq 0}s^n \norm{u_n}_{L^2(S^{d-1})} < \infty \text{ for any }0<s<r \right\}.
\label{eq_app_harm_seq}
\end{align}

Next, we will study $H'(B_r(0))$. Let $T:H(B_r(0)) \rightarrow \R$ be a continuous linear map.
Since $\Harm_{d,n} \subset H(B_r(0))$, by restriction, we have a linear map $T|_{\Harm_{d,n}}:\Harm_{d,n} \rightarrow \R$. Hence, there are unique harmonic polynomials $t_n \in \Harm_{d,n}$ such that
\begin{align*}
(t_n,u_n)_{L^2(S^{d-1})} = T(u_n)\qquad\text{ for any }u_n \in \Harm_{d,n}.
\end{align*}
It is clear that $\pr_n:H(B_r(0)) \rightarrow \Harm_{d,n}$ is continuous, where $\Harm_{d,n}$ is a Hilbert space by $(-,-)_{L^2}$. Hence, $(t_n,\bullet)_{L^2(S^{d-1})}\circ \pr_n: H(B_r(0)) \rightarrow \R$ is a continuous map, which is simply denoted by $t_n$. Note that $t_n$ coincides with 
$\sum_{|\al|=n}a_\al \partial^\al \delta_0$ for some $a_\al \in \R$.

\begin{prop}\label{prop_equiv_dist}
For a sequence $(t_n)_n \in \prod_{n \geq 0} \Harm_{d,n}$, the following conditions are equivalent:
\begin{enumerate}
\item
There is a continuous linear map $T:H(B_r(0)) \rightarrow \R$ such that $(t_n,\bullet)_{L^2(S^{d-1})} = T \circ \pr_n$ for all $n \geq 0$.
\item
There is $\rho >0$ such that $r > \rho$ and 
\begin{align*}
\sup_{n \geq 0} \rho^{-n} \norm{t_n}_{L^2(S^{d-1})}<\infty.
\end{align*}
\end{enumerate}
\end{prop}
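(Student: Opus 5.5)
The plan is to use the seminorm description of the topology on $H(B_r(0))$ from Proposition \ref{prop_equivalence}: the family $\{q_s\}_{0<s<r}$ generates the topology, and it is directed, since $q_s \leq q_{s'}$ for $s<s'$. So a linear functional $T$ on $H(B_r(0))$ is continuous if and only if there are $0<s<r$ and $C>0$ with $|T(u)| \leq C q_s(u)$ for all $u$. Both directions then reduce to comparing $(t_n)$ with this single seminorm, the two being dual to each other in the sense of an $\ell^1$--$\ell^\infty$ duality.

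For (1) $\Rightarrow$ (2), fix $s$ and $C$ as above and test $T$ on $u=t_n \in \Harm_{d,n} \subset H(B_r(0))$. This gives
\begin{align*}
\norm{t_n}_{L^2(S^{d-1})}^2 = T(t_n) \leq C q_s(t_n) = C s^n \norm{t_n}_{L^2(S^{d-1})}.
\end{align*}
Hence $s^{-n}\norm{t_n}_{L^2(S^{d-1})}\leq C$ for all $n$, and we take $\rho=s<r$.

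For (2) $\Rightarrow$ (1), suppose $\norm{t_n}_{L^2(S^{d-1})}\leq C\rho^{n}$ with $\rho<r$. For $u=\sum_n u_n \in H(B_r(0))$, expanded as in Proposition \ref{prop_regular_harmonic}, define $T(u)=\sum_n (t_n,u_n)_{L^2(S^{d-1})}$. By Cauchy--Schwarz,
\begin{align*}
|T(u)| \leq \sum_n \norm{t_n}_{L^2(S^{d-1})}\norm{u_n}_{L^2(S^{d-1})} \leq C\sum_n \rho^n \norm{u_n}_{L^2(S^{d-1})} = C q_\rho(u),
\end{align*}
and $q_\rho(u)$ is finite by Proposition \ref{prop_equivalence}. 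So the series converges absolutely, $T$ is well defined and linear, and it is continuous by the seminorm criterion. By uniqueness of the expansion, for $u_n\in \Harm_{d,n}$ the harmonic expansion of $u_n$ is concentrated in degree $n$, so $T(u_n)=(t_n,u_n)_{L^2(S^{d-1})}$. More generally $T\circ \pr_n = (t_n,\bullet)_{L^2(S^{d-1})}\circ \pr_n$, which is (1).

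The only delicate point is the first step, namely that continuity of $T$ can be witnessed by a single $q_s$ with $s$ strictly less than $r$. This follows from the equivalence of $\{q_s\}$ with the seminorms $p_s$ of \eqref{eq_app_seminorm_uniform} and the monotonicity of $q_s$ in $s$. It is also why the conclusion in (2) has $\rho<r$ strictly, rather than merely $\rho \le r$.
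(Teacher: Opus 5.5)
Your proof is correct and follows the paper's argument essentially verbatim. For (2)$\Rightarrow$(1) you define $T(u)=\sum_n (t_n,u_n)_{L^2(S^{d-1})}$ and bound it by $C q_\rho(u)$ via Cauchy--Schwarz, and for (1)$\Rightarrow$(2) you use a single-seminorm bound $|T(u)|\le C q_s(u)$ tested on $u=t_n$, exactly as the paper does; your extra remarks (directedness of $\{q_s\}$ and the check that $T\circ\pr_n=(t_n,\bullet)_{L^2(S^{d-1})}\circ\pr_n$) only make explicit steps the paper leaves implicit.
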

\begin{proof}
Assume that (2) holds. Then, there is $C>0$ such that $\norm{t_n}_{L^2}\leq C \rho^n$ for any $n\geq 0$.
Let $u =(u_n)_{n\geq 0} \in H(B_r(0))$. 
Then,
\begin{align*}
\sum_{n \geq 0} |(t_n,u_n)_{L^2(S^{d-1})}| \leq \sum_{n \geq 0} \norm{t_n}_{L^2(S^{d-1})}\norm{u_n}_{L^2(S^{d-1})} \leq  \sum_{n \geq 0} C \rho^n \norm{u_n}_{L^2(S^{d-1})}=C q_{\rho}(u) <\infty.
\end{align*}
Hence, the linear map $T:H(B_r(0)) \rightarrow \R$ defined by $T(u)=\sum_{n \geq 0} (t_n,u_n)_{L^2(S^{d-1})}$ is well-defined and continuous.

Conversely, let $T\in H'(B_r(0))$. Then, there is $C>0$ and $s$ with $r>s>0$ such that
\begin{align*}
|T(u)| \leq C q_s(u)\qquad \text{ for any }u\in H(B_r(0)).
\end{align*}
Hence, $|(t_n, u_n)_{L^2(S^{d-1})}|=|T(u_n)| \leq C q_s(u_n) = C s^n \norm{u_n}_{L^2(S^{d-1})}$.
By taking $u_n=t_n$, we have $\norm{t_n} \leq C s^n$ for any $n \geq 0$.
\end{proof}
\begin{cor}\label{cor_HD_explicit}
A sequence $t=(t_n)_{n \geq 0} \in \prod_{n \geq 0} \Harm_{d,n}$ defines a distribution in $H'(\bD)$ if and only if
there is $1>\rho>0$ and $C>0$ such that $\norm{t_n}_{L^2(S^{d-1})} < C \rho^n$.
In this case, $t \in H'(\bD)_0$ if and only if $t_0=0$.
\end{cor}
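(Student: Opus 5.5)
This is the special case $r=1$ of Proposition~\ref{prop_equiv_dist}, plus a direct computation for the second assertion.

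For the first assertion, condition (2) of Proposition~\ref{prop_equiv_dist} with $r=1$ says there is $0<\rho<1$ with $\sup_{n}\rho^{-n}\norm{t_n}_{L^2(S^{d-1})}<\infty$. This is a bound $\norm{t_n}\le C'\rho^n$ with $C'$ the supremum. The statement asks for a strict inequality $\norm{t_n}<C\rho^n$, so I will take $C=C'+1$; the converse direction is immediate.

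I also need to check the identification in the statement. There, "a sequence defines a distribution in $H'(\bD)$" should mean condition (1) of the proposition: there is a $T\in H'(\bD)$ with $T\circ \pr_n=(t_n,\bullet)_{L^2}$ for all $n$. This $T$ is unique, because $\Harm_d$ is dense in $H(\bD)$. Density holds since, by Propositions~\ref{prop_regular_harmonic} and~\ref{prop_equivalence}, partial sums converge in the seminorms $q_s$. The proof of the proposition also gives the formula $T(u)=\sum_n (t_n,u_n)_{L^2(S^{d-1})}$, which I will use below.

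For the second assertion, write $1_\bD$ for the constant function $1$. Its harmonic expansion has $u_0=1$ and $u_n=0$ for $n\ge1$. Using the formula above,
\[
T(1_\bD)=(t_0,1)_{L^2(S^{d-1})}.
\]
Since $t_0\in\Harm_{d,0}$ is a constant and $d\sigma$ has total mass $1$, this equals $t_0$. Hence $T\in H'(\bD)_0$, that is $T(1_\bD)=0$, if and only if $t_0=0$.

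There is no real obstacle here. The only point needing care is that a harmonic distribution is determined by its coefficient sequence $(t_n)$, which follows from the density argument above.
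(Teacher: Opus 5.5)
Your proposal is correct and matches the paper, which states this corollary without a separate proof as the case $r=1$ of Proposition~\ref{prop_equiv_dist}, together with the definition $H'(\bD)_0=\{T\in H'(\bD)\mid T(1_\bD)=0\}$. For the second assertion you do not need the density argument: since $1_\bD\in\Harm_{d,0}$, condition (1) of Proposition~\ref{prop_equiv_dist} directly gives $T(1_\bD)=(t_0,1)_{L^2(S^{d-1})}=t_0$.
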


Let $d \geq 3$ and define a bilinear form $(-,-,)_\CFT:\Harm_d \otimes \Harm_d \rightarrow \R$ by
\begin{align}
(u,v)_{\CFT}=\sum_{n \geq 0} \frac{2n+d-2}{d-2}(u_n,v_n)_{L^2(S^{d-1})}\label{eq_inner_3}
\end{align}
for $u= \sum_{n\geq 0} u_n$ and $v= \sum_{n\geq 0} v_n$ in $\Harm_d$,
and let $H_{\CFT}$ be the Hilbert space completion of $\Harm_d$ by $(-,-)_\CFT$.
Then, the Hilbert space of the symmetric product $\hat{\mathrm{Sym}}(H_\CFT)$ is called a \emph{Fock space},
which plays a significant role in axiomatic QFT (see \cite{Arai}).
Note also that $\hat{\mathrm{Sym}}(H_\CFT)$ admits a unitary representation of $\SO^+(d,1)$ (see \cite{MLeft}).
\begin{thm}\label{thm_subspace}
Let $d \geq 3$.
Regard $H'(\bD)$ and $H_{\CFT}$ as subspaces of $\prod_{n\ge 0}\Harm_{d,n}$. 
Then,
\begin{align*}
H'(\bD) \subset H_\CFT
\quad\text{and}\quad
P(H'(\bD)) \subset \hat{\mathrm{Sym}}(H_\CFT).
\end{align*}
\end{thm}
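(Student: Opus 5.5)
The plan is to use Corollary~\ref{cor_HD_explicit} to turn membership in $H'(\bD)$ into a geometric decay bound on the components $t_n$. Comparing that bound with the weights in \eqref{eq_inner_3} gives the first inclusion, and the second follows by taking symmetric powers.

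\emph{First inclusion.} Let $t=(t_n)_{n\ge0}\in H'(\bD)$, viewed in $\prod_{n}\Harm_{d,n}$ via the restrictions $T\circ\pr_n=(t_n,\bullet)_{L^2(S^{d-1})}$. By Corollary~\ref{cor_HD_explicit} there are $1>\rho>0$ and $C>0$ with $\norm{t_n}_{L^2(S^{d-1})}<C\rho^n$. Since $H_\CFT$ is the completion of $\Harm_d=\bigoplus_n\Harm_{d,n}$ for \eqref{eq_inner_3}, and the $\Harm_{d,n}$ are mutually orthogonal, $H_\CFT$ is the weighted $\ell^2$-direct sum
\begin{align*}
H_\CFT=\Bigl\{(v_n)_n\in\prod_{n}\Harm_{d,n}\;\Big|\;\sum_{n\ge0}\tfrac{2n+d-2}{d-2}\norm{v_n}_{L^2(S^{d-1})}^2<\infty\Bigr\}.
\end{align*}
For our $t$ this sum is at most $C^2\sum_n\frac{2n+d-2}{d-2}\rho^{2n}<\infty$, so $t\in H_\CFT$. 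The embedding $H'(\bD)\to\prod_n\Harm_{d,n}$ is injective, since $\Harm_d$ is dense in $H(\bD)$ by Propositions~\ref{prop_regular_harmonic} and~\ref{prop_equivalence}. Hence this is a genuine inclusion of subspaces of $\prod_n\Harm_{d,n}$, and it is linear.

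\emph{Second inclusion.} Because $H'(\bD)\hookrightarrow H_\CFT$ is an injective linear map, the induced map $\Sym^k H'(\bD)\to\Sym^k H_\CFT$ (algebraic symmetric power) is injective. This holds for algebraic symmetric powers over a field, since an injection of vector spaces splits. The algebraic tensor product $H_\CFT^{\otimes k}$ also injects into its Hilbert completion, so the algebraic $\Sym^k H_\CFT$ sits inside $\hat{\Sym}^k H_\CFT$. Taking the direct sum over $k$ gives $P(H'(\bD))=\bigoplus_k\Sym^kH'(\bD)\subset\bigoplus_k\hat{\Sym}^kH_\CFT\subset\hat{\Sym}(H_\CFT)$.

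\emph{Main obstacle.} The only delicate point is bookkeeping: one has to make sure the identification of $H_\CFT$ with a subspace of $\prod_n\Harm_{d,n}$ is compatible with the one used for $H'(\bD)$. Both use the $L^2(S^{d-1})$-pairing on each $\Harm_{d,n}$. We do not insert the factor $\frac{2n+d-2}{d-2}$ into the identification; the embedding is simply $t\mapsto(t_n)$. With that convention the argument is just the exponential-versus-polynomial comparison above.
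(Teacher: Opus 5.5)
Your proposal is correct and is essentially the argument the paper intends; the paper states the theorem right after Corollary~\ref{cor_HD_explicit} without writing out a proof. The argument is that the geometric bound $\norm{t_n}_{L^2(S^{d-1})}<C\rho^n$ with $\rho<1$ dominates the linearly growing weights $\frac{2n+d-2}{d-2}$ in \eqref{eq_inner_3}, and the Fock-space inclusion then follows formally by passing to symmetric powers.
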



Hence, the $\CE$-algebra $P\bigl(H'(\bD)\bigr)$ embeds naturally into the Hilbert Fock space.
It is worth noting that $H_{\CFT}$ is a reproducing kernel Hilbert space, and that the point evaluations
$\{\delta_a\}_{a\in\bD}\subset H'(\bD)\subset H_{\CFT}$ span a dense subspace of $H_{\CFT}$.
In the next section, we give an explicit description of the $\Disk$-algebra structure on this dense subspace.

In dimension $d=2$, the inner product \eqref{eq_inner_3} is not defined. 
Instead, one can canonically embed the complexification of $H'(\bD)_0$ into a tensor product of Bergman spaces.
We also investigate this phenomenon in the next section.

\subsection{Algebra structure on harmonic distributions}\label{sec_algebra}

By Theorem \ref{thm_classical_disk}, $P(H'(\bD))$ inherits a natural $\Disk$-algebra structure. We denote the products by $\m^H$.
We conclude this paper by making this algebra structure explicit.

Let $d \geq 2$.
Denote the isomorphism $T_\bD:\Cc(\bD)/\Delta \Cc(\bD) \rightarrow H'(\bD)$ by $T$ for short.
Let $f \in \Cc(\bD)$ and $\phi \in \Disk(1)$. Then, for any $u \in H(\bD)$,
\begin{align*}
T_{W_\phi^\cl(f)}(u) &= \int_\bD \Om_{\phi}(\phi^{-1}(x))^{-\frac{d+2}{2}} f \circ \phi^{-1}(x)  u(x)d^dx\\
&= \int_{\bD} \Om_{\phi}(x')^{-\frac{d+2}{2}} f (x')  u(\phi(x')) \Om_\phi(x')^d d^dx'\\
&= \int_{\bD} f (x') \Om_{\phi}(x')^{\frac{d-2}{2}}u(\phi(x')) d^dx'.
\end{align*}
Therefore, we define a right action of $\phi$ on the space of harmonic functions $H(\bD)$ by
\begin{align*}
u(x)^\phi =\Om_{\phi}(x)^{\frac{d-2}{2}} u(\phi(x)).
\end{align*}
Here, by Proposition \ref{prop_Yamabe} and the assumption $u\in H(\bD)$, we have
$\Delta_{g_\std}\Om_{\phi}(x)^{\frac{d-2}{2}} u(\phi(x))= \Om_{\phi}(x)^{\frac{d+2}{2}} \Delta_{\phi^* g_\std} \phi^* u =0$,
and hence $u(x)^\phi \in H(\bD)$.
Define a left action of $\Disk(1)$ on a distribution $T \in H'(\bD)$ by $(\phi.T)(u) = T(u^\phi)$. Then, we have
\begin{align*}
T_{W_\phi^\cl(f)} =  \phi.T_{f}.
\end{align*}
\begin{thm}\label{thm_disk_algebra}
Assume $d \geq 3$.
The $\Disk$-algebra $(P(H'(\bD)),\m^H)$ satisfies the following properties:
\begin{enumerate}
\item
For any $\phi \in \Disk(1)$, $\m_\phi^H: P(H'(\bD)) \rightarrow P(H'(\bD))$ is given by the left action of $\Disk(1)$ on $H'(\bD)$. In particular, it satisfies for any $a \in \bD$,
\begin{align*}
\m_\phi^H(\delta_{a}) = \Om_\phi(a)^{\frac{d-2}{2}} \delta_{\phi(a)}.
\end{align*}
\item
For any $(\phi,\psi) \in \Disk(2)$ and $a,b \in \bD$,
\begin{align*}
\m_{\phi,\psi}^H(\delta_a,\delta_b) = \Om_\phi(a)^{\frac{d-2}{2}}\Om_\psi(b)^{\frac{d-2}{2}}\delta_{\phi(a)}\cdot \delta_{\psi(b)}
 +\frac{\Om_\phi(a)^{\frac{d-2}{2}}\Om_\psi(b)^{\frac{d-2}{2}}}{(d-2)\om_d\norm{\phi(a)-\psi(b)}^{d-2}}.
\end{align*}
\item
Let $1>r,s>0$ and $a,b \in \bD$ satisfy
\begin{align*}
B_r(a) \cap B_s(b) = \emptyset\quad \text{ and }\quad
B_r(a), B_s(b) \subset B_1(0).
\end{align*}
Then, for any $\al,\be \in \N^d$, $(B_{a,r},B_{b,s}) \in \Disk(2)$ satisfies
\begin{align}
\begin{split}
\m_{B_{a,r},B_{b,s}}^H&(\pa^\al \delta_0,\partial^\be \delta_0)\\
&=
r^{|\al|+\frac{d-2}{2}}s^{|\be|+\frac{d-2}{2}} 
\left( (\partial^\al \delta_{a})\cdot (\partial^\be \delta_{b})+\pa_x^\al \pa_y^\be \frac{(-1)^{|\al|+|\be|}}{(d-2)\om_d \norm{x-y}}\Bigl|_{x-y=a-b}\right).
\end{split}
\label{eq_delta_harmonic_explicit}
\end{align}
\end{enumerate}
\end{thm}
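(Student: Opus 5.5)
The plan is to transport Theorem \ref{thm_classical_disk} along the isomorphism $T=T_\bD$ of Theorem \ref{thm_compact_isomorphism}, extended multiplicatively to $P(T):P(S^\cl(\bD))\to P(H'(\bD))$. By definition, $\m^H_{\phi_{[n]}}=P(T)\circ\m^\cl_{\phi_{[n]}}\circ(P(T)^{-1})^{\otimes n}$, so each item reduces to computing the right-hand side of \eqref{eq_conformal_inv} on representatives.

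For (1), Theorem \ref{thm_classical_disk} gives $\m^\cl_\phi=W^\cl_\phi$. The computation just before the statement shows $T_{W^\cl_\phi(f)}=\phi.T_f$, so $\m^H_\phi$ is the left action, extended multiplicatively. For $\delta_a$ we have $(\phi.\delta_a)(u)=\delta_a(u^\phi)=\Om_\phi(a)^{\frac{d-2}{2}}u(\phi(a))$, which gives the formula.

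For (2), pick representatives $f_a,f_b\in\Cc(\bD)$ of $\delta_a,\delta_b$ as in Lemma \ref{lem_dirac}. Then \eqref{eq_conformal_inv} gives
\[
\m^\cl_{\phi,\psi}(f_a,f_b)=W_\phi(f_a)W_\psi(f_b)+\pa_{G_d}(W_\phi f_a,W_\psi f_b).
\]
The first term maps under $P(T)$ to $(\phi.\delta_a)(\psi.\delta_b)$ by (1). For the scalar term, change variables as in \eqref{eq_G_d_comp} to get
\[
\int f_a(x')f_b(y')\,\Om_\phi(x')^{\frac{d-2}{2}}\Om_\psi(y')^{\frac{d-2}{2}}\,G_d(\phi(x'),\psi(y'))\,dx'dy'.
\]
The main point is to evaluate this as a point evaluation. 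For fixed $y'$, the function $x'\mapsto \Om_\phi(x')^{\frac{d-2}{2}}G_d(\phi(x'),\psi(y'))$ is $u^\phi$ with $u=G_d(\cdot,\psi(y'))$. This $u$ is harmonic on $\phi(\bD)$ because $\psi(y')\notin\phi(\bD)$, by disjointness of the images. So $u^\phi$ is harmonic on $\bD$: the transformation rule preceding the statement is local and applies on the region where $u$ is harmonic. Lemma \ref{lem_dirac} therefore lets us replace integration against $f_a$ by evaluation at $a$. The same argument in $y'$ then gives evaluation at $b$, yielding $\Om_\phi(a)^{\frac{d-2}{2}}\Om_\psi(b)^{\frac{d-2}{2}}G_d(\phi(a),\psi(b))$. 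One caveat: $u$ is harmonic only near the support of $f_a$, not on all of $\R^d$. So one either shrinks $\ep$ in Lemma \ref{lem_dirac}, or observes that the lemma's mean value argument only needs harmonicity on $\overline{B_\ep(a)}$.

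For (3), take $\phi=B_{a,r}$, $\psi=B_{b,s}$, so $\Om_\phi=r$ and $\Om_\psi=s$ are constant. Represent $\pa^\al\delta_0$ by $\pa^\al f_\ep^0$ (Lemma \ref{lem_dirac}). Then $W_\phi(\pa^\al f^0_\ep)(x)=r^{-\frac{d+2}{2}}(\pa^\al f^0_\ep)((x-a)/r)$, and $\phi.(\pa^\al\delta_0)(u)=(-1)^{|\al|}r^{\frac{d-2}{2}}\pa^\al_x[u(rx+a)]|_{x=0}=r^{|\al|+\frac{d-2}{2}}\pa^\al\delta_a(u)$. This gives the product term. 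For the contraction term, repeat the computation of (2) with derivatives: integrating by parts moves $\pa^\al$, $\pa^\be$ onto the harmonic kernel $G_d(rx'+a,sy'+b)$. Applying the mean value property and the chain rule produces $r^{|\al|}s^{|\be|}(-1)^{|\al|+|\be|}\pa^\al_x\pa^\be_y G_d(x,y)$ evaluated at $x=a$, $y=b$. Since $G_d$ depends only on $x-y$, this matches \eqref{eq_delta_harmonic_explicit}, with the kernel read as $G_d$. The printed formula has $\norm{x-y}$ in place of $\norm{x-y}^{d-2}$, which is presumably a typo.

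The main obstacle is the justification in (2)–(3) that the contraction pairing can be evaluated pointwise. This requires harmonicity of the transported Green kernel in each variable separately near the supports, plus the sign and scaling bookkeeping for derivatives of deltas; the rest is routine transport of structure.
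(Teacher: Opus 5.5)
Your proposal is correct and follows essentially the paper's route: you transport Theorem~\ref{thm_classical_disk} along $T$, use $T_{W_\phi^\cl(f)}=\phi.T_f$ for (1), and evaluate the single contraction in (2)--(3) with the mean value property of Lemma~\ref{lem_dirac}, integrating by parts to move the derivatives onto the Green kernel in (3). The one difference is in (2). The paper first uses (1) to replace $W_\phi(f_\ep^a)$ by $\Om_\phi(a)^{\frac{d-2}{2}}f_\ep^{\phi(a)}$ and then applies the mean value property to $G_d$ directly, whereas you keep $W_\phi(f_a)$, $W_\psi(f_b)$ and use that the pulled-back kernel $u^\phi$ is harmonic on all of $\bD$. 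Your caveat about shrinking $\ep$ is therefore unnecessary. Your version also avoids the paper's unstated step that the contraction depends only on the classes of representatives supported in $\phi(\bD)$ and $\psi(\bD)$. You are right that the kernel in \eqref{eq_delta_harmonic_explicit} should read $\norm{x-y}^{d-2}$, as in the penultimate line of the paper's own computation.
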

\begin{proof}
(1) is clear. 
We choose representatives as in Lemma \ref{lem_dirac}.
Here we may assume, by taking $\ep>0$ sufficiently small, that 
$\overline{B_\ep(\phi(a))} \cap \overline{B_\ep(\psi(b))}=\emptyset.$
\begin{align*}
&\pa_{G_d}\left(T^{-1}(\rho_\phi^H(\delta_a)),T^{-1}(\rho_\psi^H(\delta_b))\right)\\
&=\Om_\phi(a)^{\frac{d-2}{2}}\Om_{\psi}(b)^{\frac{d-2}{2}}\pa_{G_d}\left(f_\ep^{\phi(a)}, f_\ep^{\psi(b)}\right)\\
&=\Om_\phi(a)^{\frac{d-2}{2}}\Om_{\psi}(b)^{\frac{d-2}{2}}
\int_{\R^d \times \R^d}\frac{1}{(d-2)\om_d \norm{x-y}^{d-2}}
f_\ep^{\phi(a)}(x) f_\ep^{\psi(b)}(y) d^dxd^dy\\
&=
\frac{\Om_\phi(a)^{\frac{d-2}{2}}\Om_{\psi}(b)^{\frac{d-2}{2}}}{(d-2)\om_d \norm{\phi(a)-\psi(b)}^{d-2}}.
\end{align*}
Hence, (2) follows from Theorem \ref{thm_classical_disk}.
(3) follows from
\begin{align*}
&(B_{a,r}. (\pa^\al \delta_0))(u(x)) = 
(\pa^\al \delta_0)\left(\Om_{B_{a,r}}(x)^{\frac{d-2}{2}}u(rx+a)\right)=(\pa^\al \delta_0)\left(r^{\frac{d-2}{2}}u(rx+a)\right)\\
&=(-1)^{|\al|} r^{\frac{d-2}{2}}\delta_0\left(\pa^\al u(rx+a)\right)=(-1)^{|\al|} r^{|\al|+\frac{d-2}{2}}(\pa^\al u)(a)
=r^{|\al|+\frac{d-2}{2}} (\partial^\al \delta_{a})(u(x))
\end{align*}
and
\begin{align*}
\pa_{G}(&r^{|\al|+\frac{d-2}{2}} \partial^\al \delta_{a},s^{|\be|+\frac{d-2}{2}} \partial^\be \delta_{b})\\
&=
r^{|\al|+\frac{d-2}{2}}s^{|\be|+\frac{d-2}{2}}
\int_{\R^d \times \R^d}\frac{1}{(d-2)\om_d \norm{x-y}^{d-2}}
(\partial^\al f_\ep^{a}(x))  (\partial^\be f_\ep^{b}(y)) d^dxd^dy\\
&=r^{|\al|+\frac{d-2}{2}}s^{|\be|+\frac{d-2}{2}}\int_{\overline{B_\ep(x_1)} \times \overline{B_\ep(x_2)}}(-1)^{|\al|+|\be|}\left(\pa_{y}^\be\pa_{x}^\al\frac{1}{(d-2)\om_d \norm{x-y}^{d-2}}\right)
f_\ep^{a}(x) f_\ep^{b}(y) d^d xd^d y\\
&=(-1)^{|\al|+|\be|}r^{|\al|+\frac{d-2}{2}}s^{|\be|+\frac{d-2}{2}}\left(\pa_{y}^\be\pa_{x}^\al\frac{1}{(d-2)\om_d \norm{x-y}^{d-2}}\right)\Bigl|_{x=a,y=b},\\
&=
(-1)^{|\al|+|\be|}r^{|\al|+\frac{d-2}{2}}s^{|\be|+\frac{d-2}{2}} \pa_x^\al \pa_y^\be \frac{1}{(d-2)\om_d \norm{x-y}}\Bigl|_{x-y=a-b}.
\end{align*}

\end{proof}

In Theorem \ref{thm_disk_algebra}, there appears the factor $\frac{1}{(d-2)\om_d}$ in the normalization of the Green function \eqref{eq_def_Green}, but this can be removed by rescaling the product (physically, a field renormalization). Namely, consider the following linear isomorphism $\cN:P(H'(\bD)) \rightarrow P(H'(\bD))$
\begin{align}
f_1 \dots f_n \mapsto 
\begin{cases}
(4\pi)^{n/2}i^n f_1\dots f_n & d =2\\
\left((d-2)\om_d\right)^{n/2} f_1\dots f_n & d \geq 3
\end{cases}
\label{eq_normalization}
\end{align}
and, via this isomorphism, we can consider the normalized $\CE$-algebra structure on $P(H'(\bD))$, which is isomorphic to the original $\CE$-algebra by $\cN$. We write this product as $\m^{NH}$.
The normalized product and the original product differ by the factor in \eqref{eq_normalization} only when the operation changes the degree.
For example, \eqref{eq_delta_harmonic_explicit} becomes as follows:
\begin{align}
\m_{T_{x_0}r^D, T_{y_0}s^D}^{NH}&(\pa^\al \delta_0,\partial^\be \delta_0)
&=
r^{|\al|+\frac{d-2}{2}}s^{|\be|+\frac{d-2}{2}} 
\left( (\partial^\al \delta_{x_0})\cdot (\partial^\be \delta_{y_0})+\pa_x^\al \pa_y^\be \frac{(-1)^{|\al|+|\be|}}{\norm{x-y}}\Bigl|_{x-y=x_0-y_0}\right).
\label{eq_delta_harmonic_explicit_normal}
\end{align}

%
%

Next, we consider the case $d=2$, where we need the correction by the harmonic cocycles (Theorem \ref{thm_classical_disk2}).
Note that the space of harmonic polynomials over $\C$ in $d=2$ is
\begin{align*}
\Harm_{2,n}\otimes_\R \C = \C z^n \oplus \C \z^n.
\end{align*}
Note that 
\begin{align*}
(z^n,z^m)_{L^2(S^1)} = \int_{S^1}\overline{z^n} z^m \frac{d\theta}{2\pi} = \delta_{n,m}
\end{align*}
and similarly $(z^n,\z^m)_{L^2(S^1)}=0$ for any $m>0$. Hence, the distribution corresponds to $z^n$ (resp. $\z^n$), denote by $T_{z^n}$ (resp. $T_{\z^n}$), satisfies
\begin{align*}
T_{z^n} =\frac{(-1)^n}{n!} \partial_z^n \delta_0,\qquad T_{\z^n} = \frac{(-1)^n}{n!}\partial_\z^n \delta_0.
\end{align*}
for any $n \geq 0$ as elements in $H'(\bD)$ (see Proposition \ref{prop_equiv_dist}).
Let $\phi \in \CEt(1)$ and $a\in \bD$.
Then,
\begin{align*}
(\phi. \pa_z^n \delta_{a})(u) &= \int_\bD (\pa_z^n \delta_{a}) u(\phi(z)) d^2z=(-1)^n \int_\bD \delta_{a} (\pa_z^n u(\phi(z))) d^2z
\end{align*}
In particular, for $n=1$, this is equal to
$- \int_{\bD} \delta_{a} \phi'(z) u'(\phi(z)) dz^2 = - \phi'(a) u'(\phi(a))$.
Hence, we have
\begin{align*}
\phi. \pa_z \delta_a = \phi'(a) \pa_z \delta_{\phi(a)}.
\end{align*}
Such an action describes the monoid action without the cocycle correction.

By Lemma \ref{lem_dirac}, for any $a\in \bD$, $\pa_z^{n+1}\delta_a$ is represented by $\pa_z^{n+1}f_\ep^a \in \Cc(\bD)$. Hence,
\begin{align*}
\pa_{H_\phi} (T^{-1}(\pa_z^{n+1}\delta_a),T^{-1}(\pa_z^{m+1}\delta_b))&=
\pa_{H_\phi} (\pa_z^{n+1}f_\ep^a,\pa_z^{m+1}f_\ep^b)\\
&= \int_{\bD \times \bD} H_\phi(z,w) \pa_z^{n+1}f_\ep^a(z) \pa_w^{m+1}f_\ep^b(w) d^2zd^2w\\
&=(-1)^{n+m} \int_{\bD \times \bD} \pa_z^{n+1} \pa_w^{m+1}H_\phi(z,w) f_\ep^a(z) f_\ep^b(w) d^2zd^2w\\
&=(-1)^{n+m} \pa_z^{n+1} \pa_w^{m+1}H_\phi(z,w)|_{z=a,w=b}.
\end{align*}
Here, since $\pa_z \pa_w H_\phi(z,w)$ is a holomorphic function on $\bD^2$, we can expand it into a power series as follows:
\begin{align*}
\frac{1}{2\pi}
\pa_z \pa_w H_\phi(z,w) &= \frac{1}{4\pi}\left(\frac{\phi'(z)\phi'(w)}{(\phi(z)-\phi(w))^2} - \frac{1}{(z-w)^2}\right)\\
&=\frac{1}{4\pi} \sum_{n,m \geq 0}A_{n,m}(\phi)z^n w^m.
\end{align*}
From this, it follows that $\frac{1}{2\pi}\pa_{H_\phi} (T^{-1}(\pa_z^{n+1}\delta_0),T^{-1}(\pa_z^{m+1}\delta_0))=\frac{(-1)^{n+m}n!m!}{4\pi} A_{n,m}(\phi)$.


For any $n,m \geq 0$ and $(\phi,\psi) \in \CEt(2)$,
\begin{align*}
G_2&(T^{-1}\phi.\pa_z^{n+1} \delta_a, T^{-1}\psi.\pa_z^{m+1} \delta_b)\\
&= \frac{(-1)^{n+m}}{-2\pi}\int_{\bD^2} \log|z-w| \pa_z^{n+1} f_\ep^a(\phi^{-1}(z))\pa_w^{m+1} f_\ep^b(\psi^{-1}(w)) |\phi'(\phi^{-1}(z))|^{-2} |\psi'(\psi^{-1}(w))|^{-2}dz^2dw^2\\
&= \frac{(-1)^{n+m}}{-2\pi}\int_{\bD^2} \log|\phi(z)-\psi(w)| \pa_z^{n+1} f_\ep^a(z)\pa_w^{m+1} f_\ep^b(w) dz^2dw^2\\
&= \frac{1}{-2\pi}\pa_z^{n+1} \pa_w^{m+1} \log|\phi(z)-\psi(w)| \Bigl|_{z=a,w=b}\\
&= \frac{\pa_z^n \pa_w^m}{-4\pi} \frac{\phi'(z)\psi'(w)}{(\phi(z)-\psi(w))^2}\Bigl|_{z=a,w=b}.
\end{align*}
Note that the factor $-4\pi$ is removed by the normalization \eqref{eq_normalization}.
Similarly, since $\pa_z \pa_{\bar{w}} H_\phi(z,w)=0$ and $\pa_z \pa_{\bar{w}} \frac{\phi'(z)\psi'(w)}{(\phi(z)-\psi(w))^2}=0$,
it follows that $\pa_{H_\phi} (T^{-1}(\pa_z^{n+1}\delta_a),T^{-1}(\pa_\z^{m+1}\delta_b))=0$
and $G(\phi.\pa_z^{n+1} \delta_a, \psi.\pa_\z^{m+1} \delta_b)=0$.
Hence, we have:
\begin{thm}\label{thm_disk_algebra2}
The normalized $\Diskt$-algebra $(P(H'(\bD)_0),\m^{NH})$ satisfies the following properties:
\begin{enumerate}
\item
$H'(\bD)_0$ consists of all the sums $\sum_{n \geq 0}a_n \frac{1}{(n+1)!}\pa_z^{n+1}\delta_0 +b_n \frac{1}{(n+1)!}\pa_\z^{n+1}\delta_0 \in \prod_{n>0}\Harm_{2,n}$ satisfy the growth condition in Corollary \ref{cor_HD_explicit}.
\item
For any $\phi \in \Diskt(1)$ and $a,b \in \bD$,
\begin{align*}
W_\phi^\cl (\pa_z \delta_a) &= \phi'(a) \pa_z \delta_{\phi(a)}\\
\pa_\phi(\pa_z \delta_a, \pa_z \delta_b)&=\frac{\phi'(a)\phi'(b)}{(\phi(a)-\phi(b))^2} - \frac{1}{(a-b)^2}\\
\pa_\phi(\pa_\z \delta_a, \pa_\z \delta_b)&=\overline{\frac{\phi'(a)\phi'(b)}{(\phi(a)-\phi(b))^2}} - {\frac{1}{(\bar{a}-\bar{b})^2}}\\
\pa_{\phi}(\pa_z^{n+1} \delta_a, \pa_\z^{m+1}\delta_b)&=0
\end{align*}
for any $n,m \geq 0$, and $\m_\phi^{NH}:P(H'(\bD)_0) \rightarrow P(H'(\bD)_0)$ is given by $W_\phi^\cl \circ \exp(\pa_\phi)$.
\item
For any $(\phi,\psi) \in \Diskt(2)$,
\begin{align*}
\m_{\phi,\psi}^{NH}(\pa_z \delta_a, \pa_z \delta_b)=
\phi'(a)\psi'(b)\pa_z \delta_{\phi(a)} \cdot \pa_z \delta_{\psi(b)} +
\frac{\phi'(a)\psi'(b)}{(\phi(a)-\psi(b))^2}.
\end{align*}
\end{enumerate}
\end{thm}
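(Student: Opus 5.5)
The statement to prove is Theorem~\ref{thm_disk_algebra2}, and the plan is to assemble it from Theorem~\ref{thm_classical_disk2}, Corollary~\ref{cor_restriction_T}, Corollary~\ref{cor_HD_explicit}, and the computations carried out just before the statement. Throughout, everything is transported along the isomorphism $T:\Cc(\bD)_0/\Delta\Cc(\bD)\to H'(\bD)_0$ and then rescaled by the normalization $\cN$ of \eqref{eq_normalization}.

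\textbf{Part (1).} By Corollary~\ref{cor_HD_explicit}, $H'(\bD)_0$ consists of the sequences $(t_n)_{n\ge 0}$ with $t_0=0$ and $\norm{t_n}_{L^2(S^1)}<C\rho^n$ for some $\rho<1$. After complexification, $\Harm_{2,n}\otimes\C=\C z^n\oplus\C\z^n$. The distribution dual to $z^n$ is $T_{z^n}=\frac{(-1)^n}{n!}\pa_z^n\delta_0$, and similarly for $\z^n$. Reindexing $n\mapsto n+1$ and absorbing signs into the coefficients $a_n,b_n$ gives the stated description. The only point to check is that this sign and factorial rescaling does not affect the growth condition, and it clearly does not, since $|a_n|$ is just $\norm{t_{n+1}}$ up to a constant.

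\textbf{Part (2).} The formula $W_\phi^\cl(\pa_z\delta_a)=\phi'(a)\pa_z\delta_{\phi(a)}$ is the $d=2$ case of $T_{W_\phi^\cl(f)}=\phi.T_f$ together with the computation of $\phi.\pa_z\delta_a$ already done. For the contraction term I would represent $\pa_z\delta_a$ by $\pa_z f^a_\ep$ using Lemma~\ref{lem_dirac}, then integrate by parts twice. This gives $\pa_{H_\phi}(\pa_z\delta_a,\pa_z\delta_b)=\pa_z\pa_wH_\phi(a,b)$. Next compute $\pa_z\pa_w H_\phi$ from \eqref{intro_harm}:
\begin{itemize}
\item the terms $\frac12\log|\phi'(z)|$ and $\frac12\log|\phi'(w)|$ each depend on only one variable, so $\pa_z\pa_w$ kills them;
\item $\pa_z\pa_w\log|\phi(z)-\phi(w)|=\frac12\frac{\phi'(z)\phi'(w)}{(\phi(z)-\phi(w))^2}$, because $\log|F|=\frac12(\log F+\log\bar F)$ and only the holomorphic half survives $\pa_z\pa_w$;
\item the same holds for the term $\log|z-w|$.
\end{itemize}
The constants $-(2\pi)^{-1}$ from Theorem~\ref{thm_classical_disk2} and the factor $\frac12$ are then combined with the normalization $(4\pi)^{n/2}i^n$ in the quadratic sector, whose factor is $4\pi\cdot i^2=-4\pi$. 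The conjugate formula follows by applying complex conjugation, since $H_\phi$ is real. The mixed term vanishes because $\pa_z\pa_{\bar w}$ annihilates both $\log|\phi(z)-\phi(w)|$ and $\log|z-w|$ away from the diagonal; these are real parts of functions holomorphic in $(z,w)$, and the $\log|\phi'|$ terms are again separated in the variables. Finally, $\m_\phi^{NH}=W_\phi^\cl\circ\exp(\pa_\phi)$ is Theorem~\ref{thm_classical_disk2}(1) conjugated by $\cN$. Here $\pa_\phi$ denotes the rescaled bilinear form $-(2\pi)^{-1}\pa_{H_\phi}$ transported by $\cN$. This step needs $\cN$ to intertwine $\exp$ of a degree $-2$ operator with the same operator rescaled by the degree-$2$ factor, which is immediate.

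\textbf{Part (3).} I would use Theorem~\ref{thm_classical_disk2}(2) with $n=2$. Since $\exp(\pa_{H_{\phi}})$ acts trivially on degree-one elements, the correction terms play no role. The product is therefore $W_\phi^\cl(\pa_z\delta_a)\,W_\psi^\cl(\pa_z\delta_b)$ plus a single $G_2$-contraction. That contraction was computed above, and after the normalization $\cN$ it becomes $\frac{\phi'(a)\psi'(b)}{(\phi(a)-\psi(b))^2}$ with $n=m=0$.

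\textbf{Main obstacle.} I expect the main difficulty to be bookkeeping of constants and signs: the $(-1)^{n+m}$ from integrating by parts, the $-(2\pi)^{-1}$, the factor $\frac12$ from $\log|\cdot|$, and the factor $i^2(4\pi)$ from $\cN$. I would fix these first on the single example $n=m=0$ and check them against the stated formulas.
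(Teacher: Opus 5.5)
Your proposal is correct and follows essentially the same route as the paper. Like the paper, you transport Theorem~\ref{thm_classical_disk2} through $T$, represent $\pa_z^{n+1}\delta_a$ by $\pa_z^{n+1}f_\ep^a$ (Lemma~\ref{lem_dirac}), integrate by parts against $H_\phi$ and $\log|\phi(z)-\psi(w)|$, use $\pa_z\pa_{\bar w}H_\phi=0$ for the mixed terms, and absorb constants with $\cN$, and your bookkeeping matches the paper's since the quadratic sector of $\cN$ contributes $-4\pi$ and $(-4\pi)\cdot(-(2\pi)^{-1})\cdot\frac{1}{2}=1$.
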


We conclude this paper by explaining the embedding of $H'(\bD)_0$ into a Hilbert space.
Recall that the Bergman space $A^2(\bD)$ is the Hilbert space of square-integrable holomorphic functions on $\bD$, with orthonormal basis $\left\{{\sqrt{n+1}}z^n\right\}_{n\geq 0}$ (see \cite[Section 1.2]{MBergman}).
Let $\overline{A^2(\bD)}$ denote the Hilbert space of square-integrable anti-holomorphic functions on $\bD$.
With respect to the expansion in Theorem \ref{thm_disk_algebra2} (1), define a linear map $H'(\bD)_0 \rightarrow  A^2(\bD)\oplus \overline{A^2(\bD)}$ by
\begin{align*}
\sum_{n \geq 0}a_n \frac{1}{(n+1)!}\pa_z^{n+1}\delta_0 +b_n \frac{1}{(n+1)!}\pa_\z^{n+1}\delta_0 &\mapsto 
\left(\sum_{n \geq 0}a_n z^n,\sum_{n \geq 0}b_n\z^{n}\right)
\end{align*}
This map is injective and has dense image. The naturality of this embedding is clarified in \cite{MBergman}.

\begin{rem}\label{rem_log}
The vector $\delta_0\in H'(\bD)$, which is excluded in this construction, corresponds to the logarithmic field in the two-dimensional massless free scalar conformal field theory (often denoted $X(z,\bar z)$ in string theory) \cite{Polc}.
This theory is non-unitary, and $X(z,\bar z)$ does not admit a global conformal symmetry (i.e.\ it is not quasi-primary); from the perspective of this paper, this phenomenon appears in Proposition~\ref{prop_Weyl_anomaly}.
In the language of vertex operator algebras, logarithmic fields can be described using a notion of logarithmic full vertex algebras; see \cite{ACM}.
\end{rem}

%
%

In our papers \cite{MBergman} and \cite{MLeft}, we prove that, in the cases $d =2$ and $d \geq 3$ respectively, the product of $\CE$-algebra
can be lifted to its ind-Hilbert space completion if we refine the definition of the operad $\CE$. Whether such a completion can be obtained geometrically and homologically within the framework of factorization algebras is an important question.

\appendix
\section{Remarks on distributions}\label{sec_app}
In the Appendix, for the reader's convenience, we review, following \cite{Hormander,RS}, the topologies on $C^\infty(U)$ and 
the definition of distributions $\Cc(U)'$.
The contents of this section are entirely standard and contain nothing new.

For any subset $S \subset \R^d$, set 
\begin{align*}
\Cc(S) = \{f \in \Cc(\R^d) \mid \mathrm{supp}f \subset S  \}.
\end{align*}
Let $U \subset \R^d$ be an open subset.
\begin{dfn}\cite[Definition 2.1.1]{Hormander}
\label{def_distribution}
A distribution $T$ in $U$ is a linear map $T: \Cc(U) \rightarrow \R$ such that for every compact set $K \subset U$, there exist constants $C$ and $k$ such that
\begin{align*}
|T(\phi)|\leq C \sum_{|\al|\leq k} \sup |\pa^\al \phi| 
\end{align*}
for any $\phi \in C_c^\infty(K)$. 
\end{dfn}
The vector space of distributions in $U$ is denoted by $\Cc(U)'$.
\begin{dfn}\cite[Defintion 2.2.2]{Hormander}\label{def_support}
Let $T \in \Cc(U)'$. For an open subset $V \subset U$, we say that $T$ vanishes on $V$ if $T(f)=0$ holds for every $f\in \Cc(V)$. The support of $T$ is defined as the complement of the largest open subset on which $T$ vanishes.
\end{dfn}

\begin{dfn}[$C^\infty$ topology of local uniform convergence]\label{def_topology}
Let $(K_n )_{n\in \N}$ be an increasing sequence of compact subsets such that $K_n \subset K_{n+1}^\circ$ and $\cup_n K_n=U$.
The topology on $C^\infty(U)$ is defined by the family of seminorms
\begin{align*}
p_{K_n,m}(f) = \max_{|\al|\leq m} \sup_{x\in K_n}|\partial^\al f(x)|
\end{align*}
This does not depend on the particular choice of $(K_n)_n$.
\end{dfn}

Denote by $C^\infty(U)'$ the space of all continuous linear maps $T:C^\infty(U) \rightarrow \R$, the topological dual vector space.
Note that a linear map $T:C^\infty(U) \rightarrow \R$ is continuous if and only if
there is a compact subset $K \subset U$ and $C>0$, $m \geq 0$ such that
\begin{align}
|T(f)|\leq C p_{K,m}(f)\label{eq_app_K}
\end{align}
for any $f \in C^\infty(U)$.
Let $T \in C^\infty(U)'$ satisfy \eqref{eq_app_K}. Then, the restriction $T|_{\Cc(U)}:\Cc(U) \rightarrow \R$ is a distribution with $\supp(T) \subset K$.
Conversely, let $T \in \Cc(U)'$ be a distribution with compact support in $U$.
Let $\eta \in \Cc(U)$ be a function such that 
$\eta \equiv 1$ on a neighborhood of $\supp(T)$.
Define a linear map $\tilde{T}:C^\infty(U) \rightarrow \R$ by
\[
\tilde{T}(f) = \langle T,\eta f \rangle
\qquad (f \in C^\infty(U)).
\]
The definition of $\tilde{T}$ is independent of the choice of $\eta$.
Moreover, since \eqref{eq_app_K} holds with $K=\supp(\eta)$, the map $\tilde{T}$ is continuous.
Hence $\tilde{T} \in C^\infty(U)'$.
It is important to note that \eqref{eq_app_K} does not in general hold with $K=\supp(T)$.
In fact, in general, one cannot choose $\eta$ with $\supp(\eta)=\supp(T)$ (For a counterexample, see \cite[Example 2.3.2]{Hormander}).
%
\begin{prop}{\cite[Theorem 2.3.1]{Hormander}}\label{app_dist_compact}
$T \in \Cc(U)'$ is a restriction of an element in $C^\infty(U)'$ if and only if $\supp(T)$ is compact.
\end{prop}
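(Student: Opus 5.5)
The plan is to prove the two implications separately. The argument only uses Definition~\ref{def_distribution}, Definition~\ref{def_support}, and the characterization \eqref{eq_app_K} of continuous functionals on $C^\infty(U)$.

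($\Rightarrow$) Let $S \in C^\infty(U)'$ with $S|_{\Cc(U)} = T$. By \eqref{eq_app_K} there are a compact $K \subset U$, $C>0$ and $m \geq 0$ with $|S(f)| \leq C p_{K,m}(f)$ for all $f \in C^\infty(U)$. If $f \in \Cc(U\setminus K)$, then every derivative of $f$ vanishes on $K$, so $p_{K,m}(f)=0$ and hence $T(f)=S(f)=0$. Thus $T$ vanishes on the open set $U \setminus K$. By Definition~\ref{def_support}, $\supp(T) \subset K$, and being closed in $U$ and contained in a compact subset of $U$, $\supp(T)$ is compact.

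($\Leftarrow$) Assume $\supp(T)$ is compact. Choose $\eta \in \Cc(U)$ with $\eta \equiv 1$ on an open neighbourhood $W$ of $\supp(T)$, and put $\tilde T(f) = \langle T, \eta f\rangle$, exactly as in the discussion preceding the statement. There are two steps.

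\emph{Continuity.} Let $K = \supp(\eta)$, which is compact. For $f \in C^\infty(U)$ we have $\eta f \in \Cc(K)$. The distribution estimate of Definition~\ref{def_distribution} on $K$ gives constants $C,k$ with $|T(\eta f)| \leq C\sum_{|\al|\leq k}\sup|\pa^\al(\eta f)|$. By the Leibniz rule, each $\pa^\al(\eta f)$ is a finite combination of products $\pa^{\be}\eta\,\pa^{\al-\be}f$. Since the derivatives of $\eta$ up to order $k$ are bounded, this yields $|\tilde T(f)| \leq C' p_{K,k}(f)$, so $\tilde T \in C^\infty(U)'$ by \eqref{eq_app_K}.

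\emph{Extension.} For $f \in \Cc(U)$ write $f = \eta f + (1-\eta) f$. The function $(1-\eta)f$ has compact support contained in $\supp(f) \setminus W$, which is disjoint from $\supp(T)$. Hence $(1-\eta)f \in \Cc(U \setminus \supp(T))$. Since $T$ vanishes on $U\setminus\supp(T)$, we get $T((1-\eta)f)=0$, and therefore $\tilde T(f) = T(f)$.

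The only delicate point is the last step: it needs that $T$ vanishes on the whole open set $U\setminus\supp(T)$, i.e.\ that a largest open set of vanishing exists. This is the standard partition-of-unity argument implicit in Definition~\ref{def_support}. Given any $g \in \Cc(U\setminus \supp T)$, cover $\supp g$ by finitely many open sets on which $T$ vanishes, split $g$ by a subordinate partition of unity, and apply linearity. With that granted, everything else is routine estimation.
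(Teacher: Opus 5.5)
Your proof is correct and follows the same route the paper sketches (citing H\"ormander) in the discussion just before the statement: the estimate $|S(f)|\le C\,p_{K,m}(f)$ forces $\supp(T)\subset K$, and conversely $\tilde T(f)=\langle T,\eta f\rangle$ is continuous with $K=\supp(\eta)$. You additionally write out the Leibniz estimate and check that $\tilde T|_{\Cc(U)}=T$ using $T((1-\eta)f)=0$, which the paper leaves implicit.
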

From this, elements of $C^\infty(U)'$ are called \textbf{compactly supported distributions}.
%

\begin{rem}
The space of distributions $\Cc(U)'$ can also be regarded as the topological dual 
by endowing $\Cc(U)$ with a suitable locally convex topology
(see \cite[Section V]{RS}).
\end{rem}

Let $u \in \Cc(\R^d)$ and $T \in \Cc(\R^d)'$. The convolution $T*u$ is a map $\R^d \rightarrow \R$ defined by
\begin{align*}
(T*u)(x) = T(u(x-\bullet)).
\end{align*}
\begin{prop}\cite[Theorem 4.1.1]{Hormander}\label{prop_convolutions}
Let $u \in \Cc(\R^d)$ and $T \in \Cc(\R^d)'$. Then, $T*u \in C^\infty(\R^d)$ with
\begin{align*}
\supp(T*u) \subset \supp(u) + \supp(T)
\end{align*}
and for any $\al \in \N^d$, $\partial^\al (T*u) = (\partial^\al T)*u =T*(\partial^\al u)$.
\end{prop}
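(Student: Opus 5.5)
The plan is to work directly from the definition $(T*u)(x)=T(u(x-\bullet))$ and to use only the continuity estimate of Definition~\ref{def_distribution}: on a fixed compact set $K$, $|T(\phi)|\leq C\sum_{|\al|\leq k}\sup|\pa^\al\phi|$ for all $\phi\in\Cc(K)$.

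\emph{Smoothness and the formula $\pa^\al(T*u)=T*(\pa^\al u)$.} Fix $x_0\in\R^d$ and restrict $x$ to the ball $|x-x_0|\leq 1$. Then all functions $y\mapsto u(x-y)$ are supported in the single compact set $K=x_0-\supp(u)+\overline{B_1(0)}$, so one pair of constants $C,k$ controls $T$ on all of them.
\begin{itemize}
\item \emph{Continuity.} Since $u$ and all its derivatives are uniformly continuous, $\sup_y|\pa^\al_y(u(x-y)-u(x_0-y))|\to0$ as $x\to x_0$. The estimate then gives continuity of $T*u$ at $x_0$.
\item \emph{First derivatives.} Consider the difference quotient $h^{-1}\bigl(u(x_0+he_j-y)-u(x_0-y)\bigr)$. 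By Taylor's formula with remainder, it converges to $(\pa_ju)(x_0-y)$ uniformly in $y$, together with all $y$-derivatives, and its support stays in $K$ for $|h|\leq1$. Applying the estimate shows $\pa_j(T*u)$ exists and equals $T*(\pa_ju)$.
\item \emph{Induction.} Since $\pa_ju\in\Cc(\R^d)$, induction on $|\al|$ gives $T*u\in C^\infty$ and $\pa^\al(T*u)=T*(\pa^\al u)$.
\end{itemize}
The only care needed is the uniform control of the Taylor remainder, which follows from boundedness of the derivatives of $u$ up to order $k+2$.

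\emph{The formula $(\pa^\al T)*u=T*(\pa^\al u)$.} By definition $(\pa_jT)(\phi)=-T(\pa_j\phi)$. Apply this to $\phi(y)=u(x-y)$. Since $\pa_{y_j}\bigl(u(x-y)\bigr)=-(\pa_ju)(x-y)$, the two minus signs cancel, giving $(\pa_jT)*u=T*(\pa_ju)$. Iterating gives the statement for general $\al$.

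\emph{Support.} The set $\supp(u)+\supp(T)$ is closed, being the sum of a compact set and a closed set. Take $x$ outside it, and choose a neighborhood $V$ of $x$ that also avoids it. For $x'\in V$, the function $y\mapsto u(x'-y)$ has support $x'-\supp(u)$, which is disjoint from $\supp(T)$. Hence it lies in $\Cc$ of the open set where $T$ vanishes (Definition~\ref{def_support}), so $(T*u)(x')=0$. Therefore $T*u$ vanishes on a neighborhood of every point outside $\supp(u)+\supp(T)$, which gives the support inclusion.

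The main step I expect to require attention is the uniform-in-$y$ convergence of the difference quotients in all $C^k$ seminorms. Everything else is bookkeeping with the definitions.
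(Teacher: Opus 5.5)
Your proof is correct. The paper gives no proof of its own and only cites H\"ormander's Theorem 4.1.1; your argument is essentially the standard one behind that citation: differentiation under the pairing via uniform $C^k$-convergence of difference quotients supported in one fixed compact set, the sign cancellation coming from the definition of $\pa_j T$, and the vanishing of $(T*u)(x)$ whenever $x-\supp(u)$ misses $\supp(T)$.
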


\begin{dfn}\cite[Definition 4.2.2]{Hormander}
\label{def_convolution}
The convolution of $T_1,T_2 \in C^\infty(\R^d)'$ is defined to be the unique distribution $T$ such that
\begin{align*}
T*f = T _1* (T_2* f),
\end{align*}
for any $f \in \Cc(\R^d)$. 
\end{dfn}

\begin{prop}\cite[Theorem 4.2.4]{Hormander}
\label{prop_conv_distribution}
For any $T_1,T_2 \in C^\infty(\R^d)'$,
\begin{align*}
\supp(T_1*T_2) \subset \supp(T_1)+\supp(T_2)
\end{align*}
and for any $\al \in \N^d$, $\partial^\al (T_1*T_2) = (\partial^\al T_1)*T_2 =T_1*(\partial^\al T_2)$.
Moreover, $\delta_0 *T =T = T*\delta_0$ for any $T \in C^\infty(\R^d)'$.
\end{prop}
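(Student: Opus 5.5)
The plan is to reduce everything to the defining identity in Definition~\ref{def_convolution}, $(T_1*T_2)*f = T_1*(T_2*f)$ for $f \in \Cc(\R^d)$, together with Proposition~\ref{prop_convolutions}. The idea is to test against $f$ and pass properties through the two single convolutions on the right-hand side.

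Implicit in the definition is that the right-hand side really is of the form $T*f$. So first I check that $f \mapsto T_1*(T_2*f)$ makes sense and is a convolution operator. By Proposition~\ref{prop_convolutions}, $T_2*f \in C^\infty$ with support in $\supp T_2+\supp f$, which is compact, so $T_2*f \in \Cc(\R^d)$. Hence $T_1*(T_2*f)$ is a smooth function.

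The resulting map $f \mapsto T_1*(T_2*f)$ is linear. It commutes with translations, since $\tau_h(T*u)=T*\tau_h u$. It is also continuous $\Cc \to C^\infty$. By the standard characterization (H\"ormander Thm.~4.2.1), such a map is of the form $f\mapsto T*f$ for a unique $T\in\Cc(\R^d)'$, given by $T(\check f) = (T_1*(T_2*f))(0)$. I take this as the content of the Definition.

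\textbf{Identity.} Convolution of a function with $\delta_0$ gives $(\delta_0*u)(x)=u(x)$. Hence $(\delta_0*T)*f = \delta_0*(T*f)=T*f$ and $(T*\delta_0)*f=T*(\delta_0*f)=T*f$. Uniqueness in the Definition, i.e.\ a distribution is determined by its convolutions with test functions via evaluation at $0$, gives $\delta_0*T=T=T*\delta_0$.

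\textbf{Derivatives.} Using Proposition~\ref{prop_convolutions} repeatedly:
\[
\partial^\al((T_1*T_2)*f) = (T_1*T_2)*\partial^\al f = T_1*(T_2*\partial^\al f),
\]
and this equals both $T_1*((\partial^\al T_2)*f)$ and $(\partial^\al T_1)*(T_2*f)$. The left-hand side is also $(\partial^\al(T_1*T_2))*f$. Uniqueness then yields
\[
\partial^\al(T_1*T_2)=(\partial^\al T_1)*T_2=T_1*(\partial^\al T_2).
\]

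\textbf{Support.} This is the only real obstacle, because Proposition~\ref{prop_convolutions} only gives
\[
\supp((T_1*T_2)*f)\subset \supp T_1+\supp T_2+\supp f.
\]
Set $K=\supp T_1+\supp T_2$, which is compact. Take $f$ supported in a small ball $\overline{B_\ep(0)}$ and a point $x$ with $\operatorname{dist}(x,K)>\ep$. Then $((T_1*T_2)*f)(x)=0$.

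Now let $g\in\Cc(\R^d\setminus K)$. Since $\supp g$ is compact, it has positive distance from $K$, so I can choose $\ep$ smaller than that distance. I then decompose $g$ via a partition of unity into finitely many pieces, each supported in a ball $B_\ep(x_j)$ with $\operatorname{dist}(x_j,K)>\ep$.

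For each piece, write it as $\check f(\cdot - x_j)$ with $\supp f\subset \overline{B_\ep(0)}$. Then $(T_1*T_2)$ paired with that piece equals $((T_1*T_2)*f)(x_j)=0$. Summing the pieces shows that $T_1*T_2$ vanishes on $\R^d\setminus K$, so $\supp(T_1*T_2)\subset K$.

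The main care needed is in tracking the reflection $\check f$ and the translation conventions in $(T*u)(x)=T(u(x-\bullet))$. Everything else follows directly from Proposition~\ref{prop_convolutions} and uniqueness.
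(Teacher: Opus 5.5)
The paper gives no proof here beyond the citation to H\"ormander. Your argument is correct and is essentially the standard one behind that theorem: test against $f\in\Cc(\R^d)$, pass each property through the single convolutions using Proposition~\ref{prop_convolutions}, and conclude by uniqueness, since a distribution $S$ is determined by $S(g)=(S*\check g)(0)$. The partition of unity in your support step is unnecessary: for $g\in\Cc(\R^d\setminus K)$ one has $(T_1*T_2)(g)=\bigl((T_1*T_2)*\check g\bigr)(0)$, and $0\notin K-\supp g=K+\supp\check g$, which is H\"ormander's one-line argument.
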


Set
\begin{align}
\begin{split}
\Delta_x &=- (\partial_{x_1}^2+\dots+\partial_{x_d}^2),\\
\Psi(x) &= 
\begin{cases}
-(2\pi)^{-1}\log|x| & d=2,\\
\frac{1}{(d-2)\om_d\norm{x}^{d-2}}& d \geq 3,
\end{cases}
\end{split}
\label{eq_Laplace_fund}
\end{align}
where $\om_d$ is a volume of $S^{d-1}$.
Then, $\Psi(x)$ is locally integrable, and defines a distribution in $\Cc(\R^d)'$ given by $\langle\Psi,f \rangle = \int_{\R^d} \Psi(x)f(x) d^dx$
for any $f \in \Cc(\R^d)$. It is well-known that
\begin{align*}
\Delta_x \Psi(x)=\delta_0
\end{align*}
as distributions \cite[Theorem 3.3.2]{Hormander}.
It is clear that 
\begin{align}
\log|z-w| = \log|z|+ \log|1-w/z| = \log|z|+\ft \sum_{n \geq 1} \frac{1}{n}((w/z)^n+(\bar{w}/\z)^n),
\label{eq_Harm_expansion2}
\end{align}
and the sum is absolutely convergent and uniformly convergent in any compact subset $K \subset \{(z,w) \in \C^2 \mid |z|>|w|\}$.
Similarly, for $d\geq 3$, let $(C_n(t))_{n \geq 0}$ be polynomials of $t$ given as the coefficients of the Taylor expansion
\begin{align*}
\frac{1}{(1-2tz+z^2)^{-(d-2)/2}} = \sum_{n\geq 0} C_n(t)z^n,
\end{align*}
which is called the \textbf{Gegenbauer polynomials}.
Then, 
\begin{align}
\begin{split}
\frac{1}{\norm{x-y}^{d-2}} &= \norm{x}^{-(d-2)}\frac{1}{(1-2\frac{(x,y)}{\norm{x}\norm{y}}\norm{y}/\norm{x}+(\norm{y}/\norm{x})^2)^{\frac{d-2}{2}}}\\
&=\norm{x}^{-(d-2)}\sum_{n \geq 0}C_n\left(\frac{(x,y)}{\norm{x}\norm{y}}\right) (\norm{y}/\norm{x})^n\\
&=\sum_{n \geq 0}\norm{x}^{-(d-2)-2n}C_n\left(\frac{(x,y)}{\norm{x}\norm{y}}\right)\norm{y}^n \norm{x}^n,
\end{split}
\label{eq_Harm_expansion}
\end{align}
where the sum is uniformly convergent in any compact subset $K \subset \{(x,y)\in \R^{2d} \mid |x|>|y|\}$.
Moreover, let $\{Y_{n,k}(x)\}_{k=1,\dots,\dim \Harm_{d,n}}$ be an orthonormal basis of $\Harm_{d,n}$ with respect to the inner product on $L^2(S^{d-1})$ (see Section \ref{sec_expansion} and \eqref{eq_def_L2}).
Then, we have:
\begin{align*}
C_n\left(\frac{(x,y)}{\norm{x}\norm{y}}\right)\norm{y}^n \norm{x}^n = c_{d,n}\sum_{k=1}^{\dim \Harm_{d,n}}Y_{n,k}(x)Y_{n,k}(y),
\end{align*}
where $c_{d,n}=\frac{d-2}{2n+d-2}$ (see for example \cite[Theorem 1.2.6]{DX}).

\end{document}